\documentclass[pra,twocolumn,floatfix,nofootinbib,superscriptaddress,longbibliography]{revtex4-2}
\usepackage{placeins}

\usepackage{bm,graphicx,mathrsfs,amsmath,amssymb,mathtools,makecell,bbm, amsthm,dsfont,color,times,txfonts,nicefrac,framed,enumitem,tikz,physics,wrapfig,amsfonts,tcolorbox,lipsum,nicematrix}
\usepackage[utf8]{inputenc}
\usepackage{graphicx}
\usepackage{framed}
\usepackage[colorlinks=true,linkcolor=equationcolor,citecolor=teal]{hyperref}

\definecolor{equationcolor}{RGB}{222,94,100}
\definecolor{alecolor}{RGB}{198,113,190}
\definecolor{changescolor}{rgb}{0, 0, 0.7}

\usepackage{booktabs}
\usepackage{soul}
\usepackage{amssymb}
\newtheorem{theorem}{Theorem}

\usepackage{subcaption}
\usepackage{ragged2e} 
\DeclareCaptionJustification{justified}{\justifying}

\usepackage{microtype}
\microtypecontext{spacing=nonfrench}
\microtypesetup{
protrusion={true,nocompatibility},
activate={true,nocompatibility},
tracking=true,
kerning=true,
spacing={true}
}
  
\usepackage{tikz} 
\usetikzlibrary{calc, arrows, patterns, shapes, decorations, arrows.meta, fit, positioning}
\tikzset{
    auto,node distance =1 cm and 1 cm,semithick,
    var/.style = {minimum width = 0.5cm},
    intvar/.style = {circle, draw, minimum width = 0.5cm, double},
    latent/.style = {minimum width = 0.5cm},
    point/.style = {circle, draw, inner sep=0.06cm, fill, node contents={}},
    triangle/.style = {regular polygon, regular polygon sides=3, draw, inner sep=0.06cm, fill, node contents={}},
    bidir/.style={Latex-Latex,dashed},
    dir/.style={-Latex},
    el/.style = {inner sep=2pt, align=left, sloped}
}
\tikzstyle{vertex}=[circle, fill=black!10, draw=black]
\tikzstyle{edge}=[thick]
\tikzstyle{clique}=[line width=4, draw=black!70]

\graphicspath{{imgs/}}

\newtheorem{lemma}{Lemma}
\newtheorem{coro}{Corollary}

\definecolor{forestgreen}{rgb}{0.13, 0.55, 0.13}

\begin{document}
\title{Using spatiotemporal Born rule for testing macroscopic realism: some applications to the pseudo-density matrices and nonclassical temporal correlations}

\author{Naim Elias Comar} 
\email{naim.comar94@gmail.com}
\affiliation{\iip}

\author{Lucas C. Céleri}
\affiliation{\qpequi}

\author{Mia Stamatova}
\affiliation{\clarendon}

\author{Vlatko Vedral}
\affiliation{\clarendon}

\author{Aditya Varna Iyer}
\affiliation{\clarendon}

\author{Rafael Chaves}
\affiliation{\iip}
\affiliation{\ect}

\newcommand{\iip}{International Institute of Physics, Federal University of Rio Grande do Norte, 59078-970, Natal, Brazil}
\newcommand{\qpequi}{QPequi Group, Institute of Physics, Federal University of Goiás, Goiânia, Goiás, 74.690-900, Brazi}
\newcommand{\ect}{School of Science and Technology, Federal University of Rio Grande do Norte, Natal, Brazil}
\newcommand{\clarendon}{Clarendon Laboratory, University of Oxford, Parks Road, Oxford OX1 3PU, United Kingdom}

\begin{abstract}
We show that, given an evolving quantum system and the quasiprobability distribution generated by the spatiotemporal generalization of the Born rule in pseudo density-matrices (PDMs), this distribution deviates from the sequential measurements probability distribution, given by the Lüders von-Neumann distribution, if and only if the non-signaling in time (NSIT) is violated; equivalently, if and only if the macroscopic realism (MR) is violated. Furthermore, we propose a definition of temporal entanglement according to the structure of the PDMs that is analogous to the definition of spatial entanglement in density matrices, showing that temporal entanglement is necessary for the violation of temporal Bell inequalities and the violation of MR. We employ our results to study the relationship between the negativity of the PDM, temporal entanglement, violation of temporal Bell inequalities, and MR. 
\end{abstract}

\maketitle

\section{Introduction} 
Beyond the importance given to quantum spatial correlations~\cite{Horodecki_2009_entanglement,Bell_nonlocality_RevModPhys.86.419,Uola_2020} due to their fundamental and technological importance, quantum temporal correlations~\cite{A_J_Leggett_2002,maroney2014quantumvsmacrorealism,Emary_2013,Vitagliano_2023,semi-device_independent_temporal_corr_PhysRevLett.132.220201,bounding_temporal_corr_PhysRevLett.111.020403,brukner2004quantumentanglementtime} have also started to gain significant attention due to their fundamental influence in understanding quantum causal models~\cite{Allen_PhysRevX.7.031021,chaves2015information,barrett2020quantumcausalmodels,hutter2023quantifying}, indefinite causal orders~\cite{Oreshkov_2012,Oreshkov_2016,Castro_Ruiz_2018}, quantum to classical transition~\cite{MAL20162265,Calarco_1995,Conditions_violating_MR_2008_PhysRevLett.101.090403, Kofler_CG_PhysRevLett.99.180403,bose2015uncoveringnonclassicalityschrodingercoherent,Geong_2014_PhysRevLett.112.010402,bibak2025classicallimitquantummechanics,harmonic_oscilator_CG_PhysRevLett.120.210402}, emergence of geometric structure~\cite{Fullwood_Geometry_PhysRevA.111.052438}, and attempts to quantum gravity~\cite{Hardy_2007,Oreshkov_2012}, as well as relevant in practical applications such as the certification of quantum dimensionality~\cite{Spee_2020,Wolf_PhysRevLett.102.190504,Vieira_2024}, the functioning of quantum clocks -- which is also related to the inquiry of the fundamental status of time~\cite{Erker_PhysRevX.7.031022,Burdroni_PhysRevResearch.3.033051,Woods_PRXQuantum.3.010319,rankovic2015quantumclockssynchronisation,Autonomus_temporal_prob_PhysRevX.11.011046,physicalimplicationsoffundamentalperiod_PhysRevLett.124.241301}, random access codes~\cite{wiesner_10.1145/1008908.1008920,Ambainis_10.1145/581771.581773}, quantum communication~\cite{Brierley_2015,Pironio_2003,brask2017bell,zukowski2010temporalleggettgargbellinequalitiessequential}, simulation of stochastic process~\cite{Garner_2017,Elliott_2018,Elliott_2020}, and causality asymmetry certification~\cite{Causality_assimetry_PhysRevX.8.031013}. Temporal correlations are defined in terms of sequential measurements made in the same system, that is, time-like separated measurements, whereas spatial correlations involve measurements that are space-like separated. As in the case of spatial correlations, assumptions about classical hidden variable models can be tested in sequential measurements to certify the nonclassical status of the correlations, with the difference that sequential projective measurements of incompatible observables cause disturbance (or signaling) between the measurements. In this case, the most well-known realism test proposed is due to Leggett and Garg~\cite{Leggett-Garg_original}, originating the Leggett-Garg inequalities (LGIs), which test the notion of classical and realistic models called macroscopic realism (MR)~\cite{maroney2014quantumvsmacrorealism,Vitagliano_2023,Emary_2013}. Another well-known proposal for nonclassical temporal correlations is the violation of temporal Bell inequalities~\cite{brukner2004quantumentanglementtime,Fritz_2010,Paz_PhysRevLett.71.3235,ringbauer2017probing}, which propose temporally nonlocal hidden-variable models.

Parallel and related to these developments, several proposals of spatiotemporal states emerged, inspired by different motivations, including a search for a quantum conciliatory approach to relativity, which are frameworks that attempt to describe quantum states for different positions in space and time, in contrast with the usual quantum description of states, which are defined for single instants of time. These proposals involve process matrices~\cite{Oreshkov_2012,Araujo_2014,Castro_Ruiz_2018,Procopio_2015,Araujo_2015,Oreshkov_2016}, process tensors~\cite{Pollock_2018,Pollock_2018_2,Jorgensen_2019,White_2020,Milz_2019}, multiple time states~\cite{Ahanorov_1_PhysRev.134.B1410,Aharonov_2009}, superdensity operators~\cite{Cotler_2018}, consistent histories~\cite{Dowker_1995}, and quantum combs~\cite{supermaps_Chiribella_2008,Chiribella_2009,Chiribella_2013}. A framework in which temporal correlations are represented in analogy to density matrices is given by the pseudo-density matrices (PDMs) formalism, first proposed in~\cite{Kofler_2013}, maintaining a symmetric approach to temporal and spatial correlations. The formalism was first proposed with the intention of indicating quantum correlations that certify causality~\cite{Ried_2015,Kofler_2013}, and was used in addition to classifying spatiotemporal causal correlations~\cite{Causal_classification_of_spatiotemporal_cor_Song}, to indicate limits in quantum communication~\cite{Pisarczyk_2019}, to study the emergence of dynamics through temporal quantum teleportation~\cite{Marletto_temporal_teleportation}, to confirm forward or backward evolution of temporal correlations~\cite{Liu_arrow_of_time}, and to quantify temporal correlations through temporal mutual information~\cite{Wu_2025_mutualinformationintime}. Importantly, it is the realization of the quantum states over time (QSOT) formalism~\cite{Fullwood_2022}, which is the only formalism satisfying the spatiotemporal states desiderata proposed by~\cite{Horsman_2017} (see~\cite{quantumstatesovertimeisunique_PhysRevResearch.6.033144} for the proof) and developing the search started by Leifer and Spekkens in a proposal of conditional quantum states~\cite{Leifer_2013}.

Despite increasing work on temporal quantum correlations and in spatiotemporal formalisms, direct relations and certifications of nonclassical temporal correlations, given a spatiotemporal state, still have a large room to be developed, especially compared with the wide variety of methods to certify spatial quantum correlations in density matrices~\cite{Horodecki_2009_entanglement,Bell_nonlocality_RevModPhys.86.419,Modi_RevModPhys.84.1655,Bera_2018,Uola_2020}. Important developments in this direction were given in~\cite{Causal_classification_of_spatiotemporal_cor_Song,liu2025spatialincompatibilitywitnessesquantum}, which certifies the quantum spatial incompatibility with the use of PDM, and in~\cite{Ku_2018} which constructs a hierarchy between the negativity of PDM, temporal steering~\cite{temporalsteering_PhysRevA.89.032112} and MR.

With the intention to contribute to new methods of certification of temporal nonclassicality in spatiotemporal formalisms, in this paper, we present a novel connection between the structure of the PDMs and the violation of MR. As our main result, we show that, in a sequential measurement scenario, the distribution given by the spatiotemporal Born rule deviates from the distribution given by the Lüdders von Neumann rule if and only if MR is violated. The spatiotemporal Born rule, first proposed in~\cite{fullwood2025}, is a spatiotemporal analog of the Born rule of two projective measurements in a density matrix in spatially separated subsystems. In~\cite{fullwood2025}, this rule was shown to result in the spatiotemporal Margenau-Hill distribution~\cite{Margenau} for two-time steps. We use this fact to show our main result in two-time-step PDMs and we generalize the spatiotemporal Born rule for a three time step scenario, showing that we also obtain an analog distribution for this case, which can be separated in the Lüdders von Neumann distribution for three time steps plus a disturbance term; further, we use this to show our main result for three time steps. This three-time-step scenario allows us to relate these results to the standard LGI,s and we conjecture that such a construction may be possible for $m$-time steps. 

In addition to the method of certifying MR violation, we propose a temporal entanglement definition on the PDM in the same manner that spatial entanglement depends on the Hilbert space separability of the density matrices. This proposal aims to define temporal entanglement uniquely in terms of the spatiotemporal state (the PDM) structure, rather than being defined in terms of indirect witness of quantum correlations as in previous approaches~\cite{milekhin2025observablecomputableentanglementtime,Lerose_2021,Thoenniss_2025,luchnikov2024scalabletomographymanybodyquantum,2021Sonner,brukner2004quantumentanglementtime,Fritz_2010}. We prove that PDM negativity implies temporal entanglement, while PDM negativity is necessary (but not sufficient) for the violation of MR. Moreover, we investigate the violation of temporal Bell inequalities in these scenarios, demonstrating that temporal entanglement is necessary (but not sufficient) for violating temporal Bell inequalities, analogous to the hierarchy observed in the spatial case. We discuss simple examples showing that, despite the existence of a clear hierarchy between temporal entanglement, PDM negativity, and violation of MR, the violation of temporal Bell inequalities does not respect a clear hierarchy with these other forms of temporal nonclassicalities (except for temporal entanglement). This discussion helps to show the distinction between these temporal nonclassical concepts, as well as the rich distinction existing between classes of spatial nonclassical correlations, and shows the practical advantage of using the spatiotemporal Born rule to certify MR.

The manuscript is organized as follows: Section~\ref{sec_background} gives a brief presentation of previous research and definitions used in our results; in particular, in Section~\ref{entanglement_in_time_subsec}, we also propose a new definition for temporal entanglement. In Section~\ref{sec_main_results}, we present our main results, considering two-time steps and three-time steps separately. In Section~\ref{sec_types_of_temporal_nonclasscic} we discuss examples of applications of our methods and use them to contrast different types of temporal nonclassicality. Finally, in Section~\ref{sec_conclusions} we discuss the possibilities of further developments of our results and draw our conclusions.

\section{Background} \label{sec_background}

\subsection{The pseudo-density matrix (PDM)} 

Consider the density matrix of an $n$-qubit system, $\rho \in \mathcal{D}(\mathcal{H}_{n})$, the space of bounded operators acting on the Hilbert space $\mathcal{H}_{n}$ of $n$-qubits, described as the following linear sum of tensor products of Pauli operators 
\begin{equation}
    \rho = \frac{1}{2^n} \sum_{\mu_1 = 0}^3 \cdots \sum_{\mu_n =0}^3 T^{\mu_1, \cdots, \mu_n} \sigma_{\mu_1} \otimes \cdots \otimes \sigma_{\mu_n},  \nonumber
\end{equation}
with $T^{\mu_1, \cdots, \mu_n} = \langle \sigma_{\mu_1}, \cdots, \sigma_{\mu_n}\rangle := \text{Tr}[\rho (\sigma_{\mu_1} \otimes \cdots \otimes \sigma_{\mu_n})]  $, where $\sigma_{\mu_i}$ represents the Pauli matrix in the $\mu_i \in\{0,1,2,3\}$ direction ($\mu_i = 0$ indicates the identity) acting in the space of the $i^{\text{th}}$ qubit. To simplify the notation, we define $\mu = (\mu_1, \dots, \mu_n)$, allowing us to write the state as
\begin{eqnarray}
    \rho = \frac{1}{2^n} \sum_{\mu = 0}^{4^n -1} T^{\mu} \tilde{\sigma}_{\mu}, 
\end{eqnarray}
where $\{ \tilde{\sigma}_\mu \}_\mu$ is the set of all possible $4^n $ Pauli strings for the $n$-qubit system and $ T^{\mu} = \langle \tilde{\sigma}_\mu \rangle$.

The pseudo-density matrix framework allows the extension of the quantum state to a more general object describing the statistics of $n$ qubits in multiple time steps. To illustrate, the information on a transformation that takes the system from time $t_0$ to time $t_1$ is codified in the matrix $R_{01} \in \mathcal{D}( \mathcal{H}_n \otimes \mathcal{H}_n)$ defined as
\begin{align}
    R_{01} =  \frac{1}{2^{2n}} \sum_{\mu_0 =0}^{4^n -1} \sum_{\mu_1=0}^{4^n-1} T^{\mu_0, \mu_1}   \tilde{\sigma}_{\mu_0} \otimes \tilde{\sigma}_{\mu_1}, \nonumber
\end{align}
where $\tilde{\sigma}_{\mu_i}$ means that the Pauli string is considered in time $t_i$ and $T^{\mu_0, \mu_1} = \langle \tilde{\sigma}_0, \tilde{\sigma}_1 \rangle$ means the temporal correlation of the Pauli string operator $\tilde{\sigma}_0$ in time $t_0$ with the Pauli string operator $\tilde{\sigma}_1$ in time $t_1$. Therefore, the matrix $R_{01}$ is a spatiotemporal matrix that represents the correlations of the observable Pauli operators separated by space and time. 

In this simple case, the PDM $R_{01}$ can be written in terms of the initial state $\rho$ and the Choi-Jamio{\l}kowski matrix $M_{01}$ of the completely positive trace preserving (CPTP) map $\mathcal{E}$ describing the evolution of the system between the two consecutive measurements~\cite{Liu_2025}
\begin{equation}
    R_{01} = \frac{1}{2} \left\{ \rho\otimes\mathbb{I}_1 ,M_{01}\right\}, \label{R_2_steps}
\end{equation}
where $\mathbb{I}_i$ is the identity operator that acts on the Hilbert space of the system at time $t_i$.

To describe the evolution of a system measured at times $(t_0,t_1,\cdots,t_{m-1})$, the extension to a matrix $R_{01\cdots {m-1}} \in \mathcal{D} (\mathcal{H}_n^{\otimes m})$ is given by
\begin{align}
    R_{01\cdots {m-1}} =\frac{1}{2^{mn}} \sum_{\mu_0=0}^{4^n-1} \cdots \sum_{\mu_{m-1}=0}^{4^n-1} T^{\mu_0, \cdots, \mu_{m-1}}  \bigotimes_{k=0}^{m-1} \tilde{\sigma}_{\mu_k} , \nonumber
\end{align}
where $T^{\mu_0, \cdots, \mu_{m-1}} = \langle \tilde{\sigma}_0 \otimes \cdots \otimes \tilde{\sigma}_{m-1} \rangle$ is the temporal correlation between the Pauli string operators $\tilde{\sigma}_0, \cdots, \tilde{\sigma}_{m-1}$, each operator corresponding to a respective time step. Interestingly,  Ref.~\cite{Liu_2025}  also shows that $R_{01\cdots {m-1}}$ can be obtained by the recurrence relation
\begin{align}
    R_{01\cdots {m-1}} = \frac{1}{2} \{ R_{01\cdots m-2}, M_{m-2,m-1} \}, \label{R_m_steps}
\end{align}
where $M_{m-2,m-1}$ is the Choi-Jamio{\l}kowski matrix associated with the CPTP map $\mathcal{E}_{m-1}$ acting on the state from time $t_{m-2}$ to $t_{m-1}$. 


In Refs.~\cite{fitzsimons2013quantumcorrelationsimplycausation,Liu_2025}, it was shown that PDM satisfies the following properties. First, every PDM is a Hermitian operator with unit trace; however, they are not necessarily positive semi-definite.
Second, the marginals (obtained by taking a partial trace) of the PDM at some respective time steps are the same as the PDM of the remaining time steps. In other words,
\begin{equation}
    \text{Tr}_{k,\cdots,j} [R_{01\cdots {m-1}}] = R_{01\cdots {m-1}/\{k,\cdots,j\}}.
\end{equation}
In particular, when all but the Hilbert space associated with the time step $j$ are traced out, the remaining matrix is the state $\rho_j$  associated with that time step
\begin{equation}
    \text{Tr}_{0,1,\cdots, m-1 / \{j\}} [R_{01\cdots m-1}] = \rho_j. \label{partial_trace_density_matrix}
\end{equation}
Third, the temporal correlation terms of the Pauli strings can be obtained with the PDM in the same way that spatial correlation terms can be obtained with density matrices:
\begin{equation}
    \langle \tilde{\sigma}_0 , \cdots , \tilde{\sigma}_{m-1} \rangle = \text{Tr} [ \tilde{\sigma}_0 \otimes \cdots \otimes \tilde{\sigma}_{m-1}R_{0,\cdots, m-1}].
\end{equation}
Since any operator acting on an $n$-qubit Hilbert space $\mathcal{H}_n$ can be described as a linear combination of Pauli strings, it follows that for any tensor product of operators $\mathcal{O}_i$ acting on $\mathcal{H}_n^i$ at time $t_i$ (that is, each operator $\mathcal{O}_i$ acts in a single time step), we have:
\begin{eqnarray}
   & \langle \mathcal{O}_0 \otimes \mathcal{O}_1 \otimes \cdots \otimes \mathcal{O}_{m-1} \rangle \nonumber \\
   & = \text{Tr} \left[\mathcal{O}_0 \otimes \mathcal{O}_1 \otimes \cdots \otimes \mathcal{O}_{m-1} R_{01\cdots m-1}\right]. \label{average_operators_R}
\end{eqnarray}
From now on, when there is no need to specify the number of qubits in the system, we will simply write $\mathcal{H}^i$ to represent the Hilbert space relative to the instant of time $t_i$.

In Ref.~\cite{fitzsimons2013quantumcorrelationsimplycausation}, analogous to entanglement monotones, the authors defined the so-called \textit{quantum causality monotone} $f(R)$, expressed as a function of PDMs~\cite{Wei_2003,Vidal_2000} 
\begin{equation}
    f(R) := |R| - 1, \label{negativity_def}
\end{equation}
where $|R| : = \sqrt{R^\dagger R}$. This quantity measures the degree of negativity of $R$, being zero exactly for positive semi-definite $R$. Here, we will simply call $f(R)$ the negativity of $R$. In Refs.~\cite{fullwood2025,Liu_2025,fitzsimons2013quantumcorrelationsimplycausation,Zhang_2020}, it is shown that this negativity measure is a witness of correlations that cannot be established for space-like separated systems, and therefore serves as a proxy for time-like correlations and causality~\cite{fitzsimons2013quantumcorrelationsimplycausation,Causal_classification_of_spatiotemporal_cor_Song}.


\subsection{Temporal entanglement} \label{entanglement_in_time_subsec}

Here, we offer an alternative definition of temporal entanglement, which is always present in a PDM $R$ if $f(R)>0$ (see Theorem~\ref{theorem_time_entanglement_R}). We stress that this alternative definition has the feature of being determined by the structure of the spatiotemporal quantum state (the PDM) in the same manner as the structure of the density matrix defines spatial entanglement. This differs from previous definitions of temporal entanglement, where it was described by means of some entanglement witness, such as the violation of a temporal Bell inequality~\cite{brukner2004quantumentanglementtime,Fritz_2010}, a non-vanishing von Neumann entanglement entropy of an alternative spatiotemporal state proposal~\cite{milekhin2025observablecomputableentanglementtime}, or a non-vanishing temporal entanglement entropy computed with the use of Feynman-Vernon influence functional~\cite{Lerose_2021,Thoenniss_2025,luchnikov2024scalabletomographymanybodyquantum,2021Sonner}.

Given a scenario with $m$ time steps and the PDM in Eq.~\eqref{R_m_steps} describing the spatiotemporal state of the system, we define the state as time separable if and only if its PDM can be written as
\begin{equation}
    R_{01\cdots m-1} = \sum_k \mu_k \ket{\psi^0_k} \bra{\psi^0_k} \otimes \ket{\psi^1_k} \bra{\psi^1_k} \otimes \cdots \otimes \ket{\psi_k ^{m}} \bra{\psi_k ^{m}}, \label{time_separable_def}
\end{equation}
where $\mu_k \in \mathbb{R}$ such that $\sum_k \mu_k = 1$ and $\ket{\psi^i_k}$ belong to the Hilbert space $\mathcal{H}^i$ with respect to time $t_i$. Therefore, a spatiotemporal state is time entangled if its PDM cannot be decomposed as in Eq.~(\ref{time_separable_def}). 
An important difference between the spatial definition (the usual definition) of entanglement and time separability is that, in spatial separability, one considers a convex sum of states, while in time separability, the coefficients $\mu_k$ are not necessarily positive. In Theorem~\ref{theorem_time_entanglement_R}, we prove that any PDM with negativity cannot be written as Eq.~\eqref{time_separable_def}. Therefore, any separable matrix $R_{01\cdots m-1}$ satisfies $R_{01\cdots m-1} \succeq 0$. This, in turn, implies that $\mu_k \geq 0,~ \forall k$ and therefore the definition of time separable states of Eq.~\eqref{time_separable_def} in practice always refers to a convex sum of time product states.

\subsection{Macroscopic realism, Leggett-Garg inequalities, and no-signaling in time} \label{MR_LGIs_NSIT_sec}

The concept of macroscopic realism (MR) arises from the attempt to define the circumstances in which probability distributions describe the reality of macroscopic objects. Therefore, it is a set of conditions  useful for distinguishing quantum from classical behavior~\cite{Vitagliano_2023,Emary_2013}:
\begin{itemize}
    \item \textit{Macroscopic realism per se} (MRPS): The value of a macroscopic quantity (observable) $Q(t)$ is well defined at each time $t$.
    \item \textit{Noninvasive measurability} (NIM): it is possible, in principle, to measure the quantity $Q(t)$ with an arbitrarily small perturbation of its subsequent dynamics.
\end{itemize}
For simplicity, we refer to the joint assumption of the two conditions above as MR.

Leggett and Garg first proposed this concept~\cite{Leggett-Garg_original}, along with an inequality that the statistics of an experiment should respect to satisfy MR. Inequalities of this type are today called Leggett-Garg inequalities (LGIs)~\cite{Emary_2013,Vitagliano_2023,A_J_Leggett_2002,maroney2014quantumvsmacrorealism}. An additional assumption commonly made is the following: 
\begin{itemize}
    \item \textit{Induction} (IND): The properties of the probability distributions are determined independently of future conditions.
\end{itemize}
This condition is equivalent to the Arrow of Time (AoT) condition~\cite{A_J_Leggett_2002,Clemente_2015,Kofler_2013,Vitagliano_2023}, which is always satisfied by the statistics obtained from quantum states evolving under CPTP maps. This condition demands that the marginalization of the joint probability distributions of outcomes at different time points, with respect to a future time step, be independent of the future time step. For example, if $P_{01}(Q_0,Q_1)$ is the probability distribution of the outcomes of an observable $Q$ at times $t_0$ and $t_1$, then AoT demands $P_{01}(Q_0)=\sum_{Q_1} P_{01}(Q_0, Q_1) = P_0(Q_0)$. Notice that we write $P_{01}(Q_0)$ and $P_0(Q_0)$ with different indices because they represent, in principle, different probability distributions obtained through different experimental procedures. 

Although LGIs are necessary conditions for MR, they are not sufficient~\cite{Clemente2016}, as another condition is required to define the set of probability distributions that satisfy MR. This condition is the so-called No-Signaling in Time (NSIT)~\cite{Kofler_2013}, and requires that the probability of outcome $P_j(Q_j)$ of an outcome $Q_j$ for the observable $Q$ measured at time $t_j$ do not depend on whether a measurement was performed on the system at a previous time $t_i < t_j$. For example, if $P_{01}(Q_0,Q_1)$ is the probability distribution of the outcomes of an observable $Q$ at times $t_0$ and $t_1$, then NSIT demands $P_{01}( Q_1)=\sum_{Q_0} P_{01}(Q_0, Q_1) = P_1(Q_1)$. In Ref.~\cite{Clemente_2015}, it was shown that for three-time steps $(t_0,t_1,t_2)$, the NSIT together with the AoT are necessary and sufficient to satisfy MR. In other words, we have
\begin{theorem} [Adapted from Ref.~\cite{Clemente_2015}] \label{Theorem_Clemente}
    \begin{eqnarray}
        \text{NSIT}_{012} \land \text{AoT}_{012} \Leftrightarrow \text{MR}_{012}, \nonumber
    \end{eqnarray}
where the subindex $012$ explicitly refers to the three-time steps scenario, $(t_0,t_1,t_2)$. Note that it is also valid for the two time steps scenario $(t_0, t_1)$.
\end{theorem}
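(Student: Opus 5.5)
To prove the theorem, I will fix the standard operational meaning of $\text{MR}_{012}$. The family of observed distributions $\{P_0,P_1,P_2,P_{01},P_{02},P_{12},P_{012}\}$, one for each subset of measured times, satisfies MR iff there is a single joint distribution $P(Q_0,Q_1,Q_2)$ whose marginals reproduce every one of them. This formalizes MRPS as well-defined values at all times, and NIM as the requirement that measuring does not alter the statistics of the remaining times. Taking this as the definition, I will prove the two implications separately, following the structure of Ref.~\cite{Clemente_2015}.

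\emph{(MR $\Rightarrow$ NSIT $\land$ AoT).} Suppose a global $P(Q_0,Q_1,Q_2)$ exists with every experimental distribution equal to the corresponding marginal. Then AoT holds: for example,
\begin{equation*}
\sum_{Q_2}P_{012}(Q_0,Q_1,Q_2)=\sum_{Q_2}P(Q_0,Q_1,Q_2)=P_{01}(Q_0,Q_1),
\end{equation*}
and similarly for every marginalization over a later time. NSIT holds by the same argument, for example $\sum_{Q_0}P_{01}=P_1$ and $\sum_{Q_1}P_{012}=P_{02}$, because every such marginalization is again a marginal of the same $P$. This direction is pure bookkeeping.

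\emph{(NSIT $\land$ AoT $\Rightarrow$ MR).} I will take the candidate global distribution to be $P:=P_{012}$, the distribution measured when all three times are probed. It then suffices to show that every other experimental distribution equals the corresponding marginal of $P_{012}$. For a subset $S\subsetneq\{0,1,2\}$, let $j$ be the latest time not in $S$.
\begin{itemize}
\item If $j$ lies after every element of $S$, marginalizing over $Q_j$ is controlled by AoT, which reduces $P_{S\cup\{j\}}$ to $P_S$.
\item If $j$ is earlier than some element of $S$, NSIT states that omitting the measurement at $t_j$ does not change the statistics of the later times, which again gives $P_S$.
\end{itemize}
Iterating this over the missing times, I will check each case explicitly: $P_{01},P_{02},P_{12}$, then $P_0,P_1,P_2$. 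Each reduces to the corresponding marginal of $P_{012}$.

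The main obstacle is fixing the precise content of $\text{NSIT}_{012}$. It must include not only the single-time condition $\sum_{Q_0}P_{012}=P_{12}$ but also conditions in which an intermediate measurement is dropped, such as $\sum_{Q_1}P_{012}=P_{02}$, together with their two-time versions. Without these, the reduction above fails for $P_{02}$. I will therefore state $\text{NSIT}_{012}$ as the full set of no-signaling-in-time equalities for all earlier measurement insertions, as in Ref.~\cite{Clemente_2015}. The two-time case is then the special instance $P:=P_{01}$, with $\text{NSIT}_{01}$ supplying $P_1$ and $\text{AoT}_{01}$ supplying $P_0$.
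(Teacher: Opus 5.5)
Your proof is correct and is essentially the argument of Ref.~\cite{Clemente_2015}, which the paper cites without reproducing: take $P_{012}$ as the global distribution, reduce each $P_S$ by marginalizing the latest unmeasured time, and note that the converse is pure bookkeeping. Because of that choice of $j$, the only relations you actually invoke are $\sum_{Q_0}P_{012}=P_{12}$, $\sum_{Q_1}P_{012}=P_{02}$, $\sum_{Q_1}P_{12}=P_2$ together with $\sum_{Q_2}P_{012}=P_{01}$, $\sum_{Q_2}P_{02}=P_0$, $\sum_{Q_2}P_{12}=P_1$; this is exactly the minimal NSIT/AoT list in the paper's appendix, so your caveat about needing the full set of NSIT equalities is harmless but unnecessary.
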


Importantly, central to the proofs of any MR conditions is the equivalence between the MR and the existence of a joint probability distribution in time for an observable that is independent of whether measurements exist or not. Such a joint probability distribution should necessarily recover the individual probabilities when the other outcomes are marginalized. For example, if $P_{01}(Q_0,Q_1)$ is the joint probability distribution of the outcomes of an observable $Q$ at times $t_0$ and $t_1$, then MR is equivalent to verifying $P_0(Q_0)=\sum_{Q_1}P_{01}(Q_0,Q_1)$ and $P_1(Q_1)=\sum_{Q_0}P_{01}(Q_0,Q_1)$. This simple example explicitly demonstrates the relationship between AoT and NSIT to MR in two-time steps.

\subsection{Temporal Bell nolocality} 
\label{subsec_temporal_Bell}

Here we briefly review another proposal for tests of temporal non-classicality, called temporal Bell nonlocality, which is often used, and present aspects that differentiate it from the violation of macro realism; our discussion of such differences will be made in Section~\ref{sec_types_of_temporal_nonclasscic}. Temporal Bell nonlocality is certified by the violation of the so-called temporal Bell inequalities. The simplest scenario of the temporal Bell inequality is given in the following example: Suppose that Alice and Bob are experimentalists who make a sequential measurement on the same system. Alice makes her measurement at time $t_0$, and Bob makes his measurement at time $t_1$. In each round of the experiment, Alice can choose between two dichotomic observables to make a measurement, $A_1$ and $A_2$, while Bob chooses between the dichotomic observables $B_1$ and $B_2$. If one assumes (a) \emph{Realism:} The measurement results are predetermined by latent properties of the system, independent of observation, and (b) \emph{Locality in time:} The results of the second measurement (at time $t_1$) are independent of the measurement performed at time $t_0$; then one can show that the correlation function of the observables must satisfy~\cite{brukner2004quantumentanglementtime,Fritz_2010}:
\begin{equation}
    \mathcal{B} := | \langle A_1 B_1 \rangle + \langle A_1 B_2 \rangle + \langle A_2 B_1 \rangle - \langle A_2 B_2 \rangle | \leq 2. \label{time_CHSH}
\end{equation}
This inequality, first proposed in~\cite{brukner2004quantumentanglementtime}, is an example of temporal Bell inequality and is sometimes called temporal CHSH in complete analogy to spatial CHSH~\cite{CHSH_1969PhRvL..23..880C} used to test Bell nonlocality. 

If we choose the system in consideration to be a qubit and the quantum observables $A_{1(2)} = \vec{\sigma} \cdot \vec{a}_{1(2)}$ and $B_{1(2)} = \vec{\sigma} \cdot \vec{b}_{1(2)}$, where $|\vec{a}_{1(2)}|=|\vec{b}_{1(2)}| = 1$ and $\vec{\sigma}$ is the Pauli vector, and the evolution between $t_0$ and $t_1$ to be trivial; then the quantum prediction for the inequality in~\eqref{time_CHSH} is~\cite{brukner2004quantumentanglementtime}
\begin{equation}
    \mathcal{B}_{QM} = |\vec{a}_1 \cdot (\vec{b}_1+ \vec{b}_2) + \vec{a}_2 \cdot (\vec{b}_1 - \vec{b}_2) |. \label{time_CHSH_term_quantum}
\end{equation}
The equation above is also valid for any unitary evolution, because a unitary can be embedded in the second operator choice $B_{0
(1)}$ since in the Heisenberg picture the evolution of the state is always trivial. When the choice of operators is such that 
\begin{equation}
    \vec{a}_1= \frac{1}{\sqrt{2}}(\vec{b}_1+\vec{b}_2),~~ \text{and}~~ \vec{a}_2= \frac{1}{\sqrt{2}}(\vec{b}_1-\vec{b}_2), \label{CHSH_max_violation}
\end{equation}
then the maximum violation of~\eqref{time_CHSH} is reached, with $B_{QM} = 2\sqrt{2}$~\cite{brukner2004quantumentanglementtime}. This is sometimes called the time Cirel'son bound~\cite{brukner2004quantumentanglementtime,Fritz_2010}, in analogy to the spatial case. Importantly, the equation above does not depend on the initial state of the system, given a unitary evolution; therefore, the maximum violation can be obtained for any preparation of the initial system. 

\subsection{Margenau-Hill quasiprobability distribution generated by the PDM}

In Ref.~\cite{fullwood2025}, the authors show that, for two-time steps $(t_0,t_1)$, the corresponding pseudo-density matrix $R_{01}$ generates the quasiprobability distribution
\begin{eqnarray}
    \mathbf{Q}_{01} (i,j) = \text{Tr} [R_{01} (\Pi^0_i \otimes \Pi^1_j)], \label{Q_definition_R}
\end{eqnarray}
where $\{ \Pi^0_i \}_i$ and $\{ \Pi^1_j \}_j$ are the projectors in the eigenspaces of the observables $Q^0$ and $Q^1$ at times $t_0$ and $t_1$, respectively. $\mathbf{Q}_{01}(i,j)$ is referred to as the quasi-probability generated by the spatiotemporal Born rule, and it was shown (Theorem 1 in Ref.~\cite{fullwood2025}) to be equivalent to the well-known Margenau-Hill distribution~\cite{Margenau}:
\begin{eqnarray}
    \mathbf{Q}_{01} (i,j) = \frac{1}{2}\text{Tr} [\mathcal{E}(\rho \Pi^0_i + \Pi^0_i \rho) \Pi^1_j].  \label{Margenau Hill}
\end{eqnarray} 

In the same reference, they separate the $\mathbf{Q}_{01} (i,j)$ into 2 terms (this separation was first proposed by~\cite{Johansen_MH}):
\begin{equation}
    \mathbf{Q}_{01}(i,j) = \mathbf{P}_{01}(i,j) + \mathbf{D}_{01}(i,j), \label{Q_P_D_relation}
\end{equation}
where 
\begin{eqnarray}
    \mathbf{P}_{01}(i,j) = \text{Tr} [\mathcal{E}(\Pi^0_i \rho \Pi^0_i) \Pi^1_j] ,\label{Ludders_vN}
\end{eqnarray}
comes from the Lüders-von Neumann projection postulate and is the joint probability distribution to obtain the outcomes $Q^0_i$ and $Q^1_j$ after sequential projective measurements with the channel $\mathcal{E}$ between the measurements (see Fig.~\ref{two-time_steps_fig}). Moreover,  $\mathbf{D}_{01}(i,j)$ is a measure of the disturbance caused by the projective measurement $\Pi^0_i$ on the initial state. From Proposition 1 of Ref.~\cite{fullwood2025}, one can write $\mathbf{D}_{01}(i,j)$ in the following form 
\begin{align}
    \mathbf{D}_{01}(i,j) &= \frac{1}{2} \text{Tr} [ \mathcal{E} (\rho - \rho_{\overline{i}}) \Pi^1_j],\nonumber\\ 
    &= \frac{1}{2} \left( \langle \Pi^1_j \rangle_{\mathcal{E}(\rho)} - \langle \Pi^1_j \rangle_{\mathcal{E}(\rho_{\overline{i}})} \right), \label{D_definition}
\end{align}
where $\rho_{\overline{i}} := \Pi^0_i \rho \Pi^0_i + (\mathbb{I}_0-\Pi^0_i)\rho (\mathbb{I}_0-\Pi^0_i)$.  The average $\langle X \rangle_{\mathcal{E}(\rho)}$  of the operator $X$ is computed in the evolved state $\mathcal{E}(\rho)$.

\begin{figure}
    \centering
    \includegraphics[width=1.0\linewidth]{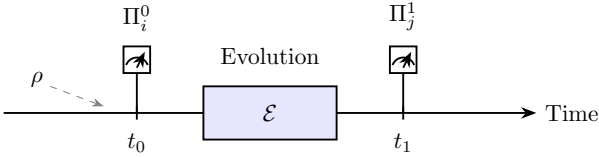}
    \caption{Two-time steps sequential measurements representing the Lüdders von-Neumann projection postulate (Eq.~\eqref{Ludders_vN}).}
    \label{two-time_steps_fig}
\end{figure}

\section{Main results} 
\label{sec_main_results}

In this section, we present our main results. First, we show that the disturbance term is zero, $\mathbf{D}_{01}(i,j)=0$, if and only if the NSIT conditions -- and consequently the MR -- are satisfied for the choice of projective measurements considered. We emphasize that $\mathbf{Q}_{01}(i,j)$, $\mathbf{D}_{01}(i,j)$, and $\mathbf{P}_{01}(i,j)$ 
are dependent on the choice of projective measurements made (namely $\Pi_0^i$ and $\Pi_1^j$). Second, we relate the negativity of the PDMs to the validity of MR and to temporal entanglement.

We begin with the simplest scenario in which measurements are performed at two different times. Next, we generalize to three-time steps, which will allow us to show a direct connection with the standard LGIs.

\subsection{Two-time steps} 
\label{two_time_steps_sec}

Consider the probability distribution $P_0(Q^0_i)$ to obtain the outcome $Q^0_i$ since the observable measured at time $t_0$ was $Q^0$. After some evolution, at time $t_1$, the observable $Q^1$ is measured, and the probability of obtaining the outcome $Q^1_j$ is  $P_1(Q^1_j)$, assuming that no measurement was performed at time $t_0$. Additionally, consider the joint probability distribution $P_{01}(Q^0_i,Q^1_j)$ (given by Eq.~\eqref{Ludders_vN} -- see Fig.~\eqref{two-time_steps_fig}) of sequentially obtaining the outcomes $Q^0_i$ and $Q^1_j$ of the observables $Q^0$ and $Q^1$ at times $t_0$ and $t_1$, respectively. The NSIT condition for this case is
\begin{equation}
    P_1(Q^1_j) = \sum_i P_{01}(Q^0_i,Q^1_j).  \nonumber
\end{equation}
Furthermore, consider the system as an $n$-qubit system initialized in the state $\rho$ at time $t_0$. Assume that it evolves under a CPTP map $\mathcal{E}$ until time $t_1$. For this case, the NSIT condition above can be translated to
\begin{equation}
    \mathbf{P}_1(j) = \sum_i \mathbf{P}_{01}(i,j),  \label{NSIT_two_point} \tag{NSIT $\overline{0}$1}
\end{equation}
where the overline notation means that we are marginalizing the featured variable. For example, $\overline{0}1$ means that we are marginalizing the variables at time $t_0$. In addition, $\mathbf{P}_1(j) = \text{Tr} [\mathcal{E}(\rho)\Pi^1_j]$ is the probability of obtaining the outcome $Q^1_j$ at time $t_1$.
From now on, we use the bold letter $\mathbf{P}$ to refer to probabilities obtained by Born's rule or the Lüdders von-Neumann rule.

In the following Theorem, we prove that the NSIT condition is satisfied if and only if the disturbance $\mathbf{D}_{01}(i,j)$ term is zero.
\begin{theorem} \label{theorem_NSIT_D}
Let an $n$-qubit system in an initial state $\rho$ at time $t_0$ evolve under a CPTP map $\mathcal{E}$ until time $t_1$. Also, let $\mathbf{Q}_{01}(i,j)$ be the quasiprobability given by Eq.~\eqref{Margenau Hill}. Then, the two time step no-signaling condition in time given by Eq.~(\ref{NSIT_two_point}) is satisfied if and only if $\mathbf{D}_{01}(i,j) = 0$ for all $i,j$. 
\end{theorem}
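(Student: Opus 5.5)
The plan is to write the disturbance in terms of the quantities in (NSIT $\overline{0}$1), using linearity of $\mathcal{E}$ and of the trace. The key identity is obtained by summing the decomposition Eq.~\eqref{Q_P_D_relation} over $i$. Since $\sum_i \Pi^0_i = \mathbb{I}_0$, Eq.~\eqref{Margenau Hill} gives
\begin{equation*}
\sum_i \mathbf{Q}_{01}(i,j) = \tfrac{1}{2}\text{Tr}[\mathcal{E}(\rho + \rho)\Pi^1_j] = \mathbf{P}_1(j).
\end{equation*}
This is just the statement that the PDM marginal at $t_1$ is $\rho_1=\mathcal{E}(\rho)$, Eq.~\eqref{partial_trace_density_matrix}. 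Hence
\begin{equation*}
\sum_i \mathbf{D}_{01}(i,j) = \mathbf{P}_1(j) - \sum_i \mathbf{P}_{01}(i,j),
\end{equation*}
so the ``NSIT-violation'' of outcome $j$ is exactly the $i$-marginal of the disturbance. The direction $\mathbf{D}_{01}\equiv 0 \Rightarrow$ (NSIT $\overline{0}$1) is then immediate.

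For the converse I would use the explicit form Eq.~\eqref{D_definition}. Write $\bar\Pi_i=\mathbb{I}_0-\Pi^0_i$. Then $\rho-\rho_{\overline{i}} = \Pi^0_i\rho\bar\Pi_i + \bar\Pi_i\rho\Pi^0_i$ is the off-diagonal (coherence) part of $\rho$ with respect to the binary partition $\{\Pi^0_i,\bar\Pi_i\}$. Introduce the real symmetric numbers
\begin{equation*}
c_{ik}(j)=\text{Re}\,\text{Tr}[\mathcal{E}(\Pi^0_i\rho\Pi^0_k)\Pi^1_j].
\end{equation*}
In terms of these, $\mathbf{D}_{01}(i,j)=\sum_{k\neq i}c_{ik}(j)$, while the NSIT deficit is $\sum_{i\neq k}c_{ik}(j)$.

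In the dichotomic case, which is the one relevant for Leggett--Garg tests and qubit observables, there are only two projectors. Then $\bar\Pi_i$ is the other projector, so $\rho_{\overline{i}}=\sum_k\Pi^0_k\rho\Pi^0_k$ is the same for both $i$, and $\mathbf{D}_{01}(i,j)$ does not depend on $i$. The identity above then gives
\begin{equation*}
\mathbf{D}_{01}(i,j)=\tfrac12\bigl(\mathbf{P}_1(j)-\sum_k\mathbf{P}_{01}(k,j)\bigr),
\end{equation*}
so NSIT forces $\mathbf{D}_{01}(i,j)=0$ for every $i,j$.

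The main obstacle is the converse for observables with more than two outcomes. There NSIT only fixes the sum $\sum_{i\neq k}c_{ik}(j)$, not each row sum $\sum_{k\ne i}c_{ik}(j)$. For instance, $c_{12}=-c_{13}\neq0$ with $c_{23}=0$ would satisfy NSIT while $\mathbf{D}_{01}(2,j)\neq0$. My plan is therefore to read the NSIT condition outcome-wise: for each $i$, NSIT is required for the coarse-grained binary measurement $\{\Pi^0_i,\bar\Pi_i\}$ at $t_0$, whose Lüders post-measurement state is exactly $\rho_{\overline{i}}$. Under that reading, Eq.~\eqref{D_definition} says directly that $\mathbf{D}_{01}(i,j)=0$ iff $\langle\Pi^1_j\rangle_{\mathcal{E}(\rho)}=\langle\Pi^1_j\rangle_{\mathcal{E}(\rho_{\overline{i}})}$, which is that binary NSIT condition. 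For dichotomic $Q^0$ this coincides with (NSIT $\overline{0}$1) by the argument above. I would first try to realize the counterexample explicitly with a qutrit (or two qubits and a four-outcome measurement) to confirm that this restriction is genuinely needed, and then state the theorem accordingly.
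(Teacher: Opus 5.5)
Your proposal is correct and more careful than the paper's own argument. The paper proves the theorem by a single chain starting from \eqref{NSIT_two_point}. In it, $\text{Tr}[\mathcal{E}(\sum_i\Pi^0_i\rho\Pi^0_i)\Pi^1_j]$ is rewritten as $\text{Tr}[\mathcal{E}(\Pi^0_i\rho\Pi^0_i+(\mathbb{I}_0-\Pi^0_i)\rho(\mathbb{I}_0-\Pi^0_i))\Pi^1_j]=\text{Tr}[\mathcal{E}(\rho_{\overline{i}})\Pi^1_j]$ (the summed index reappears free), and this is then compared with Eq.~\eqref{D_definition}. The identity $\sum_k\Pi^0_k\rho\Pi^0_k=\rho_{\overline{i}}$ holds only for a two-outcome measurement at $t_0$. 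So, in both directions, the paper's argument is exactly your dichotomic observation that $\rho_{\overline{i}}$ does not depend on $i$.

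Your route adds two things. First, the marginal identity $\sum_i\mathbf{D}_{01}(i,j)=\mathbf{P}_1(j)-\sum_i\mathbf{P}_{01}(i,j)$ gives $\mathbf{D}_{01}\equiv 0\Rightarrow$ NSIT for arbitrary projective measurements, which the paper's chain does not establish beyond the binary case. Second, the decomposition $\mathbf{D}_{01}(i,j)=\sum_{k\neq i}c_{ik}(j)$ pinpoints what NSIT leaves uncontrolled when there are more outcomes.

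Your suspected counterexample does exist, already in the $n$-qubit setting of the statement, so the restriction is genuinely needed. Take two qubits with $|u\rangle=|{+}{+}\rangle=\frac12\sum_{a=1}^{4}|e_a\rangle$ in the computational basis, and set:
$\rho=|u\rangle\langle u|$, $\mathcal{E}=\mathrm{id}$, $\Pi^0_i=|e_i\rangle\langle e_i|$, and $\Pi^1_j=|\phi_j\rangle\langle\phi_j|$ with $|\phi_j\rangle=|u\rangle-|e_j\rangle=(2|u\rangle\langle u|-\mathbb{I})|e_j\rangle$, which is an orthonormal basis.

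Then $\mathbf{P}_{01}(i,j)=1/16$ for all $i,j$ and $\mathbf{P}_1(j)=|\langle\phi_j|u\rangle|^2=1/4$, so \eqref{NSIT_two_point} holds. However, $\mathbf{Q}_{01}(i,j)=\frac14\langle e_i|\phi_j\rangle=\frac18-\frac14\delta_{ij}$, so $\mathbf{D}_{01}(j,j)=-3/16$ and $\mathbf{D}_{01}(i,j)=1/16$ for $i\neq j$.

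Since NSIT together with AoT gives MR for two times, this also shows that $\mathbf{Q}_{01}$ can differ from $\mathbf{P}_{01}$, and even be negative, while MR holds for this choice of measurements. The theorem should therefore be stated in one of two ways:
\begin{enumerate}
\item for dichotomic measurements at $t_0$, which is automatic for a single qubit and is the case in all of the paper's examples; or
\item with your outcome-wise binary NSIT condition, which is strictly stronger than \eqref{NSIT_two_point} once there are three or more outcomes.
\end{enumerate}
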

\begin{proof}
    From Eq.~\eqref{NSIT_two_point} we have
    \begin{align}
        \text{Tr}[\Pi^1_j \mathcal{E}(\rho)] & = \sum_i \text{Tr} [\mathcal{E}(\Pi^0_i \rho \Pi^0_i) \Pi^1_j] \nonumber \\
       & = \text{Tr} \left[\mathcal{E}\left(\sum_i \Pi^0_i \rho \Pi^0_i \right) \Pi^1_j \right]  \nonumber \\
       & =  \text{Tr} \left[\mathcal{E}\left(\Pi^0_i \rho \Pi^0_i + (\mathbb{I}_0-\Pi^0_i) \rho(\mathbb{I}_0-\Pi^0_i)   \right) \Pi^1_j \right] \nonumber \\
       & = \text{Tr} \left[\mathcal{E}\left( \rho_{\overline{i}}   \right) \Pi^1_j \right] \label{theorem_NSIT_D_step1}
    \end{align}
Now, note that, from Eq.~(\ref{D_definition}) we obtain
    \begin{eqnarray}
        & \mathbf{D}_{01}(i,j) = \frac{1}{2}\left( \text{Tr}[\Pi^1_j \mathcal{E}(\rho)] - \text{Tr} \left[ \mathcal{E}\left( \rho_{\overline{i}} \right) \Pi^1_j\right] \right),
    \end{eqnarray}
and therefore, $\mathbf{D}_{01}(i,j) = 0$ if and only if Eq.~\eqref{theorem_NSIT_D_step1} is satisfied.
\end{proof}
Therefore, a useful quantifier of how much NSIT is violated for a given choice of sequential measurements $\{\Pi^0_i \}_i$ and $\{\Pi^1_j \}_j$ is 
\begin{equation}
    \mathcal{N}_{01} := \sum_j |\mathbf{D}_{01}(i,j)|, \label{NSIT_quantifier}
\end{equation}
which is 0 if and only if NSIT is satisfied.

From Theorems~\ref{Theorem_Clemente} and~\ref{theorem_NSIT_D}, we can use the quantifier~\eqref{NSIT_quantifier} to monitor MR violation according to the choice of projective measurements (and consequently observables) made to generate $\mathbf{Q}_{01}(i,j)$ and consequently $\mathbf{D}_{01}(i,j)$. Observation can be performed experimentally on $\mathbf{Q}_{01}(i,j)$ and $\mathbf{D}_{01}(i,j)$ to confirm the violation, if predicted. Moreover, let $R_{01}$ be the PDM (which can be computed by Eq.~\eqref{R_2_steps}) related to an initial $n$-qubit state $\rho$ at time $t_0$, which evolves under a CPTP map $\mathcal{E}$ to time $t_1$. As we show, $f(R_{01})>0$ (i.e, the matrix $R_{01}$ is not positive semi-definite) implies that the PDM presents temporal entanglement (see Theorem~\ref{theorem_time_entanglement_R}). This implies that the PDM is separable if it satisfies MR (although the converse is not true). This last conclusion comes from the following lemma.

\begin{lemma} \label{lemma_R_positivity_MR}
If $f(R_{01})=0$, then the MR is satisfied for any choice of two sequential projective measurements at time $t_0$ and $t_1$.
\end{lemma}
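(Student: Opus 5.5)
The plan is to use the characterization of MR recalled at the end of Section~\ref{MR_LGIs_NSIT_sec}. There, for two time steps, MR is equivalent to the existence of a genuine joint probability distribution $P_{01}(Q_0,Q_1)$ whose marginals reproduce the single-time statistics: $P_0(Q_0)=\sum_{Q_1}P_{01}(Q_0,Q_1)$ and $P_1(Q_1)=\sum_{Q_0}P_{01}(Q_0,Q_1)$. The natural candidate is the quasiprobability generated by the spatiotemporal Born rule, $\mathbf{Q}_{01}(i,j)=\text{Tr}[R_{01}(\Pi^0_i\otimes\Pi^1_j)]$ of Eq.~\eqref{Q_definition_R}. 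I would show that when $f(R_{01})=0$ this object is an honest probability distribution with the correct marginals, for every choice of projective measurements.

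\emph{Step 1 (positivity).} Since $f(R_{01})=|R_{01}|-1$ and $\text{Tr}R_{01}=1$, the condition $f(R_{01})=0$ means $R_{01}\succeq 0$. Each $\Pi^0_i\otimes\Pi^1_j$ is a projector, hence positive semidefinite. The trace of a product of two positive semidefinite operators is nonnegative, so $\mathbf{Q}_{01}(i,j)\geq 0$ for all $i,j$.

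\emph{Step 2 (normalization and marginals).} I would use $\sum_i\Pi^0_i=\mathbb{I}_0$, $\sum_j\Pi^1_j=\mathbb{I}_1$, and the partial-trace property Eq.~\eqref{partial_trace_density_matrix}. Together these give
\begin{equation}
\sum_j\mathbf{Q}_{01}(i,j)=\text{Tr}[\rho\,\Pi^0_i]=\mathbf{P}_0(i), \qquad \sum_i\mathbf{Q}_{01}(i,j)=\text{Tr}[\mathcal{E}(\rho)\Pi^1_j]=\mathbf{P}_1(j). \nonumber
\end{equation}
Summing either marginal over the remaining index gives $\text{Tr}R_{01}=1$. Hence $\mathbf{Q}_{01}$ is a normalized, nonnegative joint distribution over outcome pairs that reproduces both single-time Born-rule statistics. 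These single-time statistics are exactly the ones appearing in the AoT and NSIT conditions for the unmeasured situations. Invoking the joint-distribution characterization of MR, together with Theorem~\ref{Theorem_Clemente} for two time steps, yields MR for this arbitrary choice of projectors, which proves the lemma.

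\emph{Main obstacle.} The step I expect to be delicate is whether the MR criterion used here must also be tied to the Lüders distribution $\mathbf{P}_{01}$, i.e.\ to NSIT in the form of Eq.~\eqref{NSIT_two_point}, rather than only to the existence of some joint distribution with the correct single-time marginals. Positivity of $R_{01}$ does not obviously force $\mathbf{D}_{01}=0$. I would therefore rest the argument on the joint-distribution formulation of MR stated in the background. I would first try to confirm that this is the notion intended in the lemma. If a stronger reading were needed, I would attempt to show that $R_{01}\succeq0$ together with the Margenau--Hill form in Eq.~\eqref{Margenau Hill} constrains $\mathbf{D}_{01}$, since $\mathbf{Q}_{01}$ and $\mathbf{P}_{01}$ share the marginal $\mathbf{P}_0(i)$. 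I expect this stronger claim may fail in general, so the joint-distribution route is the one I would commit to.
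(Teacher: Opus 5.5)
Your Steps 1 and 2 are correct and reproduce the paper's proof almost line by line. The paper also sets $P_{01}(Q^0_i,Q^1_j):=\text{Tr}[R_{01}\,\Pi^0_i\otimes\Pi^1_j]$ and gets nonnegativity from $R_{01}\succeq 0$. It writes this as $\bra{Q^0_i,Q^1_j}R_{01}\ket{Q^0_i,Q^1_j}$, tacitly assuming rank-one projectors, which your trace-of-positive-operators argument avoids. Normalization comes from $\text{Tr}R_{01}=1$, and the marginals from completeness and Eq.~\eqref{partial_trace_density_matrix}.

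The problem is the last inference, and your ``main obstacle'' paragraph locates it correctly. The MR of Theorem~\ref{Theorem_Clemente}, and the joint-distribution characterization in Section~\ref{MR_LGIs_NSIT_sec}, concern the \emph{observed} sequential statistics. The L\"uders distribution $\mathbf{P}_{01}$ must have the single-time marginals $\mathbf{P}_0$ and $\mathbf{P}_1$; this is NSIT, i.e.\ $\mathbf{D}_{01}=0$ by Theorem~\ref{theorem_NSIT_D}. Exhibiting some other nonnegative distribution with those marginals says nothing about this, since the product $\mathbf{P}_0(i)\mathbf{P}_1(j)$ always has them. Moreover, $\mathbf{Q}_{01}$ coincides with $\mathbf{P}_{01}$ precisely when $\mathbf{D}_{01}=0$, which positivity of $R_{01}$ does not force. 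So you cannot invoke Theorem~\ref{Theorem_Clemente} at the end.

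In fact the strong reading is false, so no repair of that step exists. Take $\rho=\ket{0}\bra{0}$ and let $\mathcal{E}$ be complete dephasing in the $\sigma_z$ basis. Its Choi matrix is $M=\sum_k \ket{k}\bra{k}\otimes\ket{k}\bra{k}$, and Eq.~\eqref{R_2_steps} gives $R_{01}=\ket{0}\bra{0}\otimes\ket{0}\bra{0}\succeq 0$, so $f(R_{01})=0$. Now measure $\sigma_x$ at $t_0$ and $\sigma_z$ at $t_1$. Then $\mathbf{P}_1(0)=1$, but $\rho_{\overline{i}}=\mathbb{I}/2$ for $\Pi^0_i=\ket{\pm}\bra{\pm}$. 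Hence $\sum_i\mathbf{P}_{01}(i,0)=1/2$ and $\mathbf{D}_{01}(+,0)=1/4\neq 0$.

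The same happens with one observable $\vec{n}\cdot\vec{\sigma}$, $\vec{n}=(\sin\theta,0,\cos\theta)$, measured at both times. Performing the $t_0$ measurement changes $\langle Q^1\rangle$ from $\cos\theta$ to $\cos^3\theta$. Thus NSIT, and with it MR in the sense of Theorem~\ref{Theorem_Clemente}, fails while $R_{01}\succeq0$.

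What your argument (and the paper's) actually proves is a measurement-independent statement. For every choice of projectors, $\text{Tr}[R_{01}\,\Pi^0_i\otimes\Pi^1_j]$ is a genuine probability distribution with the unmeasured single-time marginals. State your conclusion in that form. Note also that the consequence drawn after the lemma in the paper, $f(R_{01})=0\Rightarrow \mathbf{D}_{01}=0$, does not follow.
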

\begin{proof}
   To prove that if $f(R_{01}) = 0$ then MR is satisfied, we must show that for any set of observables $Q^0$ and $Q^1$ acting in times $t_0$ and $t_1$, respectively, we have a valid joint probability distribution $P_{01}(Q^0_i,Q^1_j)$ associated with these observables such that marginalization gives the correct probability distribution for the observable outcome at a given instant of time. In fact, if $R_{01}$ is positive semi-definite, the quasiprobability $\mathbf{Q}_{01}(i,j)$ turns out to be a probability. This can be seen as follows: for any set of projectors $\{ \Pi_i^0 \}_i$ and $\{ \Pi_j^1 \}_j$, we have
    \begin{eqnarray}
     & P_{01}(Q_i^0,Q_j^1) := \text{Tr}\left[R_{01} \Pi_i^0 \otimes \Pi_j^1  \right] \nonumber \\
       & = \bra{Q_i^0,Q_j^1} R_{01} \ket{Q_i^0,Q_j^1} \geq 0,
    \end{eqnarray}
where the inequality comes from the fact that $R_{01}$ is positive semi-definite; here the vectors $\ket{Q_i^0}$ and $\ket{Q_j^1}$ are such that $\Pi_i^0 = \ket{Q_i^0}\bra{Q_i^0}$ and $\Pi_j^1 = \ket{Q_j^1}\bra{Q_j^1}$, and $\ket{Q_i^0,Q_j^1}$ is the tensor product of these vectors. Since we have $\sum_{ij} P_{01}(Q_i^0,Q_j^1) = 1$ from the fact that $\text{Tr}[R_{01}] = 1$, we can see that $P_{01}(Q_i^0,Q_j^1)$ respect the Kolmogorov axioms. The marginalization of this joint distribution gives the correct probability distribution for each instant of time. In fact, 
    \begin{align}
        & \sum_{Q_i^0} P_{01}(Q_i^0,Q_j^1) = \sum_i \text{Tr}\left[R_{01} \Pi_i^0 \otimes \Pi_j^1  \right]
         \nonumber \\
         & = \text{Tr}\left[R_{01} \sum_i\Pi_i^0 \otimes \Pi_j^1  \right] = \text{Tr}\left[R_{01} \mathbb{I}_0\otimes \Pi_j^1  \right] \nonumber \\
         & = \text{Tr}_1 \left[ \text{Tr}_0\left[ R_{01} \right]  \Pi_j^1\right] = \text{Tr}_1 \left[ \rho_1  \Pi_j^1\right] \nonumber \\
         & = \mathbf{P}_1(j) = P_1(Q_j^1),
    \end{align}
where in the third equality, we used the completeness relation, in the fifth equality, we used the property expressed in Eq.~\eqref{partial_trace_density_matrix}, and in the sixth equality, we used Born's rule. In a completely analogous way, we can show that $\sum_{Q_j^1} P_{01}(Q_i^0,Q_j^1) = P_0 (Q_i^0)$.
\end{proof}

The above lemma can be seen as a proof of the non-disturbance character of the PDM when it is a positive semidefinite matrix. In fact, we only need to combine Lemma~\ref{lemma_R_positivity_MR} with the equivalence between the fulfillment of NSIT and MR (Theorem~\ref{Theorem_Clemente}), and conclude that $f(R_{01})=0$ implies NSIT for any choice of projective measurement; moreover, this implies the vanishing of the disturbance term in the Margenau-Hill quasiprobability (Theorem~\ref{theorem_NSIT_D}). This suggests that the search for situations where $R_{01}$ is positive definite corresponds to physical setups where MR and NSIT are always valid; see Section~\ref{sec_types_of_temporal_nonclasscic} for examples. It is important to state that Ref.~\cite{Kofler_2013} already noticed that the positivity of the PDM implies the fulfillment of LGIs, which points to the validity of MR. Here, we show this fact by relating the positivity of the PDM to the Margenau-Hill distribution and NSIT. The following theorem gives another conclusion that one can get from the PDM negativity, now regarding temporal entanglement. For generality, we show it for $m$ time steps and use the two-time steps as an example.
 
\begin{theorem} \label{theorem_time_entanglement_R}
    Given a pseudo-density matrix for $m$ time steps $R_{01\cdots m-1}$, if $f(R_{01\cdots m-1})>0$, then it presents temporal entanglement.
\end{theorem}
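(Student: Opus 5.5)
I will argue by contraposition: assuming $R_{01\cdots m-1}$ is time separable in the sense of Eq.~\eqref{time_separable_def}, I will show that $R_{01\cdots m-1}\succeq 0$, hence $|R|=R$. Then $f(R)=\text{Tr}|R|-1=\text{Tr}R-1=0$, reading the norm in Eq.~\eqref{negativity_def} as the trace norm, which is the only reading under which $f$ vanishes exactly on positive semidefinite matrices. The algebraic core is short. Each term $\ket{\psi^0_k}\bra{\psi^0_k}\otimes\cdots\otimes\ket{\psi^{m-1}_k}\bra{\psi^{m-1}_k}$ is a rank-one projector onto the product vector $\ket{\psi^0_k}\otimes\cdots\otimes\ket{\psi^{m-1}_k}$, so it is positive semidefinite. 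For any vector $\ket{v}$ in $\mathcal{H}^{\otimes m}$ we then get $\bra{v}R\ket{v}=\sum_k\mu_k\,|\langle v|\psi^0_k\cdots\psi^{m-1}_k\rangle|^2$. For the two-time-step example I will illustrate this with $R_{01}=\sum_k\mu_k\ket{\psi^0_k}\bra{\psi^0_k}\otimes\ket{\psi^1_k}\bra{\psi^1_k}$.

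The whole argument therefore rests on the signs of the $\mu_k$, and I expect this to be the main obstacle. If every $\mu_k\geq 0$, then $\bra{v}R\ket{v}\geq 0$ and we are done. However, Eq.~\eqref{time_separable_def} only demands $\mu_k\in\mathbb{R}$ with $\sum_k\mu_k=1$. With arbitrary real weights the statement cannot hold. Products of Hermitian operators span all Hermitian operators on $\mathcal{H}^{\otimes m}$, and each Hermitian operator on a single factor $\mathcal{H}^i$ is a real combination of rank-one projectors by its spectral decomposition. Consequently every unit-trace Hermitian matrix, including a negative PDM, admits a decomposition of the form~\eqref{time_separable_def} with some negative $\mu_k$. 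Concretely, for a qubit with identity evolution, $R_{01}=\frac14(\mathbb{I}\otimes\mathbb{I}+\sum_a\sigma_a\otimes\sigma_a)$ has eigenvalue $-\tfrac12$. Yet expanding each $\sigma_a=\Pi_a^+-\Pi_a^-$ writes it as a real combination of product projectors.

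My plan is therefore to make the convexity requirement explicit before running the argument. I will take time separability to mean a decomposition~\eqref{time_separable_def} with $\mu_k\geq 0$. This matches the paper's remark that the definition ``in practice always refers to a convex sum of time product states'', and it is the direct analogue of spatial separability. Under this reading the proof reduces to the positivity computation above. I will then conclude that $f(R_{01\cdots m-1})>0$ forbids any convex product decomposition, so the PDM is temporally entangled. I will also point out that the converse fails: a positive semidefinite PDM need not be a convex mixture of product states, exactly as for spatial entanglement. As an alternative, one could try to use the extra PDM structure, such as the recurrence~\eqref{R_m_steps} and the marginal property~\eqref{partial_trace_density_matrix}, to force nonnegative weights. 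The example above shows that these constraints cannot rescue the signed definition, so restricting to convex decompositions is the step the proof genuinely needs.
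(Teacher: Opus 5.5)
Your argument is correct, and so is your diagnosis of the definition. With signed weights, Eq.~\eqref{time_separable_def} holds for every unit-trace Hermitian operator: product rank-one projectors span the Hermitian operators over $\mathbb{R}$, and unit trace forces $\sum_k\mu_k=1$. So $R^{(1)}_{01}$ would be a counterexample. Once $\mu_k\ge 0$ is imposed, the claim follows from your one-line positivity argument, with $f(R)=\mathrm{Tr}|R|-1$, which is the intended reading of Eq.~\eqref{negativity_def}.

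Your route differs from the paper's. The paper keeps the signed definition and tries to \emph{derive} $\mu_k\ge 0$ in two steps.
\begin{itemize}
\item It asserts that some product vector gives $\langle\Phi_{k'}|R|\Phi_{k'}\rangle=\mu_{k'}<0$. This tacitly assumes the product vectors are mutually orthogonal, although the existence of some negative $\mu_{k'}$ does follow from $f>0$.
\item It then argues that positivity of the $t_0$ marginal, $\rho=\mathrm{Tr}_{1,\ldots,m-1}R=\sum_k\mu_k|\psi^0_k\rangle\langle\psi^0_k|\succeq 0$, forces every $\mu_k\ge 0$.
\end{itemize}
That second inference is invalid, and your decomposition of $R^{(1)}_{01}$ refutes it directly. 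Its $t_0$ marginal is $\mathbb{I}/2\succeq 0$, yet the $\sigma_x\otimes\sigma_x$ and $\sigma_y\otimes\sigma_y$ pieces contribute weights $-\tfrac14$.

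The paper's approach was aiming at the stronger conclusion that convexity comes for free. Its remark that the definition ``in practice'' refers to convex sums rests on this, and that conclusion is false. Your approach gives up that claim, builds convexity into the definition, and obtains a valid proof of a true statement. The convex reading is also what the paper actually uses later, for example $\mu_k\ge 0$ in the proof of Theorem~\ref{theorem_time_entanglement_CHSH}.
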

\begin{proof}
Suppose $f(R_{01 \cdots m-1})>0$, we prove that when $R_{01 \cdots m-1}$ is a time separable state (see Eq.~\eqref{time_separable_def}), we obtain a contradiction. In fact, if $f(R_{01 \cdots m-1})>0$ and $R_{01 \cdots m-1}$ satisfy Eq.~\eqref{time_separable_def}, this implies that there exists $\ket{\Phi_{k'}} := \ket{\psi_{k'}^0} \otimes \ket{\psi_{k'}^1} \otimes \cdots \otimes \ket{\psi_{k'}^{m-1}} \in \mathcal{H}^0 \otimes \mathcal{H}^1 \otimes \cdots \otimes \mathcal{H}^{m-1}$ such that $\bra{\Phi_{k'}} R_{01\cdots m-1} \ket{\Phi_{k'}} = \mu_{k'} < 0$. However, from the property given in Eq.~\eqref{partial_trace_density_matrix} we have that $ \rho = \text{Tr}_{1,2,\cdots,m-1}[R_{01\cdots m-1}] $ is the initial density matrix and therefore is positive semi-definite, which implies
\begin{equation}
    \rho = \text{Tr}_{1,\cdots,m-1}[R_{01\cdots m-1}] = \sum_k \mu_k \ket{\psi_k^0} \bra{\psi_k^0}  \succeq 0,
\end{equation}
hence $\mu_k \geq 0$, $\forall ~ k$, in contradiction with $\mu_{k'}<0$.
\end{proof}

From Lemma~\ref{lemma_R_positivity_MR} and Theorem~\ref{theorem_time_entanglement_R}, we can use the two-time steps PDM negativity to certify temporal entanglement or MR. Interestingly, the fact that the statistics of an experiment can be explained by a macroscopic realist model depends on the choice of sequential measurement that is made, as it is explicit from Theorem~\ref{theorem_NSIT_D} and the fact that $\mathbf{Q}_{01}(i,j)$, $\mathbf{P}_{01}(i,j)$, and $\mathbf{D}_{01}(i,j)$ are dependent on the choice of the sequential measurement. However, having $f(R_{01})=0$ shows a measurement-independent macroscopic realism, in the sense that there will always be a macroscopic realist model describing the statistics of the sequential measurements for the spatiotemporal state given by the PDM, no matter the choice of sequential measurement made. This happens because the PDM is completely defined by the initial density matrix of the system, with the corresponding CPTP map causing its evolution, and the statistics of the possible observable experiments are extracted from it depending on the choice of measurements; the condition $f(R_{01})=0$ is a very special restriction to the PDM where the respective initial preparation and evolution of the system do not allow disturbance of the measurements to have any effect on the statistics of the experiments. With this notion of measurement-independent macroscopic realism, when $f(R_{01})=0$, the study of conditions that turn the PDM into a positive semi-definite matrix can be promising for the search for the emergence of classicality during the evolution of a quantum system.

Moreover, for the case of two-time steps and $f(R_{01})=0$, Theorem~\ref{theorem_time_entanglement_R} ensures that the standard PPT criterion for the certification of spatial entanglement~\cite{PPT_Peres_PhysRevLett.77.1413} can also be applied to the temporal case. Indeed, when one certifies the vanishing of the negativity of the PDMs, they restrict the study of PDMs to the case where they are positive semi-definite. For this case, the PDMs are mathematically equivalent to density matrices, with the physical difference that different Hilbert spaces can represent different time instants. Therefore, the properties of entangled density matrices~\cite{Horodecki_2009_entanglement} are equivalent to the properties of temporal entangled positive semi-definite PDMs, including the PPT criterion.

\subsection{Three-time steps} 
\label{3_time_steps_section}

Here we present analogous results to the two-time steps case, now considering the $R_{012}$ matrix for three-time steps. To do this, we must first propose a quasiprobability distribution for three points in time analogous to Eq.~\eqref{Margenau Hill}. For this case, we will consider the initial state $\rho$ describing a system of $n$ qubits before the first measurement, the first projective measurement, at time $t_0$, is given by the projector measurements $\{ \Pi^0_i \}_i$, the second projective measurement, at time $t_1$, is given by the projectors $\{ \Pi^1_j \}_j$, and the third projective measurement, at time $t_2$, is given by the projectors $\{ \Pi_k^2 \}_k$. The CPTP map evolving the system from $t_0$ to $t_1$ is $\mathcal{E}_1$, and from $t_1$ to $t_2$ is $\mathcal{E}_2$. Our proposal of quasiprobability distribution is the following
\begin{eqnarray}
    \mathbf{Q}_{012}(i,j,k) = \frac{1}{4} \text{Tr} \left[ \mathcal{E}_2 (\{ \mathcal{E}_1 (\{ \rho, \Pi_i^0 \}) ,\Pi_j^1 \}) \Pi_k^2 \right], \label{Q_definition_3steps}
\end{eqnarray}
where it is important to stress that this distribution differs from the three-time steps Margenau-Hill quasiprobability distribution (see \cite{jia2026temporalkirkwooddiracquasiprobabilitydistribution} for a general definition), the justification for the quasiprobability status of this distribution will be given in Lemma \ref{3steps_quasiprobability_lemma}. The relation of this quasiprobability distribution to Born's rule applied to the PDM $R_{012}$ will be given by Theorem~\ref{theorem_R123_Q123}, and the interpretation of this quasiprobability will be given by Lemma~\ref{lemma_quasiprob_3steps}.

\begin{theorem} \label{theorem_R123_Q123}
    For any initial $n$-qubits state $\rho$, projective measurements $\{\Pi_i^0\}_i,\{ \Pi_j^1 \}_j,\{ \Pi_k^2 \}_k$ made at times $t_0, t_1$ and $t_2$, respectively; and quantum channels $\mathcal{E}_1$ (acting between $t_0$ and $t_1$) and $\mathcal{\mathcal{E}}_2$ (acting between $t_1$ and $t_2$), described by the pseudo-density matrix $R_{012}$, we have the following equality
    \begin{eqnarray}
        \mathbf{Q}_{012}(i,j,k) = \text{Tr} \left[ R_{012} \Pi_i^0 \otimes \Pi_j^1 \otimes \Pi_k^2 \right], \label{theorem_R123_Q123_equation}
    \end{eqnarray}
    where $\mathbf{Q}_{012}(i,j,k)$ is given by Eq.~\eqref{Q_definition_3steps}. Moreover, $R_{012}$ is the unique operator that satisfies Eq.~\eqref{theorem_R123_Q123_equation}.
\end{theorem}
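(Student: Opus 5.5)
The plan is to compute $\text{Tr}[R_{012}\,\Pi_i^0\otimes\Pi_j^1\otimes\Pi_k^2]$ directly from the recurrence of Eq.~\eqref{R_m_steps}, $R_{012}=\frac{1}{2}\{R_{01}\otimes\mathbb{I}_2, M_{12}\}$, where $M_{12}$ is the Choi--Jamio{\l}kowski matrix of $\mathcal{E}_2$ (embedded as $\mathbb{I}_0\otimes M_{12}$). Then reduce to the two-step Margenau--Hill formula, Eq.~\eqref{Margenau Hill}, already established for $R_{01}$.

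\textbf{Step 1: the one-step identity.} First I would isolate a general identity for a single link, using the convention for $M$ under which Eq.~\eqref{R_2_steps} holds. For any operator $X$ on $\mathcal{H}^{1}$ and projector $\Pi$ at time $t_1$,
\begin{equation}
\text{Tr}\left[\tfrac{1}{2}\{X\otimes\mathbb{I}_2, M_{12}\}\,(\Pi\otimes\Pi_k^2)\right]=\tfrac{1}{2}\text{Tr}\left[\mathcal{E}_2(\{X,\Pi\})\Pi_k^2\right]. \nonumber
\end{equation}
This is the same computation that underlies Theorem 1 of Ref.~\cite{fullwood2025}. Cyclicity of the trace moves $\Pi$ to either side of $X$, and the defining property of the Choi matrix, $\text{Tr}_1[M_{12}(Y\otimes\mathbb{I}_2)]=\mathcal{E}_2(Y)$ up to the transpose convention, turns each term into $\text{Tr}[\mathcal{E}_2(X\Pi)\Pi_k^2]$ or $\text{Tr}[\mathcal{E}_2(\Pi X)\Pi_k^2]$. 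The identity holds for non-positive, non-Hermitian $X$ by linearity, which is what I need next.

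\textbf{Step 2: contract the first time slot.} In $\text{Tr}[R_{012}\,\Pi_i^0\otimes\Pi_j^1\otimes\Pi_k^2]$, $M_{12}$ acts trivially on $\mathcal{H}^0$, so I can partially trace over $t_0$ with $\Pi_i^0$ inserted. Define $X_i:=\text{Tr}_0[(\Pi_i^0\otimes\mathbb{I}_1)R_{01}]$. Because $\Pi_i^0\otimes\mathbb{I}_1\otimes\mathbb{I}_2$ commutes with $\mathbb{I}_0\otimes M_{12}$, the anticommutator structure passes through the partial trace, and the quantity becomes $\text{Tr}[\frac{1}{2}\{X_i\otimes\mathbb{I}_2,M_{12}\}(\Pi_j^1\otimes\Pi_k^2)]$. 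I expect this to be the main obstacle: the anticommutator mixes the $t_0$ and $t_1$ factors of $R_{01}$ and $M_{12}$, so the bookkeeping of which operators commute must be done carefully. I would verify it on product operators $A\otimes B\otimes C$ and extend by linearity.

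\textbf{Step 3: compute $X_i$.} Using Eq.~\eqref{R_2_steps} and the same Choi identity for $\mathcal{E}_1$, I would show $X_i=\frac{1}{2}\mathcal{E}_1(\{\rho,\Pi_i^0\})$. This is the operator-valued version of Eq.~\eqref{Margenau Hill}: pairing it with $\Pi_j^1$ recovers $\mathbf{Q}_{01}$. Substituting into Step 1 gives $\frac{1}{4}\text{Tr}[\mathcal{E}_2(\{\mathcal{E}_1(\{\rho,\Pi_i^0\}),\Pi_j^1\})\Pi_k^2]$, which is Eq.~\eqref{Q_definition_3steps}.

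\textbf{Uniqueness.} If two Hermitian operators $R,R'$ both satisfy Eq.~\eqref{theorem_R123_Q123_equation} for all projective measurements, then $\text{Tr}[(R-R')\Pi^0\otimes\Pi^1\otimes\Pi^2]=0$ for all rank-one projectors. Rank-one projectors span the full operator space on each factor; for instance, Pauli operators are differences of their spectral projectors. Hence product projectors span $\mathcal{B}(\mathcal{H}^{\otimes3})$, and $R=R'$. This is also equivalent to matching all Pauli correlators $T^{\mu_0\mu_1\mu_2}$.
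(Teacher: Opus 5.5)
Your proposal is correct and follows essentially the paper's route: expand $R_{012}$ via the recurrence, use trace cyclicity and the Choi--Jamio{\l}kowski duality once per channel, and get uniqueness from non-degeneracy of the Hilbert--Schmidt pairing. Your explicit remark that product projectors span $\mathcal{B}(\mathcal{H}^{\otimes 3})$ makes the uniqueness step more complete than the paper's, and it needs no Hermiticity assumption. The only real difference is the order of contraction: the paper contracts the $t_1$ slot first, obtaining $\frac{1}{2}\mathcal{E}_2(\{R_{01},\Pi_j^1\})$ on $\mathcal{H}^0\otimes\mathcal{H}^2$, and only then substitutes $R_{01}$ and contracts $t_0$. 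You instead contract $t_0$ first via $X_i=\frac{1}{2}\mathcal{E}_1(\{\rho,\Pi_i^0\})$ and then reuse a single one-link identity.
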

The proof of this theorem is given in Appendix~\ref{proof_of_theorem_R123_Q123_sec}. This implies that, for the three-time steps case, the generalized Born rule does not result in the Margenau-Hill distribution, unlike the two-time steps case. Furthermore, with the use of Theorem \ref{theorem_R123_Q123} and from the PDM properties, it is easy to see that the distribution given by Eq. \eqref{Q_definition_3steps} results in the corresponding quasiprobabilities of the two-time step case after marginalization (for example, $\sum_k\mathbf{Q}_{012}(i,j,k) = \mathbf{Q}_{01}(i,j)$), and is normalized to 1. Therefore, it satisfies the Kolmogorov consistency condition, and we have the following lemma.
\begin{lemma} \label{3steps_quasiprobability_lemma}
    The distribution given by Eqs. \eqref{Q_definition_3steps} and \eqref{theorem_R123_Q123_equation} satisfies the Kolmogorov consistency condition and is a quasiprobability distribution.
\end{lemma}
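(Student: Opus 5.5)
We need to establish three facts about $\mathbf{Q}_{012}(i,j,k)$: it is real, it sums to one, and it is Kolmogorov consistent, meaning every marginal reproduces the lower-order distribution. The natural route is through Theorem~\ref{theorem_R123_Q123}, which writes $\mathbf{Q}_{012}(i,j,k)=\text{Tr}[R_{012}\,\Pi_i^0\otimes\Pi_j^1\otimes\Pi_k^2]$. Combined with the PDM properties listed in the background, this reduces every claim to a statement about partial traces of $R_{012}$.

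\textbf{Reality.} $R_{012}$ is Hermitian and each $\Pi_i^0\otimes\Pi_j^1\otimes\Pi_k^2$ is Hermitian. The trace of a product of two Hermitian operators is real, so $\mathbf{Q}_{012}(i,j,k)\in\mathbb{R}$.

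\textbf{Normalization.} Sum over all $i,j,k$ and use the completeness relations $\sum_i\Pi_i^0=\mathbb{I}_0$ (and likewise at $t_1,t_2$). This gives $\sum_{ijk}\mathbf{Q}_{012}(i,j,k)=\text{Tr}[R_{012}]=1$.

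\textbf{Marginals.} Summing over one index replaces the corresponding projector by the identity. For instance,
\begin{equation}
\sum_k\mathbf{Q}_{012}(i,j,k)=\text{Tr}\big[\text{Tr}_2[R_{012}]\,\Pi_i^0\otimes\Pi_j^1\big]=\text{Tr}[R_{01}\,\Pi_i^0\otimes\Pi_j^1]=\mathbf{Q}_{01}(i,j),
\end{equation}
where the middle step uses the marginal property $\text{Tr}_{k,\dots,j}[R]=R_{\dots/\{k,\dots,j\}}$.

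The same argument gives the other marginals:
\begin{itemize}
\item $\sum_i\mathbf{Q}_{012}=\text{Tr}[R_{12}\,\Pi_j^1\otimes\Pi_k^2]$, the two-time spatiotemporal Born quasiprobability for $(t_1,t_2)$.
\item $\sum_j\mathbf{Q}_{012}=\text{Tr}[R_{02}\,\Pi_i^0\otimes\Pi_k^2]$.
\item Summing two indices gives single-time Born probabilities, e.g. $\text{Tr}[\rho_2\Pi_k^2]$, by Eq.~\eqref{partial_trace_density_matrix}.
\end{itemize}
These marginals are mutually compatible, since they all arise from a single operator $R_{012}$. That compatibility is exactly the Kolmogorov consistency condition.

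\textbf{Quasiprobability status and the main obstacle.} A real, normalized, consistent assignment that is not necessarily nonnegative is what we call a quasiprobability. To show the word "quasi" is genuine, cite the two-time case: $\mathbf{Q}_{01}$ is the Margenau-Hill distribution, which can be negative, and it is a marginal of $\mathbf{Q}_{012}$. The only delicate point is the $(t_1,t_2)$ and $(t_0,t_2)$ marginals.

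\begin{itemize}
\item I would read them directly as Born rules on the reduced PDMs $R_{12}$ and $R_{02}$. These are well defined by the marginal property, so I would not try to match them to an explicit Margenau-Hill form.
\item If that form is wanted for the $(t_1,t_2)$ marginal, it can be checked from Eq.~\eqref{Q_definition_3steps}. Summing over $i$ turns $\{\rho,\Pi_i^0\}$ into $2\rho$. Hence $\sum_i\mathbf{Q}_{012}=\tfrac12\text{Tr}[\mathcal{E}_2(\{\mathcal{E}_1(\rho),\Pi_j^1\})\Pi_k^2]$, which is the Margenau-Hill expression with initial state $\mathcal{E}_1(\rho)$.
\item The $(t_0,t_2)$ marginal similarly collapses, using $\sum_j\Pi_j^1=\mathbb{I}_1$, to $\tfrac12\text{Tr}[\mathcal{E}_2\circ\mathcal{E}_1(\{\rho,\Pi_i^0\})\Pi_k^2]$, the Margenau-Hill form for the composed channel.
\end{itemize}
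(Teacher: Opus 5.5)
Your proposal is correct and follows essentially the paper's own argument: via Theorem~\ref{theorem_R123_Q123}, the unit trace of $R_{012}$ gives normalization, and the PDM marginal property gives $\sum_k\mathbf{Q}_{012}(i,j,k)=\mathbf{Q}_{01}(i,j)$ and the other marginals, which is all the paper invokes. Your extra checks are all valid and make the argument more self-contained than the paper's brief justification: reality from Hermiticity; the direct collapses $\sum_i\{\rho,\Pi_i^0\}=2\rho$ and $\sum_j\{X,\Pi_j^1\}=2X$, which give Margenau--Hill forms for the $(t_1,t_2)$ and $(t_0,t_2)$ marginals; and genuine negativity inherited from $\mathbf{Q}_{01}$.
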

To obtain a clearer physical meaning for the quasiprobability $\mathbf{Q}_{012}(i,j,k)$, we proceed to obtain for the case of three-time steps a result similar to Eq.~\eqref{Q_P_D_relation}. To do this, notice that the Lüdders-von Neumann projector postulate for three points in time says that the joint probability of obtaining outcomes $(i,j,k)$ under the conditions described above is (see Fig.~\ref{three-time_steps_fig}) 
\begin{eqnarray}
    \mathbf{P}_{012}(i,j,k) = \text{Tr} [ \Pi_k^2 \mathcal{E}_2 ( \Pi_j^1 \mathcal{E}_1 (\Pi_i^0 \rho \Pi_i^0) \Pi_j^1 ) ]. \label{Ludders_nV_3points}
\end{eqnarray}

\begin{figure}
    \centering
    \includegraphics[width=1.0\linewidth]{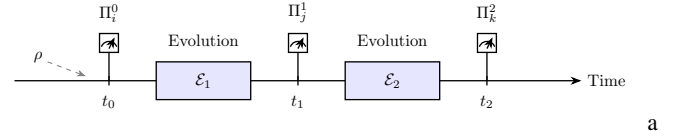}a
    \caption{Three-time steps sequential measurements representing the Lüdders von-Neumann projection postulate (Eq.~\eqref{Ludders_nV_3points}).}
    \label{three-time_steps_fig}
\end{figure}

With this probability, we can enunciate the following lemma.
\begin{lemma} \label{lemma_quasiprob_3steps}
    The quasiprobability for three-time steps, given by Eqs.~\eqref{Q_definition_3steps} and~\eqref{theorem_R123_Q123_equation} is related to Lüdders-von Neumann (Eq.~\eqref{Ludders_nV_3points}) by the following equation
    \begin{eqnarray}
        \mathbf{Q}_{012}(i,j,k) = \mathbf{P}_{012}(i,j,k) +\mathbf{D}_{012}(i,j,k), \label{Q_P_D_relation_3points}
    \end{eqnarray}
    where 
    \begin{eqnarray}
        \mathbf{D}_{012}(i,j,k) : = & \mathbf{D}_{02,0\overline{1}2} (i,j,k) + \mathbf{D}_{2,\overline{1}2} (i,j,k) \nonumber \\
        & + \mathbf{D}_{\overline{01}2,\overline{0}2} (i,j,k) + \mathbf{D}_{12,\overline{0}12} (i,j,k),
    \end{eqnarray}
    and 
    \begin{align}
      &  \mathbf{D}_{02,0\overline{1}2} (i,j,k) := \frac{1}{2} \text{Tr} [\mathcal{E}_2 \left(\mathcal{E}_1(\Pi_i^0\rho \Pi_i^0)-\mathcal{E}_1(\Pi_i^0\rho \Pi_i^0)_{\overline{j}} \right) \Pi_k^2], \label{D_13} \\
       & \mathbf{D}_{2,\overline{1}2} (i,j,k)  := \frac{1}{4} \text{Tr} [\mathcal{E}_2 \left(\mathcal{E}_1(\rho)-\mathcal{E}_1(\rho)_{\overline{j}} \right)\Pi_k^2], \label{D_3_23} \\
        & \mathbf{D}_{\overline{01}2,\overline{0}2} (i,j,k) := \frac{1}{4} \text{Tr} [\mathcal{E}_2 \left(\mathcal{E}_1(\rho_{\overline{i}})_{\overline{j}}-\mathcal{E}_1(\rho_{\overline{i}}) \right)\Pi_k^2], \label{D_3_13} \\
        & \mathbf{D}_{12,\overline{0}12} (i,j,k) := \frac{1}{2} \text{Tr} [\mathcal{E}_2 \left( \Pi_j^1 \mathcal{E}_1(\rho)\Pi_j^1 - \Pi_j^1\mathcal{E}_1(\rho_{\overline{i}})\Pi_j^1 \right)\Pi_k^2], \label{D_23}
    \end{align}
    with $\mathcal{O}_{\overline{i}}:=\Pi_i^0 \mathcal{O} \Pi_i^0 + (\mathbb{I}_0-\Pi_i^0) \mathcal{O} (\mathbb{I}_0-\Pi_i^0)$ and $\mathcal{O}_{\overline{j}}:=\Pi_j^1 \mathcal{O} \Pi_j^1 + (\mathbb{I}_1-\Pi_j^1) \mathcal{O} (\mathbb{I}_1-\Pi_j^1)$, for any operator $\mathcal{O}$.
\end{lemma}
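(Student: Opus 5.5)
The plan is a direct algebraic expansion of Eq.~\eqref{Q_definition_3steps}. It uses a single identity twice, once at each of the two anticommutators. For any operator $X$ and projector $\Pi$, write $X_{\overline{\Pi}} := \Pi X \Pi + (\mathbb{I}-\Pi)X(\mathbb{I}-\Pi)$. Then $X - X_{\overline{\Pi}} = \Pi X(\mathbb{I}-\Pi) + (\mathbb{I}-\Pi)X\Pi$. Since $XΠ = \Pi X\Pi + (\mathbb{I}-\Pi)X\Pi$ and $\Pi X = \Pi X \Pi + \Pi X(\mathbb{I}-\Pi)$, we get
\begin{equation}
\frac{1}{2}\{X,\Pi\} = \Pi X \Pi + \frac{1}{2}\left(X - X_{\overline{\Pi}}\right). \nonumber
\end{equation}
This is the operator-level version of the splitting $\mathbf{Q}_{01}=\mathbf{P}_{01}+\mathbf{D}_{01}$ in Eq.~\eqref{D_definition}. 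I would state and verify it first as a small auxiliary step.

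The first application is at $t_0$. With $X=\rho$ and $\Pi=\Pi_i^0$, the identity gives $\frac{1}{2}\{\rho,\Pi_i^0\} = \Pi_i^0\rho\Pi_i^0 + \frac{1}{2}(\rho-\rho_{\overline{i}})$. By linearity of $\mathcal{E}_1$, this yields
\begin{equation}
A := \frac{1}{2}\mathcal{E}_1(\{\rho,\Pi_i^0\}) = \mathcal{E}_1(\Pi_i^0\rho\Pi_i^0) + \frac{1}{2}\mathcal{E}_1(\rho) - \frac{1}{2}\mathcal{E}_1(\rho_{\overline{i}}). \nonumber
\end{equation}

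The second application is at $t_1$. With $X=A$ and $\Pi=\Pi_j^1$, we get $\frac{1}{2}\{A,\Pi_j^1\} = \Pi_j^1 A\Pi_j^1 + \frac{1}{2}(A - A_{\overline{j}})$. Since $\mathbf{Q}_{012}(i,j,k)=\text{Tr}[\mathcal{E}_2(\frac12\{A,\Pi_j^1\})\Pi_k^2]$, it only remains to distribute. The map $\mathcal{O}\mapsto\mathcal{O}_{\overline{j}}$ is linear, so we can substitute the three pieces of $A$ and regroup:
\begin{itemize}
\item The term $\Pi_j^1\mathcal{E}_1(\Pi_i^0\rho\Pi_i^0)\Pi_j^1$ gives $\mathbf{P}_{012}$ of Eq.~\eqref{Ludders_nV_3points}.
\item The term $\frac12\Pi_j^1(\mathcal{E}_1(\rho)-\mathcal{E}_1(\rho_{\overline{i}}))\Pi_j^1$ gives $\mathbf{D}_{12,\overline{0}12}$ of Eq.~\eqref{D_23}.
\item The term $\frac12(\mathcal{E}_1(\Pi_i^0\rho\Pi_i^0)-\mathcal{E}_1(\Pi_i^0\rho\Pi_i^0)_{\overline{j}})$ gives $\mathbf{D}_{02,0\overline{1}2}$ of Eq.~\eqref{D_13}.
\item The remaining piece $\frac14[(\mathcal{E}_1(\rho)-\mathcal{E}_1(\rho)_{\overline{j}}) - (\mathcal{E}_1(\rho_{\overline{i}})-\mathcal{E}_1(\rho_{\overline{i}})_{\overline{j}})]$ splits into $\mathbf{D}_{2,\overline{1}2}$ of Eq.~\eqref{D_3_23} and $\mathbf{D}_{\overline{01}2,\overline{0}2}$ of Eq.~\eqref{D_3_13}. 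The second of these appears with the reversed order inside the bracket, which absorbs the minus sign.
\end{itemize}
Applying $\mathcal{E}_2$ and tracing against $\Pi_k^2$, both linear, then gives Eq.~\eqref{Q_P_D_relation_3points}.

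There is no conceptual obstacle, because the statement is an identity. The step I expect to need care is the bookkeeping in the second application. The prefactors $\frac12$ and $\frac14$ must match the definitions, and so must the signs and the order of subtraction in Eq.~\eqref{D_3_13}. One must also check that the bar operation $\overline{j}$ commutes with the linear combinations inside $A$. I would verify this last point explicitly, since it is what allows $A_{\overline{j}}$ to be split term by term.
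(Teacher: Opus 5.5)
Your proposal is correct and is essentially the paper's proof. Your identity $\tfrac12\{X,\Pi\}=\Pi X\Pi+\tfrac12(X-X_{\overline{\Pi}})$ is just Eq.~\eqref{lemma2_anticom} divided by two. As in the paper, you apply it once at $t_0$ and once at $t_1$, then regroup into $\mathbf{P}_{012}$ and the four disturbance terms. Your prefactors and signs check out. The only cosmetic difference is that you apply the bar operation to $A$ as a whole and split by linearity, whereas the paper expands the outer anticommutator term by term first.
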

With the Lemma above, we can, as in Eq.~\eqref{Q_P_D_relation}, understand the quasiprobability $\mathbf{Q}_{012}(i,j,k)$ as the Lüdders von-Neumann term plus a `disturbance' term $\mathbf{D}_{012}(i,j,k)$, for a choice of sequential projective measurements. We shall relate the vanishing of this term to the achievement of the NSIT condition plus the AoT condition (which is always satisfied in quantum measurements and CPTP evolutions) for measurements in three-time steps in Theorem~\ref{non-signaling_D_theorem_3steps}, which is analogous to Theorem~\ref{theorem_NSIT_D}. The complete explicit description of the NSIT and AoT conditions in three-time steps is shown in the Appendix~\ref{proof_lemma_quasiprob_3steps_sec}, right before the proof of the following theorem.

 \begin{theorem} \label{non-signaling_D_theorem_3steps}
    For the case of three-time steps, with the choice of projective measurements $ \{ \Pi_i^0 \}_i,~\{ \Pi_j^1 \}_j$ and $\{\Pi_k^2\}_k$, at instants $t_0,t_1$ and $t_2$, respectively, the NSIT condition (together with the AoT condition) is valid if and only if 
     \begin{align}
         \mathbf{D}_{012} (i,j,k)  = 0, \label{non-signaling_D_theorem_3steps_eq}
     \end{align}
    for any $(i,j,k)$.
 \end{theorem}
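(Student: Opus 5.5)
The plan is to treat the two directions separately, working term by term with the decomposition of Lemma~\ref{lemma_quasiprob_3steps}, and to reuse the mechanism of Theorem~\ref{theorem_NSIT_D} as much as possible. I will first list the three-time NSIT and AoT conditions for the measurements $\{\Pi_i^0\},\{\Pi_j^1\},\{\Pi_k^2\}$. These are $\text{NSIT}_{\overline{0}1}$, $\text{NSIT}_{\overline{0}2}$, $\text{NSIT}_{\overline{01}2}$, $\text{NSIT}_{0\overline{1}2}$ and $\text{NSIT}_{\overline{0}12}$, where the barred times are marginalized. AoT holds automatically for CPTP evolutions. Each condition is then written in the form ``a Born or Lüders probability with a measurement omitted equals the same probability with the measurement performed and its outcome summed out.'' As in the proof of Theorem~\ref{theorem_NSIT_D}, summing out the outcome of $\Pi^1_j$ will be expressed through the coarse-grained operation $\mathcal{O}\mapsto\mathcal{O}_{\overline{j}}$, and likewise $\mathcal{O}\mapsto\mathcal{O}_{\overline{i}}$ for $\Pi^0_i$.

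\textbf{Direction NSIT$\wedge$AoT $\Rightarrow \mathbf{D}_{012}=0$.} I will match each of the four pieces in Lemma~\ref{lemma_quasiprob_3steps} to one of these conditions.
\begin{itemize}
\item $\mathbf{D}_{02,0\overline{1}2}$ is the difference between $\mathbf{P}_{02}(i,k)$, with channel $\mathcal{E}_2\circ\mathcal{E}_1$ applied to $\Pi_i^0\rho\Pi_i^0$, and the same quantity with the $t_1$ measurement inserted and marginalized. It therefore vanishes under $\text{NSIT}_{0\overline{1}2}$.
\item $\mathbf{D}_{2,\overline{1}2}$ is the $(1,2)$ analogue of Eq.~\eqref{D_definition} for the state $\mathcal{E}_1(\rho)$. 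It vanishes under the NSIT condition between $t_1$ and $t_2$ with no measurement at $t_0$.
\item $\mathbf{D}_{\overline{01}2,\overline{0}2}$ compares the two-measurement statistics with the one-measurement statistics at $t_2$. It vanishes once $\text{NSIT}_{\overline{01}2}$ and $\text{NSIT}_{\overline{0}2}$ are combined.
\item $\mathbf{D}_{12,\overline{0}12}$ vanishes by $\text{NSIT}_{\overline{0}12}$.
\end{itemize}
Each of these steps is a direct rewriting of the kind in Eq.~\eqref{theorem_NSIT_D_step1}, using linearity of $\mathcal{E}_1$ and $\mathcal{E}_2$.

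\textbf{Direction $\mathbf{D}_{012}=0 \Rightarrow$ NSIT$\wedge$AoT.} Here I will not try to show that each of the four terms vanishes individually. Instead I will use marginalization. If $\mathbf{D}_{012}\equiv 0$, then $\mathbf{Q}_{012}=\mathbf{P}_{012}$. By Theorem~\ref{theorem_R123_Q123} and the marginal property of PDMs, the marginals of $\mathbf{Q}_{012}$ are Born-rule quantities on the reduced PDMs $R_{01}$, $R_{12}$, $R_{02}$, $\rho_1$ and $\rho_2$ (Lemma~\ref{3steps_quasiprobability_lemma}). This gives the following.
\begin{itemize}
\item Summing over $i,j$ yields $\sum_{ij}\mathbf{P}_{012}=\text{Tr}[\rho_2\Pi_k^2]=\mathbf{P}_2(k)$, which is $\text{NSIT}_{\overline{01}2}$.
\item Summing over $k$ yields $\mathbf{P}_{01}=\mathbf{Q}_{01}$. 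Hence $\mathbf{D}_{01}=0$, and Theorem~\ref{theorem_NSIT_D} gives $\text{NSIT}_{\overline{0}1}$.
\item Summing over $i$ and using that $R_{12}$ is the two-step PDM of the state $\mathcal{E}_1(\rho)$ under $\mathcal{E}_2$ yields $\mathbf{P}_{\overline{0}12}=\mathbf{Q}_{12}=\mathbf{P}_{12}+\mathbf{D}_{12}$.
\item Summing over $j$ yields $\mathbf{P}_{0\overline{1}2}=\mathbf{Q}_{02}=\mathbf{P}_{02}+\mathbf{D}_{02}$.
\end{itemize}

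\textbf{Main obstacle.} The remaining difficulty is to show that the two-step disturbances $\mathbf{D}_{12}$ and $\mathbf{D}_{02}$ vanish. These are the conditions $\text{NSIT}_{\overline{1}2}$ and $\text{NSIT}_{\overline{0}2}$, and they are not obviously forced by the marginal identities above. My first attempt will combine two ingredients. The first is the closed form $\mathbf{D}_{0 2}(i,k)=\frac12\left(\langle\Pi^2_k\rangle_{\mathcal{E}_2\mathcal{E}_1(\rho)}-\langle\Pi^2_k\rangle_{\mathcal{E}_2\mathcal{E}_1(\rho_{\overline{i}})}\right)$, together with its $(1,2)$ analogue. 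The second is the hypothesis that all four pieces sum to zero for every $(i,j,k)$. Varying $j$ with $(i,k)$ fixed should separate the $j$-independent part, coming from $\mathbf{D}_{02,0\overline{1}2}$ summed against $\mathbf{D}_{\overline{01}2,\overline{0}2}$, from the $j$-dependent part. If that separation goes through, I will then invoke Theorem~\ref{Theorem_Clemente} to identify the full set of conditions obtained with $\text{MR}_{012}$.
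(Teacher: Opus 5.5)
Your forward direction is correct and essentially the paper's: each NSIT condition kills one subterm. The only difference is that the paper disposes of $\mathbf{D}_{\overline{01}2,\overline{0}2}$ via the identity $\mathbf{D}_{\overline{01}2,\overline{0}2}=-\tfrac12\sum_i\mathbf{D}_{02,0\overline{1}2}$, whereas you combine $\text{NSIT}_{\overline{01}2}$ and $\text{NSIT}_{\overline{0}2}$.

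The converse has a genuine gap. Your marginalizations correctly reduce everything to $\mathbf{D}_{12}=\mathbf{D}_{02}=0$, but ``if that separation goes through'' is not an argument. The paper bridges this with one step: from $\mathbf{D}_{01}(i,j)=0$ it concludes the operator identity $\mathcal{E}_1(\rho)=\mathcal{E}_1(\rho_{\overline{i}})$. After that, $\mathbf{D}_{12,\overline{0}12}=0$, then $\sum_i\mathbf{P}_{012}=\mathbf{P}_{12}$ gives $\mathbf{D}_{12}=0$, hence $\mathbf{D}_{2,\overline{1}2}=0=\mathbf{D}_{\overline{01}2,\overline{0}2}$, and the total sum forces $\mathbf{D}_{02,0\overline{1}2}=0$. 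But that identity requires $\text{Tr}[\Pi^1_j(\cdot)]=0$ for \emph{all} projectors, while in the statement $\{\Pi^1_j\}$ is fixed. There, $\mathbf{D}_{01}\equiv0$ only says the $\Pi^1$-diagonal part of $\mathcal{E}_1(\rho-\rho_{\overline{i}})$ vanishes, so you cannot borrow this step.

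The obstruction you sensed is real, and it gives a counterexample to the ``if'' direction as stated. Take a qubit with:
\begin{itemize}
\item $\rho=|+\rangle\langle+|$ and $\{\Pi^0_i\}$ the $\sigma_y$ eigenprojectors;
\item $\mathcal{E}_1(X)=\tfrac12X+\tfrac12\text{Tr}(X)\,|-\rangle\langle-|$ and $\{\Pi^1_j\}$ the $\sigma_z$ eigenprojectors;
\item $\mathcal{E}_2=\mathcal{I}$ and $\{\Pi^2_k\}$ the $\sigma_x$ eigenprojectors.
\end{itemize}
Then $\mathcal{E}_1(\rho)=\mathbb{I}/2$, $\mathcal{E}_1(\{\rho,\Pi^0_\pm\})=\tfrac12\mathbb{I}\pm\tfrac14\sigma_y$ and $\mathcal{E}_1(\Pi^0_\pm\rho\Pi^0_\pm)=\tfrac14\mathbb{I}-\tfrac18\sigma_x\pm\tfrac18\sigma_y$. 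This gives $\mathbf{Q}_{012}(i,j,k)=\mathbf{P}_{012}(i,j,k)=1/8$ for all $(i,j,k)$, so $\mathbf{D}_{012}\equiv0$. Yet $\mathbf{P}_{02}(i,+)=1/8\neq1/4=\sum_j\mathbf{P}_{012}(i,j,+)$ and $\sum_i\mathbf{P}_{02}(i,+)=1/4\neq1/2=\mathbf{P}_2(+)$, so $\text{NSIT}_{0\overline{1}2}$ and $\text{NSIT}_{\overline{0}2}$ fail. The culprit is $\mathcal{E}_1(\rho)-\mathcal{E}_1(\rho_{\overline{i}})=\tfrac14\sigma_x$: it is invisible to the $\sigma_z$ measurement at $t_1$ (so $\mathbf{D}_{01}\equiv0$) but not to $\sigma_x$ at $t_2$, giving $\mathbf{D}_{02}(i,\pm)=\pm1/8$.

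Here is what your marginalization route can actually salvage. For rank-one qubit projectors, $\mathbf{D}_{01}\equiv0$ already gives $\Pi^1_j\mathcal{E}_1(\rho-\rho_{\overline{i}})\Pi^1_j=2\mathbf{D}_{01}(i,j)\Pi^1_j=0$. Hence $\mathbf{D}_{12,\overline{0}12}\equiv0$, i.e.\ $\text{NSIT}_{\overline{0}12}$. Through your identity $\mathbf{Q}_{12}=\sum_i\mathbf{P}_{012}$ this also yields $\mathbf{D}_{12}=0$ and $\text{NSIT}_{\overline{1}2}$. Your $j$-separation then leaves only $\mathbf{D}_{02,0\overline{1}2}(i,k)=-\mathbf{D}_{\overline{01}2,\overline{0}2}(k)$, which in the example equals $\mp1/16$ for $k=\pm$, not zero.

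So the converse cannot be completed for a fixed measurement choice. It needs an extra hypothesis such as $\mathcal{E}_1(\rho)=\mathcal{E}_1(\rho_{\overline{i}})$, for instance $\mathbf{D}_{01}=0$ for every projective measurement at $t_1$. Without it, your argument, like the paper's, establishes only the ``only if'' half together with the partial converse above.
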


Similarly to Eq.~\eqref{NSIT_quantifier}, we can define an NSIT violation quantifier, for a choice of sequential projective measurements $\{ \Pi^0_i \}_i,~\{ \Pi^1_j \}_j$ and $\{ \Pi^2_k \}_k$, in the case of three-time steps: 
 \begin{equation}
     \mathcal{N}_{012}:= \sum_{i,j,k}|\mathbf{D}_{012} (i,j,k)|. \label{NSIT_quantifier_3steps}
 \end{equation}
 According to Theorem~\ref{non-signaling_D_theorem_3steps}, the above quantifier is nonzero if and only if NSIT is violated. 

Again, the combination of Theorem~\ref{Theorem_Clemente} with Theorem~\ref{non-signaling_D_theorem_3steps}, possibly with the use of the quantifier $\mathcal{N}_{012}$, gives us tools to examine the violation of MR using the quasiprobability of the spatiotemporal Born rule $\mathbf{Q}_{012}(i,j,k)$, depending on the choice of projective measurements. Let us now consider the negativity of the PDM $R_{012}$ for the three-time steps described above. As a result, analogously to Lemma~\ref{lemma_R_positivity_MR}, we have the following lemma.
\begin{lemma} \label{lemma_R_positivity_MR_3steps}
If $f(R_{012})=0$, then the MR is satisfied for any choice of three sequential projective measurements at time $t_0$, $t_1$, and $t_2$.
\end{lemma}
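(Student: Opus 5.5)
The argument will follow the proof of Lemma~\ref{lemma_R_positivity_MR}, with $R_{01}$ replaced by $R_{012}$. The goal is to show that positivity of $R_{012}$ produces a genuine joint probability distribution over the outcomes at $t_0,t_1,t_2$ whose marginals reproduce the observed single-time and two-time statistics. That joint distribution is exactly what MR requires for three time steps (see the discussion after Theorem~\ref{Theorem_Clemente}).

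Fix arbitrary projective measurements $\{\Pi^0_i\}_i$, $\{\Pi^1_j\}_j$, $\{\Pi^2_k\}_k$ and define
\begin{equation}
P_{012}(Q^0_i,Q^1_j,Q^2_k):=\text{Tr}\left[R_{012}\,\Pi^0_i\otimes\Pi^1_j\otimes\Pi^2_k\right]. \nonumber
\end{equation}
By Theorem~\ref{theorem_R123_Q123} this equals $\mathbf{Q}_{012}(i,j,k)$.

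\emph{Nonnegativity.} If $f(R_{012})=0$, then $R_{012}\succeq 0$. The operator $\Pi^0_i\otimes\Pi^1_j\otimes\Pi^2_k$ is a positive projector, so
\begin{equation}
\text{Tr}\left[R_{012}\,\Pi^0_i\otimes\Pi^1_j\otimes\Pi^2_k\right]=\text{Tr}\left[(\Pi^0_i\otimes\Pi^1_j\otimes\Pi^2_k)\,R_{012}\,(\Pi^0_i\otimes\Pi^1_j\otimes\Pi^2_k)\right]\geq 0. \nonumber
\end{equation}
Writing it this way avoids assuming rank-one projectors, which the two-step proof implicitly did.

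\emph{Normalization.} Completeness of each projector family together with $\text{Tr}[R_{012}]=1$ gives $\sum_{ijk}P_{012}=1$. So $P_{012}$ satisfies the Kolmogorov axioms.

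\emph{Marginals.} Summing over one or two indices and using completeness replaces the corresponding projectors by identities. The marginal property of PDMs then reduces $R_{012}$ to $R_{01}$, $R_{02}$, $R_{12}$, or $\rho_j$ by partial trace (Eq.~\eqref{partial_trace_density_matrix}). The single-time marginals give $\text{Tr}[\rho_j\Pi^j]$, which are the Born-rule probabilities with no earlier measurement. The two-time marginals equal $\mathbf{Q}_{01}$, $\mathbf{Q}_{12}$, $\mathbf{Q}_{02}$ (Lemma~\ref{3steps_quasiprobability_lemma}). Since $R_{01}$, $R_{12}$, $R_{02}$ are marginals of a positive operator, they are themselves positive. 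Lemma~\ref{lemma_R_positivity_MR}, applied pairwise, then shows that these two-time marginals are consistent with the single-time statistics.

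\emph{Main obstacle.} The hard step is making sure the marginals match the \emph{measured} statistics, not just the Born-rule single-time probabilities. To handle this, I would show that the measurement-independent joint distribution $P_{012}$ makes NSIT and AoT hold for every subset of times: the single-time and two-time marginals of $P_{012}$ are independent of which other measurements are performed. One can either invoke the standard equivalence between MR and existence of such a distribution, or go through Theorem~\ref{Theorem_Clemente} directly.

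An alternative route is to apply the two-step result to each pair of times, which gives $\mathbf{D}_{01}=\mathbf{D}_{12}=\mathbf{D}_{02}=0$ via Theorem~\ref{theorem_NSIT_D}. One would then conclude $\mathbf{D}_{012}=0$ from the decomposition in Lemma~\ref{lemma_quasiprob_3steps} and Theorem~\ref{non-signaling_D_theorem_3steps}. The delicate point on that route is the term $\mathbf{D}_{12,\overline{0}12}$, which involves the state conditioned on the first measurement. I would therefore use the joint-distribution argument as the primary proof.
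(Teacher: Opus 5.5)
Your plan is the same as the paper's proof. You define $P_{012}=\text{Tr}[R_{012}\,\Pi^0_i\otimes\Pi^1_j\otimes\Pi^2_k]=\mathbf{Q}_{012}(i,j,k)$, get nonnegativity and normalization from $R_{012}\succeq 0$, and read off the marginals by partial traces. Your sandwich argument for nonnegativity correctly removes the rank-one assumption. The step you flag as the main obstacle, however, is a genuine gap, and it cannot be closed.

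\begin{itemize}
\item The marginals of $P_{012}$ are the quasiprobabilities $\mathbf{Q}_{01},\mathbf{Q}_{02},\mathbf{Q}_{12}$. They are not the L\"uders distributions $\mathbf{P}_{01},\mathbf{P}_{02},\mathbf{P}_{12},\mathbf{P}_{012}$ that sequential runs actually record.
\item That these marginals do not depend on the other projectors is automatic. It says nothing about NSIT, which compares $\mathbf{P}_{012}$ (earlier measurement performed) with $\mathbf{P}_{12}$ (not performed).
\item In the sense of Theorem~\ref{Theorem_Clemente}, MR for $(t_0,t_1,t_2)$ forces the measured $\mathbf{P}_{012}$ itself to be the joint distribution. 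So what you need is $\mathbf{Q}_{012}=\mathbf{P}_{012}$, i.e.\ $\mathbf{D}_{012}=0$ (Theorem~\ref{non-signaling_D_theorem_3steps}). Positivity of $R_{012}$ does not control $\mathbf{D}_{012}$.
\item Exhibiting some nonnegative joint distribution with the right single-time marginals proves nothing: $\mathbf{P}_0(i)\mathbf{P}_1(j)\mathbf{P}_2(k)$ always does that.
\end{itemize}

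In fact the statement is false. Take $\rho=|0\rangle\langle 0|$ and let $\mathcal{E}_1=\mathcal{E}_2$ be complete dephasing in the $\sigma_z$ basis, whose Choi matrix is $|00\rangle\langle 00|+|11\rangle\langle 11|$.
\begin{itemize}
\item Equations~\eqref{R_2_steps} and~\eqref{R_m_steps} give $R_{01}=|00\rangle\langle 00|$ and $R_{012}=|000\rangle\langle 000|$, so $f(R_{012})=0$.
\item Measure $\sigma_x$ at $t_0$ and $\sigma_z$ at $t_1,t_2$. Then $\mathbf{Q}_{012}(i,j,k)=\tfrac12\delta_{j+}\delta_{k+}$, but $\mathbf{P}_{012}(i,j,k)=\tfrac14\delta_{jk}$.
\item Also $\mathbf{P}_{12}(+,+)=1\neq\tfrac12=\sum_i\mathbf{P}_{012}(i,+,+)$, which violates \eqref{NSIT3_1}.
\end{itemize}
The two-time version gives $\mathbf{D}_{01}(i,\pm)=\pm\tfrac14$ and refutes Lemma~\ref{lemma_R_positivity_MR}. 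So your alternative route fails at its first step.

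The paper's proof has exactly the same hole. It claims $\text{Tr}[R_{\alpha\beta}\,\Pi^\alpha\otimes\Pi^\beta]$ equals the L\"uders distribution because the moments $\langle (Q^\alpha)^m\otimes(Q^\beta)^m\rangle$ agree. Equal-power moments do not fix a joint distribution: in this example they all agree, yet $\mathbf{Q}_{01}(i,j)=\tfrac12\delta_{j+}\neq\tfrac14=\mathbf{P}_{01}(i,j)$.

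What survives is the correlator-level claim. For $\pm1$-valued $Q$ one has $\tfrac12\{\rho,Q\}=\Pi_+\rho\Pi_+-\Pi_-\rho\Pi_-$, so the two-time correlators of $R_{012}$ equal the measured ones. Hence $R_{012}\succeq 0$ directly implies every LGI (Corollary~\ref{corollary_LGI_negativity}), but not NSIT.
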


From this lemma, the same physical interpretation for the positivity of PDMs in two-time steps (see Section~\ref{two_time_steps_sec}, which discusses a measurement-independent notion of macroscopic realism) is also valid for three-time steps. Moreover, this result enables us to relate the negativity of the PDMs in three-time steps to the standard Leggett-Garg inequalities (LGIs). The standard LGI states the following: Let a quantum state be prepared and evolve from time $t_0$ to $t_2$. Between these instants of time, the same dichotomic observable $Q$, with outcomes $\pm 1$, is measured in two different instants of time among a combination of pairs of $t_0$, $t_1$ (where $t_0<t_1<t_2$) and $t_2$. Let statistics be obtained for the repetition of experiments with the same preparation and evolution, but by exchanging the measurements so that they are made in all possible combinations of pairs $t_0,~t_1$ and $t_2$. If MR is satisfied, then the correlations must respect~\cite{A_J_Leggett_2002,Leggett-Garg_original,Emary_2013,Vitagliano_2023}
\begin{equation}
    \langle Q^0 Q^1\rangle + \langle Q^1 Q^2\rangle - \langle Q^0 Q^2\rangle \leq 1, \label{Legget-Garg_ineq} \tag{LGI}
\end{equation}
where $Q^\alpha$ means the operator $Q$ at time $t_\alpha$, therefore $\langle Q^\alpha Q^\beta \rangle$ is the correlation function of the same operator at different instants of time ($t_\alpha$ and $t_\beta$). From the symmetries that a dichotomic observable $Q$ must respect, there exists a class of Leggett-Garg inequalities (LGIs) other than Eq.~\eqref{Legget-Garg_ineq} that correlation functions must satisfy in the same experiment for MR to be valid~\cite{Emary_2013,Vitagliano_2023}; we call this set of inequalities the full class of LGIs. 

Given the PDM $R_{012}$ that describes a system evolving in three-time steps $t_0$, $t_1$, and $t_2$, the correlation function of a dichotomic observable $Q$ with outcomes $\pm 1$ at these instants will respect (from Eq.~\eqref{average_operators_R})
\begin{align}
   \langle Q^0 Q^1 \rangle & = \text{Tr} [R_{012} Q\otimes Q \otimes \mathbb{I}_2], \label{correlation_func_R_01} \\
   \langle Q^0 Q^2 \rangle & = \text{Tr} [R_{012} Q\otimes \mathbb{I}_1 \otimes Q], \label{correlation_func_R_02} \\
   \langle Q^1 Q^2 \rangle & = \text{Tr} [R_{012} \mathbb{I}_0 \otimes Q \otimes Q]. \label{correlation_func_R_12}
\end{align}
Equivalently, we could write $\langle Q^0 Q^1 \rangle =  \text{Tr} [R_{01} Q\otimes Q]$ (and analogously for the other correlation functions), where $R_{01} = \text{Tr}_{2}[R_{012}]$. From Eqs.~\eqref{correlation_func_R_01} --~\eqref{correlation_func_R_12}, we can use PDM to test the full class of LGIs. With this relation between PDMs and LGIs, we have the following corollary.
\begin{coro} \label{corollary_LGI_negativity}
    If $R_{012}$ is a PDM describing a system evolving in three-time steps $t_0,$ $t_1$, and $t_2$ and exists a dichotomic observable $Q$ that violates some inequality from the full class of LGIs, then $f(R_{012})>0$.
\end{coro}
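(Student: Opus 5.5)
The plan is to prove the contrapositive: if $f(R_{012})=0$, then no dichotomic observable $Q$ can violate any inequality in the full class of LGIs. By Lemma~\ref{lemma_R_positivity_MR_3steps}, $f(R_{012})=0$ already gives MR for every choice of three sequential projective measurements. The LGIs are necessary conditions for MR, so the corollary should follow once we check that the correlators in the LGIs are the ones produced by the MR model that the positivity of $R_{012}$ supplies.

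To make that check explicit rather than relying only on the general implication MR $\Rightarrow$ LGIs, I would build the joint distribution directly, as in the proof of Lemma~\ref{lemma_R_positivity_MR}. Write the spectral decomposition $Q=\Pi_+-\Pi_-$ and set
\begin{equation}
P(q_0,q_1,q_2):=\text{Tr}\left[R_{012}\,\Pi_{q_0}\otimes\Pi_{q_1}\otimes\Pi_{q_2}\right],\qquad q_\alpha\in\{\pm1\}. \nonumber
\end{equation}
The projectors $\Pi_{\pm}$ may have rank greater than one. Even so, positivity of $R_{012}$ implies that $\Pi_{q_0}\otimes\Pi_{q_1}\otimes\Pi_{q_2}R_{012}\,\Pi_{q_0}\otimes\Pi_{q_1}\otimes\Pi_{q_2}\succeq0$, so each $P(q_0,q_1,q_2)\geq 0$. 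The unit trace of $R_{012}$ then gives normalization, so $P$ is a genuine probability distribution on $\{\pm1\}^3$.

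Next I use the completeness relation $\Pi_++\Pi_-=\mathbb{I}$ together with Eqs.~\eqref{correlation_func_R_01}--\eqref{correlation_func_R_12}. These show that $\sum_{q_0,q_1,q_2} q_\alpha q_\beta\,P(q_0,q_1,q_2)=\langle Q^\alpha Q^\beta\rangle$ for each pair $(\alpha,\beta)$. Hence all three two-time correlators are second moments of one classical joint distribution of three $\pm1$ random variables.

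It remains to recall why such correlators satisfy the full class of LGIs. For every assignment $(q_0,q_1,q_2)\in\{\pm1\}^3$ we have the pointwise bound
\begin{equation}
s_1q_0q_1+s_2q_1q_2+s_3q_0q_2\leq 1, \nonumber
\end{equation}
where $s_1,s_2,s_3\in\{\pm1\}$ with $s_1s_2s_3=-1$. The standard \eqref{Legget-Garg_ineq} is the case $(s_1,s_2,s_3)=(+,+,-)$, and the other members of the full class are obtained by relabelling $Q^\alpha\to -Q^\alpha$. Averaging this bound against $P$ gives every inequality in the class. Consequently a violation forces $R_{012}\not\succeq0$, which is exactly $f(R_{012})>0$.

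The main obstacle is making sure that the correlators appearing in the LGIs coincide with those computed from $R_{012}$. In an actual LGI experiment each $\langle Q^\alpha Q^\beta\rangle$ comes from two-time sequential measurements, whereas $\text{Tr}[R_{012}\,\cdot\,]$ gives the symmetrized, Margenau--Hill-type correlator. For dichotomic $Q$ these two-time correlators agree with the PDM values, since the disturbance averages out in the $\pm1$ product. If needed, this can alternatively be routed through Lemma~\ref{lemma_R_positivity_MR_3steps}: MR holds, so by Theorem~\ref{non-signaling_D_theorem_3steps} the disturbance term vanishes, and the PDM statistics then coincide with the Lüders sequential statistics. Either way, the paper itself defines the tested correlators through Eqs.~\eqref{correlation_func_R_01}--\eqref{correlation_func_R_12}, and the argument above applies to them directly.
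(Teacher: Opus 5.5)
Your argument is correct, but it does not follow the paper's route. The paper proves the corollary in one line: an LGI violation means MR fails, and the contrapositive of Lemma~\ref{lemma_R_positivity_MR_3steps} then gives $f(R_{012})>0$. You never pass through MR. You build $P(q_0,q_1,q_2)=\text{Tr}[R_{012}\,\Pi_{q_0}\otimes\Pi_{q_1}\otimes\Pi_{q_2}]$, identify the three PDM correlators as its second moments, and average the pointwise bound with $s_1s_2s_3=-1$, which generates every three-time LGI. Your compression step $\Pi R_{012}\Pi\succeq0$ also covers degenerate $Q$, which the paper's rank-one argument does not.

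The link to the experimentally measured correlators deserves one explicit line rather than the phrase ``the disturbance averages out''. For $Q=\Pi_+-\Pi_-$ and any operator $\tau$ one has $\tfrac12\{\tau,Q\}=\Pi_+\tau\Pi_+-\Pi_-\tau\Pi_-$, since the cross blocks $\Pi_\pm\tau\Pi_\mp$ carry weight $\tfrac12(q_++q_-)=0$. Apply this to $(\tau,\text{channel})=(\rho,\mathcal{E}_1)$, $(\mathcal{E}_1(\rho),\mathcal{E}_2)$ and $(\rho,\mathcal{E}_2\circ\mathcal{E}_1)$, the last describing the marginal $\text{Tr}_1R_{012}$. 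This turns each PDM correlator into the corresponding two-time L\"uders correlator.

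What your route buys is that it uses only what positivity actually delivers. The paper's route needs the full strength of Lemma~\ref{lemma_R_positivity_MR_3steps}: that $R_{012}\succeq0$ yields a macrorealist model of the L\"uders sequential statistics. That is false in general. Take $\rho=\tfrac12(\mathbb{I}+r\sigma_z)$ and $\mathcal{E}_1=\mathcal{E}_2=\mathcal{E}_\eta$ with $r$ and $1-\eta$ small. Then $R_{012}$ is a small perturbation of $\mathbb{I}/8$ and hence positive. Yet measuring $\sigma_x$ at $t_0$ changes the probability of $\sigma_z=+1$ at $t_1$ from $\tfrac12[1+(1-\eta)r]$ to $\tfrac12$, so NSIT fails.

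Positivity makes $\mathbf{Q}_{012}$ a genuine distribution, but not the sequential one. What survives for a dichotomic $Q$ is equality of the pairwise correlators, which is exactly what the LGIs involve and what your proof uses. For the same reason, delete the fallback in your last paragraph. It is essentially the paper's route, and positivity of $R_{012}$ does not force $\mathbf{D}_{012}=0$.
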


This corollary is a direct consequence of Lemma~\ref{lemma_R_positivity_MR_3steps}, as the violation of LGIs implies the violation of MR, which, according to Theorem~\ref{lemma_R_positivity_MR_3steps}, implies the negativity of $R_{012}$. However, the negativity of $R_{012}$ does not necessarily imply the violation of some inequality of the full class of LGIs (see a simple example in the next section), and furthermore, the negativity of $R_{012}$ does not imply the violation of MR (recalling that LGIs are a less conclusive test for the violation of MR than NSIT -- see Subsection~\ref{MR_LGIs_NSIT_sec}). Moreover, with Theorem~\ref{non-signaling_D_theorem_3steps}, we can obtain direct tests for MR more strict, since they are necessary and sufficient to MR, than the Leggett-Garg inequality (see the end of Sec.~\ref{sec_types_of_temporal_nonclasscic} for a simple example).

\section{Different types of temporal non-classicality} 
\label{sec_types_of_temporal_nonclasscic}

Here, we contrast different notions of temporal non-classicality as we believe they represent different aspects of quantum correlations in the same manner that spatial non-classical correlations are classified according to different facets -- quantum entanglement~\cite{Horodecki_2009_entanglement}, Bell nonlocality~\cite{Bell_nonlocality_RevModPhys.86.419}, quantum discord~\cite{Discord_Ollivier_PhysRevLett.88.017901}, and others~\cite{Modi_RevModPhys.84.1655,Bera_2018}. We shall analyze the already defined (see Section~\ref{sec_background}) violations of macroscopic realism (MR), the negativity of the PDM (which points to non-classical causality), temporal entanglement, and the violation of temporal Bell inequalities (temporal Bell nonlocality) for a set of examples, highlighting their main differences.

\subsection{Discussion for two-time steps}

We start our discussion of contrasting types of temporal non-classicality in the simplest two-time steps scenario of Section~\ref{two_time_steps_sec}. First, we present a theorem that shows that temporal entanglement is necessary for the violation of the temporal CHSH inequality.

\begin{theorem} \label{theorem_time_entanglement_CHSH}
    Given a PDM for two-time steps, the preparation and dynamics it represents can violate the temporal CHSH inequality, that is~\eqref{time_CHSH}, only if the PDM presents temporal entanglement.
\end{theorem}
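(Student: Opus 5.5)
We prove the contrapositive: if the two-time-step PDM $R_{01}$ is time separable in the sense of Eq.~\eqref{time_separable_def}, then every choice of dichotomic observables satisfies the temporal CHSH inequality~\eqref{time_CHSH}. The first step uses Theorem~\ref{theorem_time_entanglement_R} and the discussion after Eq.~\eqref{time_separable_def}. A separable PDM must have $\mu_k\ge 0$ for all $k$, since the time-$t_0$ marginal $\rho$ is positive semi-definite. Hence
\begin{equation}
R_{01}=\sum_k \mu_k\, \ket{\psi^0_k}\bra{\psi^0_k}\otimes\ket{\psi^1_k}\bra{\psi^1_k},\qquad \mu_k\ge 0,\quad \sum_k\mu_k=1, \nonumber
\end{equation}
which is a genuine convex mixture of product states.

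The second step expresses the temporal correlators through the PDM. By Eq.~\eqref{average_operators_R}, for observables $A_a$ at $t_0$ and $B_b$ at $t_1$ (the channel can be absorbed into $R_{01}$, or into $B_b$ in the Heisenberg picture as the paper notes),
\begin{equation}
\langle A_a B_b\rangle=\text{Tr}[R_{01}\,A_a\otimes B_b]=\sum_k \mu_k\, \alpha_k^{(a)}\beta_k^{(b)}, \nonumber
\end{equation}
where $\alpha_k^{(a)}=\bra{\psi^0_k}A_a\ket{\psi^0_k}$ and $\beta_k^{(b)}=\bra{\psi^1_k}B_b\ket{\psi^1_k}$. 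Since $A_a$ and $B_b$ are dichotomic with eigenvalues $\pm1$, we have $\alpha_k^{(a)},\beta_k^{(b)}\in[-1,1]$.

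The third step is the standard local-hidden-variable bound, with $k$ playing the role of the hidden variable distributed according to $\mu_k$. For each $k$,
\begin{equation}
\bigl|\alpha^{(1)}_k(\beta^{(1)}_k+\beta^{(2)}_k)+\alpha^{(2)}_k(\beta^{(1)}_k-\beta^{(2)}_k)\bigr|\le |\beta^{(1)}_k+\beta^{(2)}_k|+|\beta^{(1)}_k-\beta^{(2)}_k|\le 2. \nonumber
\end{equation}
Averaging with the nonnegative weights $\mu_k$, which sum to one, gives $\mathcal{B}\le 2$ by the triangle inequality. So a time-separable PDM cannot violate~\eqref{time_CHSH}, and any violation therefore requires temporal entanglement.

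The main obstacle is the second step. We must justify that the CHSH correlators, which are defined operationally through sequential measurements, coincide with $\text{Tr}[R_{01}A_a\otimes B_b]$. For dichotomic $\pm1$ observables this holds because the two-point correlator of Pauli-type (and generally $\pm1$-valued) observables is exactly the symmetrized quantity encoded in the PDM; this is Eq.~\eqref{average_operators_R}, which we take as given. The second subtle point is the positivity of the weights $\mu_k$. Without it the convexity argument fails, and it is precisely what the argument in Theorem~\ref{theorem_time_entanglement_R} secures.
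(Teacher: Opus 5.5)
Your Steps 2 and 3 reproduce the paper's own argument in Appendix~\ref{Appendix_details_temporal_nonclassic}: expand $\mathcal{B}$ in the product decomposition, bound each product-state term by $|b_1+b_2|+|b_1-b_2|\le 2$, and average with the weights. The point you call the main obstacle is fine. For $A=\Pi_+-\Pi_-$ with $\Pi_++\Pi_-=\mathbb{I}$ one has $\tfrac{1}{2}\{\rho,A\}=\Pi_+\rho\Pi_+-\Pi_-\rho\Pi_-$. Hence $\text{Tr}[R_{01}A\otimes B]=\tfrac{1}{2}\text{Tr}[\mathcal{E}(\{\rho,A\})B]$ equals the L\"uders sequential correlator for any $\pm1$-valued observables.

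The genuine gap is Step 1. The inference ``$\text{Tr}_1R_{01}=\sum_k\mu_k\ket{\psi^0_k}\bra{\psi^0_k}=\rho\succeq 0$, hence $\mu_k\ge 0$'' is false whenever the $\ket{\psi^0_k}$ are not orthonormal. Theorem~\ref{theorem_time_entanglement_R} cannot supply this inference either. Take the PDM of Eq.~\eqref{two_time_PDM_example1}, $R^{(1)}_{01}=\tfrac{1}{4}\sum_{\mu=0}^{3}\sigma_\mu\otimes\sigma_\mu$, and write $\sigma_\mu=P^\mu_+-P^\mu_-$ with $P^\mu_\pm$ the eigenprojectors of $\sigma_\mu$:
\[
R^{(1)}_{01}=\tfrac{1}{2}\left(\ket{0}\bra{0}\otimes\ket{0}\bra{0}+\ket{1}\bra{1}\otimes\ket{1}\bra{1}\right)+\tfrac{1}{4}\sum_{\mu=1,2}\ \sum_{s,t=\pm} st\,P^\mu_s\otimes P^\mu_t .
\]
This has the form~\eqref{time_separable_def}, with real weights summing to $1$, four of which equal $-\tfrac{1}{4}$. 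Its $t_0$-marginal is $\mathbb{I}/2\succeq 0$, yet this PDM reaches $\mathcal{B}=2\sqrt{2}$. For such signed decompositions your Step 3 only gives $\mathcal{B}\le 2\sum_k|\mu_k|=6$.

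More generally, rank-one product projectors span all Hermitian operators on $\mathcal{H}^0\otimes\mathcal{H}^1$ over $\mathbb{R}$. So with $\mu_k\in\mathbb{R}$, as Eq.~\eqref{time_separable_def} allows, every PDM, including $R^{(1)}_{01}$, is ``time separable'', and the statement as literally posed fails. The paper's proof has the same defect, since it imports $\mu_k\ge 0$ from Theorem~\ref{theorem_time_entanglement_R}. That theorem's argument breaks down for non-orthogonal decompositions in two places: the identification $\bra{\Phi_{k'}}R\ket{\Phi_{k'}}=\mu_{k'}$, and the step from marginal positivity to $\mu_k\ge 0$.

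The repair is to build $\mu_k\ge 0$ into the definition of time separability, i.e., to use the convex hull of product states. Separable PDMs are then automatically positive semidefinite, which makes Theorem~\ref{theorem_time_entanglement_R} immediate. Step 1 becomes just the definition, and your Steps 2 and 3 form a complete proof identical to the paper's.
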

The proof of this theorem is presented in Appendix~\ref{Appendix_details_temporal_nonclassic}. As in the spatial case, temporal entanglement is insufficient to violate the temporal CHSH inequality, as the examples below demonstrate. Furthermore, the demonstration in Appendix~\ref{Appendix_details_temporal_nonclassic} shows completely analogous arguments to the spatial case, therefore, it is valid for any temporal Bell inequality.

In the following, we show a simple example of PDM, namely $R_{01}^{(1)}$, where $f(R_{01}^{(1)})>0$, presents temporal entanglement, maximally violates a temporal temporal CHSH, but it does not violate NSIT for any choice of measurement and therefore satisfies macro realism. This PDM represents the case where the system is a qubit with a maximally mixed initial state $\rho = \mathbb{I}/2$, and the CPTP map for the evolution between $t_0$ and $t_1$ is the trivial evolution $\mathcal{C} = \mathcal{I}$. For this case, the Choi matrix of the channel is equal to the SWAP operator, and from Eq.~\eqref{R_2_steps}, we have
\begin{eqnarray}
    R_{01}^{(1)} = \frac{1}{2}\begin{pmatrix}
        1 & 0 & 0 & 0  \\
        0 & 0 & 1 & 0  \\
        0 & 1 & 0 & 0  \\
        0 & 0 & 0 & 1
        \end{pmatrix}. \label{two_time_PDM_example1}
\end{eqnarray}
This PDM is one of the maximally entangled pseudo density operators, which is a basis of spatiotemporal states analogous to the Bell basis~\cite{Zhao_Geometryofquantumcorrelations_PhysRevA.98.052312}; these spatiotemporal states are also an indispensable resource for teleportation in time~\cite{Marletto_temporal_teleportation}. This matrix has eigenvalues $\{ -1/2,~1/2,~1/2,~1/2 \}$, and therefore $f(R_{01})>0$. Moreover, the eigenvectors of $R_{01}^{(1)}$ are 
\begin{align}
   & \Bigg\{ \frac{1}{\sqrt{2}}(\ket{0}_0\otimes\ket{1}_1- \ket{1}_0\otimes\ket{0}_1),~\ket{1}_0\otimes \ket{1}_1, \nonumber \\
   & \frac{1}{\sqrt{2}}(\ket{0}_0\otimes\ket{1}_1 + \ket{1}_0\otimes\ket{0}_1),~\ket{0}_0 \otimes \ket{0}_1 \Bigg\} \nonumber
\end{align}
where $\ket{0(1)}_{0(1)} \in \mathcal{H}^{0(1)}$ are the eigenvectors of the Pauli matrix $\sigma_z$, with eigenvalues $+1(-1)$ in their respective Hilbert spaces. This indicates that $R_{01}^{(1)}$ is a time entangled state, since it has two Bell states as its eigenvectors, and it can be shown that there is no basis in which the PDM can be described as in Eq.~\eqref{time_separable_def}, according to Theorem~\ref{theorem_time_entanglement_R}. This PDM represents the situation analyzed in the temporal CHSH inequality~\eqref{time_CHSH}, and we can use Eq.~\eqref{average_operators_R} to obtain the proper correlation terms of Eq.~\eqref{time_CHSH_term_quantum} using the PDM according to $\langle A_i B_j \rangle = \text{Tr} [R_{01}^{(1)} A_i \otimes B_j]$. With the correct choice of operators to be measured (given in Eq.~\eqref{CHSH_max_violation}), the temporal CHSH inequality can be maximally violated. However, since the initial state is $\mathbb{I}/2$, we have $\mathbf{D}_{01}(i,j) = 0$ for any choice of projective measurements (this is a direct consequence of Eq.~\eqref{D_definition}); hence, from Theorem~\ref{theorem_NSIT_D}, there is no violation of NSIT -- and equivalently MR -- for any measurement choice. 

The fact that $R_{01}^{(1)}$ can maximally violate a time Bell inequality while satisfying MR highlights a clear difference between these two temporal non-classicalities. Bell inequalities and MR both suppose latent variables that predetermine the value of observables, independent of the disturbance caused by measurements. This is equivalent to having a joint probability distribution for the outcomes of all possible observables involved in the experiments, due to Fine's theorem~\cite{Fine_PhysRevLett.48.291}. The main difference is in the fact that, while in the standard MR scenario we have only one choice of observable for each instant of time~\cite{A_J_Leggett_2002,Leggett-Garg_original,Clemente_2015,Clemente2016,Bera_2018,Emary_2013,maroney2014quantumvsmacrorealism,Vitagliano_2023}, in the time Bell inequality scenario we have different choices of observables (possibly not compatible) at each instant of time~\cite{brukner2004quantumentanglementtime,Fritz_2010,Emary_2013}. This more comprehensive choice of observables at each instant imposes stronger constraints on the existence of the joint probability distribution for the outcomes of all possible observables. For example, for this case, the MR condition implies the existence of a joint probability distribution $P_{01}(a,b)$ (where $a(b)$ is the outcome of the observable $A(B)$ to be measured at instant $t_{0(1)}$), while the assumptions to obtain the time Bell inequality implies the existence of $P_{01}(a_1,a_2,b_1,b_2)$ (where $a_{1(2)}$ and $b_{1(2)}$ are the outcomes of $A_{1(2)}$ and $B_{1(2)}$). The existence of this last probability distribution is more demanding when we consider that the observable choices at the same time instants are not compatible; therefore, in this example, one kind of classicality is allowed, but not the other. 
Importantly, the PDM $R_{01}^{(1)}$ encompasses all correlations in the scenarios described above, since it does not depend on the choice of observables to be measured, and the main difference between spatial and temporal correlations is manifest in the negativity of $R_{01}^{(1)}$. 



To exemplify a situation in which the PDM presents temporal entanglement but does not violate the temporal CHSH, we have the following case. Consider an initial state that being maximally mixed $\rho = \mathbb{I}/2$ and evolving under a depolarizing channel.
\begin{equation}
    \mathcal{E}_\eta (\rho) = (1-\eta)\rho + \frac{\eta}{2} ~\mathbb{I}_2, \label{depolariaing_channel}
\end{equation}
where $\eta \in [0,1]$ is the depolarizing parameter.
Using Eq.~\eqref{R_2_steps}, we obtain the following PDM 
\begin{equation}
    R_{01_\eta} = \frac{1}{2} \begin{pmatrix}
        1- \frac{\eta}{2} & 0 & 0 & 0 \\
        0 &  \frac{\eta}{2} & 1-\eta & 0 \\
        0 & 1-\eta &  \frac{\eta}{2} & 0 \\
        0 & 0 & 0 & 1 - \frac{\eta}{2}
    \end{pmatrix}. \label{depolarizing_PDM}
\end{equation}
 To obtain the violations of Eq.~\eqref{time_CHSH}, we again write $A_{1(2)} = \vec{\sigma}\cdot \vec{a}_{1(2)}$ and $B_{1(2)} = \vec{\sigma}\cdot \vec{b}_{1(2)}$, with $|\vec{a}_{1(2)}| = |\vec{b}_{1(2)}| = 1$, and rewrite the correlations as $\langle A_i B_j \rangle = \vec{a}_i^\top T \vec{b}_j$, where $T_{ij} : = \text{Tr} [R^{\text{hc}} \cdot \sigma_i^0 \otimes \sigma_j ^t]$ (with $\sigma_{i}^{0}$ representing the Pauli matrix $i$ acting at time $0$, $\sigma_{j}^{t}$ representing the Pauli matrix $j$ acting at time $t$, and $i,j \in \{x,y,z\}$). This choice of representation gives the following optimal value for the Bell parameter $\mathcal{B}$:
\begin{equation}
    \mathcal{B}_{\text{max}} = 2\sqrt{\lambda_1 + \lambda_2}, \label{optimal_B_CHSH}
\end{equation}
where $\lambda_1$ and $\lambda_2$ are the two largest eigenvalues of the matrix $T^\top T$. This method is used in~\cite{stamatova2025complexheatcapacitywitness}, and the proof of the equation above is completely equivalent to the well-known Horodecki criterion~\cite{Horodecki_criterion}. 

With Eq.~\eqref{optimal_B_CHSH} in hand, we can discuss interesting features of this case. Again, Theorem~\ref{theorem_NSIT_D} states that MR is always satisfied, as a consequence of Eq.~\eqref{D_definition} and since the initial state is maximally mixed. From Eq.~\eqref{optimal_B_CHSH}, we see that the maximal violation of the temporal CHSH depends on the depolarizing parameter $\eta$: $\mathcal{B}_{\text{max}} = 2\sqrt{2}(1-\eta)$, and therefore the violation occurs only for $\eta < 1 - \sqrt{2}/2$. Moreover, we can compute the negativity of $R_{01_\eta}$ and obtain $$f(R_{01_\eta}) = \begin{cases}
    \frac{1}{2}(2- 3\eta) & \text{if } \eta \in [0,2/3), \\
    0 & \text{if } \eta \in [2/3,1].
\end{cases}$$
Hence, in the interval $\eta \in [0, 1-\sqrt{2}/2)$, this PDM presents the same temporal nonclassicality as the PDM of Eq.~\eqref{two_time_PDM_example1}; that is, it respects the MR for any measurement, violates the temporal CHSH, and has nonvanishing negativity. In the interval $\eta \in [1-\sqrt{2}/2,2/3)$ it has nonvanishing negativity and, therefore, temporal entanglement, but it does not violate the temporal CHSH, while respecting MR. Finally, in the interval $\eta \in [2/3,1]$, it presents no kind of temporal nonclassicality that we are considering. Importantly, note that, since the dephasing channel $\mathcal{E}_\eta$ does not modify a maximally mixed state, the state of the system does not change after the action of $\mathcal{E}_\eta$; however, the PDM depends on the channel parameter $\eta$. This occurs because the PDM carries information not only on the density matrices at different time instants, but also on the channel responsible for the evolution. Consequently, the channel determines how evolution influences measurement effects and, therefore, determines the presence of temporal nonclassicality.

In contrast to cases where temporal CHSH can be violated while MR is not, in Appendix~\ref{Appendix_details_temporal_nonclassic}, Subsection~\ref{subsection_violationMR}, we present a case where CHSH is not violated, yet MR is. In this case, there are no latent variables independent of the disturbance caused by the measurements; however, the temporal CHSH is not violated. This supports the distinction between the concepts of MR and temporal CHSH. In Fig.~\ref{fig_sets_hierarchy}, we present the relation between the different notions of temporal nonclassicalities that we investigate in this manuscript, for two-time steps. Notice that we consider the possibility of temporal Bell nonlocality without PDM negativity, since one can easily construct a positive semi-definite PDM that violates the temporal CHSH (Eq.~\eqref{time_CHSH}) by choosing the PDM to be analogous to an entangled density matrix that violates CHSH. However, the question of whether one can construct a positive semi-definite PDM with temporal causality (that is, a single quantum state evolving under a CPTP map) capable of violating temporal CHSH remains unanswered and can be a theme for future research, since the whole set of density matrices cannot be generated by temporal causality~\cite{Causal_classification_of_spatiotemporal_cor_Song}.

\begin{figure}
    \centering
    \includegraphics[width=1.0\linewidth]{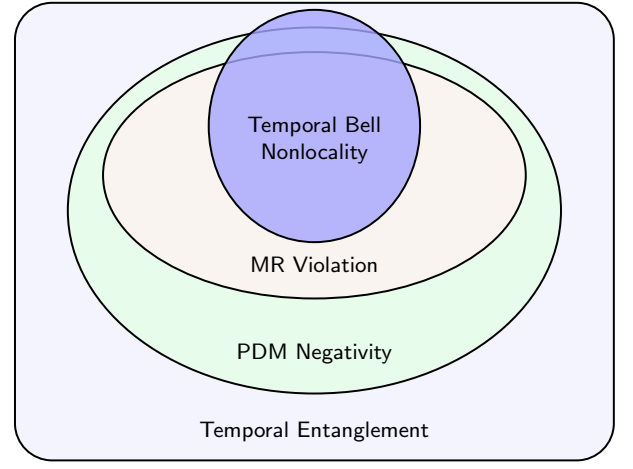}
    \caption{Relation between different notions of temporal correlations for two-time steps.}
    \label{fig_sets_hierarchy}
\end{figure}

For a final example, consider the case studied in~\cite{stamatova2025complexheatcapacitywitness}, where it is shown that the PDM elements can be directly identified with the imaginary part of the dynamical heat capacity. In particular, the authors consider an initial thermal state and probe the system through a weak periodic perturbation, which allows the definition of the complex heat capacity $C(\omega) = \frac{\partial \langle H \rangle_\omega}{\partial T}$, where the expectation value is taken in the steady state induced by a sinusoidal temperature modulation of frequency $\omega$. Crucially, they show that the dissipative part of the response function can be used to construct the off-diagonal terms of the two-time PDM associated with the thermal state and its driven dynamics and that the appearance of an imaginary component in the heat capacity is equivalent to the presence of negativity of the PDM. 

This has immediate relevance for our framework. As shown earlier in Theorem~\ref{theorem_time_entanglement_R}, any negativity of the PDM $R_{01}$ certifies temporal entanglement, and by Lemma~\ref{lemma_R_positivity_MR}, such negativity is necessary for violations of NSIT depending on the measurement choice. Since Ref.~\cite{stamatova2025complexheatcapacitywitness} demonstrates that 
\begin{equation*}
    \text{Im}(C(\omega)) = 0 \Leftrightarrow R_{01} \succeq 0,
\end{equation*}
the imaginary heat capacity becomes an operationally accessible thermodynamic witness of PDM negativity, while it can certify a measurement-independent macroscopic realist description if it vanishes. In contrast,
\begin{equation*}
    \text{Im}(C(\omega)) \neq 0 \Leftrightarrow f(R_{01}) > 0,
\end{equation*}
which implies that the system exhibits temporal entanglement. Importantly, this equivalence holds even when the measurements used to define NSIT are not explicitly performed. 

This example strengthens the conceptual link between non-equilibrium quantum thermodynamics and the temporal correlation hierarchy developed in this work. Temporal entanglement is not merely a mathematical artifact but a physically measurable feature. At the same time, because Im$(C(\omega))$ can be non-zero even when NSIT holds for all projective measurements (as happens for the maximally mixed initial state discussed earlier), this example further illustrates that thermodynamic witnesses of temporal non-classicality are strictly stronger than NSIT violations, and therefore probe a distinct facet of spatiotemporal quantum correlations.

\subsection{Discussion for three-time steps}
\label{different_temporal_nonclass_three-time}

We analyze the simple situation for three-time steps where the system is a qubit and the system is initially a maximally mixed state $\rho = \mathbb{I}_2/2$ that evolves with two trivial evolutions $\mathcal{E}_1 = \mathcal{E}_2 = \mathcal{I}$. Using Eq.~\eqref{R_m_steps} we obtain
\begin{equation}
       R_{012} = \left(
\begin{array}{cccccccc}
 \frac{1}{2} & 0 & 0 & 0 & 0 & 0 & 0 & 0 \\
 0 & 0 & \frac{1}{4} & 0 & \frac{1}{4} & 0 & 0 & 0 \\
 0 & \frac{1}{4} & 0 & 0 & \frac{1}{4} & 0 & 0 & 0 \\
 0 & 0 & 0 & 0 & 0 & \frac{1}{4} & \frac{1}{4} & 0 \\
 0 & \frac{1}{4} & \frac{1}{4} & 0 & 0 & 0 & 0 & 0 \\
 0 & 0 & 0 & \frac{1}{4} & 0 & 0 & \frac{1}{4} & 0 \\
 0 & 0 & 0 & \frac{1}{4} & 0 & \frac{1}{4} & 0 & 0 \\
 0 & 0 & 0 & 0 & 0 & 0 & 0 & \frac{1}{2} \\
\end{array}
\right). \label{3_time_steps_identity_PDM}
\end{equation}
We can compare the scenario that this PDM describes to a well-known scenario where a qubit violates the LGI. Let a qubit with generic initial state $\rho$ evolve under the unitary $U = e^{-i H t}$ with the Hamiltonian given by $H = \frac{1}{2} \Omega \sigma_x,$ where $\Omega >0$. Let the observable measured in pairs for the three time steps be $Q = \sigma_z$; then, it can be shown (see Refs.~\cite{Fritz_2010,Emary_2013,Vitagliano_2023}) that the correlation term of~\eqref{Legget-Garg_ineq} is
\begin{align}
    \mathcal{K}_3 : = & \langle Q^0 Q^1 \rangle + \langle Q^2 Q^3 \rangle - \langle Q^0 Q^2 \rangle \nonumber \\
    = & 2 \cos(\Omega t) - \cos (2 \Omega t). \label{LGI_bound_example}
\end{align}
Note that the term above does not depend on the initial state of the system; therefore, it is valid for $\rho = \mathbb{I}_2/2$. If we consider $t=0$ in the equation above, the evolution is trivial and we are in the same scenario for the PDM $R_{012}$ from Eq.~\eqref{3_time_steps_identity_PDM}; for this case, we have $\mathcal{K}_3 = 1$ and no violation of the LGI. However, this PDM has negativity $f(R_{012})=2$, showing that the negativity of the PDM is not sufficient for the violation of LGI, although necessary (Corollary~\ref{corollary_LGI_negativity}). Additionally, we can compute the NSIT violation quantifier, from Eq.~\eqref{NSIT_quantifier_3steps}, using the spatiotemporal Born rule (Eq.~\eqref{Q_definition_R}) and the Lüdders von Neumann rule (Eq.~\eqref{Ludders_nV_3points}), for the initial state and unitary evolution described above with the same projective measurement of the observable $\sigma_z$ in the three time steps, and obtain 
\begin{equation}
    \mathcal{N}_{012} = \sin^2(\Omega t).
\end{equation}
For $t=0$, we are again in the situation described by the PDM $R_{012}$, and we obtain $\mathcal{N}_{012} = 0$, confirming that the MR is satisfied (recalling that not violating the LGIs is not a proof that the MR is satisfied). This confirms that, as we found in the case of two-time steps, the negativity of the PDM is not sufficient for the violation of the MR. Moreover, if we consider the entire cycle of unitary evolution, we have the plot shown in Fig.~\ref{LGI_NSIT_example_plot}. This clearly shows that, using Theorem~\ref{non-signaling_D_theorem_3steps} and the NSIT violation quantifier, we visualize a broader region of MR violation (the nonzero regions of $\mathcal{N}_{012}$) than the regions confirmed by the LGI violation (Eq.~\eqref{LGI_bound_example} and~\eqref{Legget-Garg_ineq}), which is the orange shaded area in the plot.

\begin{figure}
    \centering
    \includegraphics[width=1.0\linewidth]{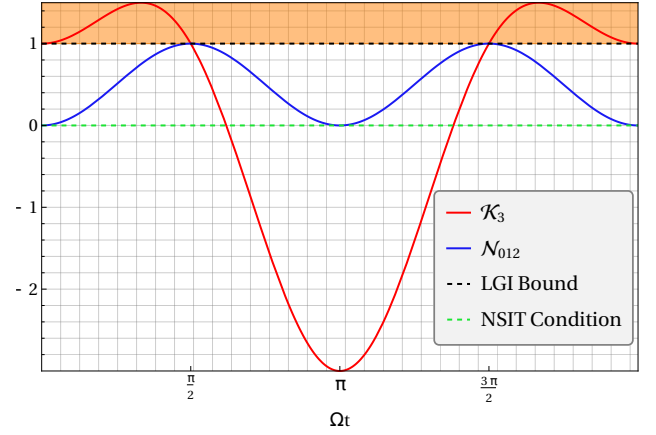}
    \caption{LGI correlations term $\mathcal{K}_3$, NSIT violation quantifier $\mathcal{N}_{012},$ LGI bound ($\mathcal{K}_3=1$, with the orange shaded area representing the region of LGI violation), and the NSIT condition ($\mathcal{N}_{012 }=0$), in function of $\Omega t$.}
    \label{LGI_NSIT_example_plot}
\end{figure}

\section{Discussion and Conclusions} \label{sec_conclusions}

Our main results reveal a direct relation between the nonvanishing of the disturbance term of the temporal a quasiprobability distribution and the violation of macroscopic realism, namely Theorems~\ref{theorem_NSIT_D} and~\ref{non-signaling_D_theorem_3steps}. In contrast to the two-time steps case, in the three-time steps case, the quasiprobability distribution does not coincide with the Margenau-Hill distribution, and the disturbance term is divided into the sum of smaller terms, where the vanishing of each of the independent terms corresponds to one of the NSIT conditions for three-time steps, the vanishing of each of them individually is equivalent to NSIT; furthermore, we used the internal relation between the disturbance subterms to show that the vanishing of the full disturbance term implies NSIT. Remarkably, this disturbance term is the deviation of the distribution proposed in Eq.~\eqref{Q_definition_3steps} from the Lüdders von Neumann probability (see Lemma~\ref{lemma_quasiprob_3steps}), which can be obtained by a generalized Born's rule in the corresponding PDM (Theorem~\ref{theorem_R123_Q123}), in complete analogy with the two-time step result from Ref.~\cite{fullwood2025}. Due to this successful analogy, we believe that the same pattern happens to distributions representing more steps in time, in which we can properly define disturbance terms related to NSIT violations. Above all, these results certify the classical status of a spatiotemporal Born rule that is identical to the sequential measurement probability, since in this case, the MR is satisfied. Therefore, we believe that this can lead to the study of this distribution behavior in physical situations where the transition from quantum to classical is explored, possibly, using the NSIT violation quantifier proposals as a certification of nonclassicality. 

Moreover, the relation between MR and PDMs, described by Lemmas~\ref{lemma_R_positivity_MR} and~\ref{lemma_R_positivity_MR_3steps}, can lead to the search for the vanishing of the PDMs negativity as a measurement-independent certification of MR in investigations of quantum to classical transitions. For example, further research can study the negativity of the PDMs, or the application of Thorems~\ref{theorem_NSIT_D} and~\ref{non-signaling_D_lemma_3steps} (and the MR violation quantifiers), in systems restrained by coarse-grained measurements, considered in~\cite{Kofler_CG_PhysRevLett.99.180403,Conditions_violating_MR_2008_PhysRevLett.101.090403,Geong_2014_PhysRevLett.112.010402,harmonic_oscilator_CG_PhysRevLett.120.210402,bibak2025classicallimitquantummechanics,Frowis_2018,bose2015uncoveringnonclassicalityschrodingercoherent}, or in quantum clock systems~\cite{Burdroni_PhysRevResearch.3.033051,Woods_PRXQuantum.3.010319}, where the presence of classicality was related to the non-violation of LGIs. As already stated, our results related the PDM directly to the violation of MR, via the spatiotemporal Born rule; this fact can facilitate the computations in search for violations of MR, but can also indicate further general results. Since the Margenau-Hill distribution is the real part of the more general Kirkwood-Dirac distribution~\cite{Arvidsson_Shukur_2024,Lostaglio2023kirkwooddirac}, further research could relate the MR violation, using the results presented here, to the properties of such a distribution. For example, this can be explored using the temporal distribution framework developed in~\cite{jia2026temporalkirkwooddiracquasiprobabilitydistribution}, which relates the temporal Kirkwood-Dirac distribution to PDMs and other proposals of spatiotemporal states. 

We related our proposal of definition of temporal entanglement with the negativity of PDM by showing that every PDM with non-vanishing negativity presents such temporal entanglement (Theorem~\ref{theorem_time_entanglement_R}), which is in accordance with the use of the term temporal entanglement as related to the presence of PDM negativity in previous works~\cite{Jia_2023,Space-time_marginal_problem_Jia_2023,Wu_2025_mutualinformationintime,Marletto_temporal_teleportation,stamatova2025complexheatcapacitywitness,Zhao_Geometryofquantumcorrelations_PhysRevA.98.052312,fitzsimons2013quantumcorrelationsimplycausation}. Another meaningful result of temporal entanglement is the fact that it is necessary to violate the temporal CHSH inequality (Theorem~\ref{theorem_time_entanglement_CHSH}), which strengthens the analogy between this definition and these forms of temporal nonclassicality, which is pictorically represented in Fig.~\ref{fig_sets_hierarchy} and may be useful for future fundamental exploration and applications of temporal correlations. To complete the picture examined of temporal correlations, we believe our discussion of simple examples helps to distinguish the concepts of violations of MR and temporal Bell nonlocalities, which has been a long-standing debate~\cite{Emary_2013,Vitagliano_2023,maroney2014quantumvsmacrorealism,Fritz_2010}, as their certification could easily be contrasted with the use of the PDM negativity and with the use of Theorem~\ref{theorem_NSIT_D}. We can extend this hierarchy in future works by considering the results of~\cite{Ku_2018}, which compares temporal steering to the violation of the MR and the negativity of the PDM. It is important to indicate that a similar hierarchy was found in this same reference~\cite{Ku_2018}, obtaining a result analogous to Lemma~\ref{lemma_R_positivity_MR}. Furthermore, the relationship between PDMs negativity and MR promises applications in quantum thermodynamics and nonequilibrium physics via the relationship between PDMs and the imaginarity of heat capacity~\cite{stamatova2025complexheatcapacitywitness}. Finally, we present that the method for certifying MR violation with Theorem~\ref{non-signaling_D_theorem_3steps} and the corresponding MR violation quantifier is more efficient than the violation of the Leggett-Garg inequality for a well-known case, indicating that this method can be utilized to certify MR violation in new scenarios. This shows a relatively clear and simple operational way to verify MR in a necessary and sufficient manner with the use of the spatiotemporal Born rule.

\begin{acknowledgments}
The authors acknowledge helpful discussions with Leandro Morais, Santiago Zamora, Bárbara Amaral, and Alisson Tezzin. N.E.C. and R.C. acknowledges the support of the Financiadora de Estudos e Projetos (Grant No. 1699/24 IIF-FINEP), the Simons Foundation (Grant No. 1023171, R.C.), the Brazilian National Council for Scientific and Technological Development (CNPq, Grants No.403181/2024-0, and 301687/2025-0). LCC acknowledges support from CNPq through grant 308065/2022-0, the National Institute of Science and Technology for Applied Quantum Computing through CNPq grant 408884/2024-0, FAPEG through grant 202510267001843, and FAPESP through grant 2025/23726-4.
\end{acknowledgments}

\bibliography{biblio}

\begin{thebibliography}{112}%
\makeatletter
\providecommand \@ifxundefined [1]{%
 \@ifx{#1\undefined}
}%
\providecommand \@ifnum [1]{%
 \ifnum #1\expandafter \@firstoftwo
 \else \expandafter \@secondoftwo
 \fi
}%
\providecommand \@ifx [1]{%
 \ifx #1\expandafter \@firstoftwo
 \else \expandafter \@secondoftwo
 \fi
}%
\providecommand \natexlab [1]{#1}%
\providecommand \enquote  [1]{``#1''}%
\providecommand \bibnamefont  [1]{#1}%
\providecommand \bibfnamefont [1]{#1}%
\providecommand \citenamefont [1]{#1}%
\providecommand \href@noop [0]{\@secondoftwo}%
\providecommand \href [0]{\begingroup \@sanitize@url \@href}%
\providecommand \@href[1]{\@@startlink{#1}\@@href}%
\providecommand \@@href[1]{\endgroup#1\@@endlink}%
\providecommand \@sanitize@url [0]{\catcode `\\12\catcode `\$12\catcode `\&12\catcode `\#12\catcode `\^12\catcode `\_12\catcode `\%12\relax}%
\providecommand \@@startlink[1]{}%
\providecommand \@@endlink[0]{}%
\providecommand \url  [0]{\begingroup\@sanitize@url \@url }%
\providecommand \@url [1]{\endgroup\@href {#1}{\urlprefix }}%
\providecommand \urlprefix  [0]{URL }%
\providecommand \Eprint [0]{\href }%
\providecommand \doibase [0]{https://doi.org/}%
\providecommand \selectlanguage [0]{\@gobble}%
\providecommand \bibinfo  [0]{\@secondoftwo}%
\providecommand \bibfield  [0]{\@secondoftwo}%
\providecommand \translation [1]{[#1]}%
\providecommand \BibitemOpen [0]{}%
\providecommand \bibitemStop [0]{}%
\providecommand \bibitemNoStop [0]{.\EOS\space}%
\providecommand \EOS [0]{\spacefactor3000\relax}%
\providecommand \BibitemShut  [1]{\csname bibitem#1\endcsname}%
\let\auto@bib@innerbib\@empty
\bibitem [{\citenamefont {Horodecki}\ \emph {et~al.}(2009)\citenamefont {Horodecki}, \citenamefont {Horodecki}, \citenamefont {Horodecki},\ and\ \citenamefont {Horodecki}}]{Horodecki_2009_entanglement}%
  \BibitemOpen
  \bibfield  {author} {\bibinfo {author} {\bibfnamefont {R.}~\bibnamefont {Horodecki}}, \bibinfo {author} {\bibfnamefont {P.}~\bibnamefont {Horodecki}}, \bibinfo {author} {\bibfnamefont {M.}~\bibnamefont {Horodecki}},\ and\ \bibinfo {author} {\bibfnamefont {K.}~\bibnamefont {Horodecki}},\ }\bibfield  {title} {\bibinfo {title} {Quantum entanglement},\ }\href {https://doi.org/10.1103/revmodphys.81.865} {\bibfield  {journal} {\bibinfo  {journal} {Reviews of Modern Physics}\ }\textbf {\bibinfo {volume} {81}},\ \bibinfo {pages} {865–942} (\bibinfo {year} {2009})}\BibitemShut {NoStop}%
\bibitem [{\citenamefont {Brunner}\ \emph {et~al.}(2014)\citenamefont {Brunner}, \citenamefont {Cavalcanti}, \citenamefont {Pironio}, \citenamefont {Scarani},\ and\ \citenamefont {Wehner}}]{Bell_nonlocality_RevModPhys.86.419}%
  \BibitemOpen
  \bibfield  {author} {\bibinfo {author} {\bibfnamefont {N.}~\bibnamefont {Brunner}}, \bibinfo {author} {\bibfnamefont {D.}~\bibnamefont {Cavalcanti}}, \bibinfo {author} {\bibfnamefont {S.}~\bibnamefont {Pironio}}, \bibinfo {author} {\bibfnamefont {V.}~\bibnamefont {Scarani}},\ and\ \bibinfo {author} {\bibfnamefont {S.}~\bibnamefont {Wehner}},\ }\bibfield  {title} {\bibinfo {title} {Bell nonlocality},\ }\href {https://doi.org/10.1103/RevModPhys.86.419} {\bibfield  {journal} {\bibinfo  {journal} {Rev. Mod. Phys.}\ }\textbf {\bibinfo {volume} {86}},\ \bibinfo {pages} {419} (\bibinfo {year} {2014})}\BibitemShut {NoStop}%
\bibitem [{\citenamefont {Uola}\ \emph {et~al.}(2020)\citenamefont {Uola}, \citenamefont {Costa}, \citenamefont {Nguyen},\ and\ \citenamefont {Gühne}}]{Uola_2020}%
  \BibitemOpen
  \bibfield  {author} {\bibinfo {author} {\bibfnamefont {R.}~\bibnamefont {Uola}}, \bibinfo {author} {\bibfnamefont {A.~C.}\ \bibnamefont {Costa}}, \bibinfo {author} {\bibfnamefont {H.~C.}\ \bibnamefont {Nguyen}},\ and\ \bibinfo {author} {\bibfnamefont {O.}~\bibnamefont {Gühne}},\ }\bibfield  {title} {\bibinfo {title} {Quantum steering},\ }\bibfield  {journal} {\bibinfo  {journal} {Reviews of Modern Physics}\ }\textbf {\bibinfo {volume} {92}},\ \href {https://doi.org/10.1103/revmodphys.92.015001} {10.1103/revmodphys.92.015001} (\bibinfo {year} {2020})\BibitemShut {NoStop}%
\bibitem [{\citenamefont {Leggett}(2002)}]{A_J_Leggett_2002}%
  \BibitemOpen
  \bibfield  {author} {\bibinfo {author} {\bibfnamefont {A.~J.}\ \bibnamefont {Leggett}},\ }\bibfield  {title} {\bibinfo {title} {Testing the limits of quantum mechanics: motivation, state of play, prospects},\ }\href {https://doi.org/10.1088/0953-8984/14/15/201} {\bibfield  {journal} {\bibinfo  {journal} {Journal of Physics: Condensed Matter}\ }\textbf {\bibinfo {volume} {14}},\ \bibinfo {pages} {R415} (\bibinfo {year} {2002})}\BibitemShut {NoStop}%
\bibitem [{\citenamefont {Maroney}\ and\ \citenamefont {Timpson}(2014)}]{maroney2014quantumvsmacrorealism}%
  \BibitemOpen
  \bibfield  {author} {\bibinfo {author} {\bibfnamefont {O.~J.~E.}\ \bibnamefont {Maroney}}\ and\ \bibinfo {author} {\bibfnamefont {C.~G.}\ \bibnamefont {Timpson}},\ }\href {https://arxiv.org/abs/1412.6139} {\bibinfo {title} {Quantum- vs. macro- realism: What does the leggett-garg inequality actually test?}} (\bibinfo {year} {2014}),\ \Eprint {https://arxiv.org/abs/1412.6139} {arXiv:1412.6139 [quant-ph]} \BibitemShut {NoStop}%
\bibitem [{\citenamefont {Emary}\ \emph {et~al.}(2013)\citenamefont {Emary}, \citenamefont {Lambert},\ and\ \citenamefont {Nori}}]{Emary_2013}%
  \BibitemOpen
  \bibfield  {author} {\bibinfo {author} {\bibfnamefont {C.}~\bibnamefont {Emary}}, \bibinfo {author} {\bibfnamefont {N.}~\bibnamefont {Lambert}},\ and\ \bibinfo {author} {\bibfnamefont {F.}~\bibnamefont {Nori}},\ }\bibfield  {title} {\bibinfo {title} {Leggett–garg inequalities},\ }\href {https://doi.org/10.1088/0034-4885/77/1/016001} {\bibfield  {journal} {\bibinfo  {journal} {Reports on Progress in Physics}\ }\textbf {\bibinfo {volume} {77}},\ \bibinfo {pages} {016001} (\bibinfo {year} {2013})}\BibitemShut {NoStop}%
\bibitem [{\citenamefont {Vitagliano}\ and\ \citenamefont {Budroni}(2023)}]{Vitagliano_2023}%
  \BibitemOpen
  \bibfield  {author} {\bibinfo {author} {\bibfnamefont {G.}~\bibnamefont {Vitagliano}}\ and\ \bibinfo {author} {\bibfnamefont {C.}~\bibnamefont {Budroni}},\ }\bibfield  {title} {\bibinfo {title} {Leggett-garg macrorealism and temporal correlations},\ }\bibfield  {journal} {\bibinfo  {journal} {Physical Review A}\ }\textbf {\bibinfo {volume} {107}},\ \href {https://doi.org/10.1103/physreva.107.040101} {10.1103/physreva.107.040101} (\bibinfo {year} {2023})\BibitemShut {NoStop}%
\bibitem [{\citenamefont {Chen}\ and\ \citenamefont {Eisert}(2024)}]{semi-device_independent_temporal_corr_PhysRevLett.132.220201}%
  \BibitemOpen
  \bibfield  {author} {\bibinfo {author} {\bibfnamefont {S.-L.}\ \bibnamefont {Chen}}\ and\ \bibinfo {author} {\bibfnamefont {J.}~\bibnamefont {Eisert}},\ }\bibfield  {title} {\bibinfo {title} {Semi-device-independently characterizing quantum temporal correlations},\ }\href {https://doi.org/10.1103/PhysRevLett.132.220201} {\bibfield  {journal} {\bibinfo  {journal} {Phys. Rev. Lett.}\ }\textbf {\bibinfo {volume} {132}},\ \bibinfo {pages} {220201} (\bibinfo {year} {2024})}\BibitemShut {NoStop}%
\bibitem [{\citenamefont {Budroni}\ \emph {et~al.}(2013)\citenamefont {Budroni}, \citenamefont {Moroder}, \citenamefont {Kleinmann},\ and\ \citenamefont {G\"uhne}}]{bounding_temporal_corr_PhysRevLett.111.020403}%
  \BibitemOpen
  \bibfield  {author} {\bibinfo {author} {\bibfnamefont {C.}~\bibnamefont {Budroni}}, \bibinfo {author} {\bibfnamefont {T.}~\bibnamefont {Moroder}}, \bibinfo {author} {\bibfnamefont {M.}~\bibnamefont {Kleinmann}},\ and\ \bibinfo {author} {\bibfnamefont {O.}~\bibnamefont {G\"uhne}},\ }\bibfield  {title} {\bibinfo {title} {Bounding temporal quantum correlations},\ }\href {https://doi.org/10.1103/PhysRevLett.111.020403} {\bibfield  {journal} {\bibinfo  {journal} {Phys. Rev. Lett.}\ }\textbf {\bibinfo {volume} {111}},\ \bibinfo {pages} {020403} (\bibinfo {year} {2013})}\BibitemShut {NoStop}%
\bibitem [{\citenamefont {Brukner}\ \emph {et~al.}(2004)\citenamefont {Brukner}, \citenamefont {Taylor}, \citenamefont {Cheung},\ and\ \citenamefont {Vedral}}]{brukner2004quantumentanglementtime}%
  \BibitemOpen
  \bibfield  {author} {\bibinfo {author} {\bibfnamefont {C.}~\bibnamefont {Brukner}}, \bibinfo {author} {\bibfnamefont {S.}~\bibnamefont {Taylor}}, \bibinfo {author} {\bibfnamefont {S.}~\bibnamefont {Cheung}},\ and\ \bibinfo {author} {\bibfnamefont {V.}~\bibnamefont {Vedral}},\ }\href {https://arxiv.org/abs/quant-ph/0402127} {\bibinfo {title} {Quantum entanglement in time}} (\bibinfo {year} {2004}),\ \Eprint {https://arxiv.org/abs/quant-ph/0402127} {arXiv:quant-ph/0402127 [quant-ph]} \BibitemShut {NoStop}%
\bibitem [{\citenamefont {Allen}\ \emph {et~al.}(2017)\citenamefont {Allen}, \citenamefont {Barrett}, \citenamefont {Horsman}, \citenamefont {Lee},\ and\ \citenamefont {Spekkens}}]{Allen_PhysRevX.7.031021}%
  \BibitemOpen
  \bibfield  {author} {\bibinfo {author} {\bibfnamefont {J.-M.~A.}\ \bibnamefont {Allen}}, \bibinfo {author} {\bibfnamefont {J.}~\bibnamefont {Barrett}}, \bibinfo {author} {\bibfnamefont {D.~C.}\ \bibnamefont {Horsman}}, \bibinfo {author} {\bibfnamefont {C.~M.}\ \bibnamefont {Lee}},\ and\ \bibinfo {author} {\bibfnamefont {R.~W.}\ \bibnamefont {Spekkens}},\ }\bibfield  {title} {\bibinfo {title} {Quantum common causes and quantum causal models},\ }\href {https://doi.org/10.1103/PhysRevX.7.031021} {\bibfield  {journal} {\bibinfo  {journal} {Phys. Rev. X}\ }\textbf {\bibinfo {volume} {7}},\ \bibinfo {pages} {031021} (\bibinfo {year} {2017})}\BibitemShut {NoStop}%
\bibitem [{\citenamefont {Chaves}\ \emph {et~al.}(2015)\citenamefont {Chaves}, \citenamefont {Majenz},\ and\ \citenamefont {Gross}}]{chaves2015information}%
  \BibitemOpen
  \bibfield  {author} {\bibinfo {author} {\bibfnamefont {R.}~\bibnamefont {Chaves}}, \bibinfo {author} {\bibfnamefont {C.}~\bibnamefont {Majenz}},\ and\ \bibinfo {author} {\bibfnamefont {D.}~\bibnamefont {Gross}},\ }\bibfield  {title} {\bibinfo {title} {Information--theoretic implications of quantum causal structures},\ }\href@noop {} {\bibfield  {journal} {\bibinfo  {journal} {Nature communications}\ }\textbf {\bibinfo {volume} {6}},\ \bibinfo {pages} {5766} (\bibinfo {year} {2015})}\BibitemShut {NoStop}%
\bibitem [{\citenamefont {Barrett}\ \emph {et~al.}(2020)\citenamefont {Barrett}, \citenamefont {Lorenz},\ and\ \citenamefont {Oreshkov}}]{barrett2020quantumcausalmodels}%
  \BibitemOpen
  \bibfield  {author} {\bibinfo {author} {\bibfnamefont {J.}~\bibnamefont {Barrett}}, \bibinfo {author} {\bibfnamefont {R.}~\bibnamefont {Lorenz}},\ and\ \bibinfo {author} {\bibfnamefont {O.}~\bibnamefont {Oreshkov}},\ }\href {https://arxiv.org/abs/1906.10726} {\bibinfo {title} {Quantum causal models}} (\bibinfo {year} {2020}),\ \Eprint {https://arxiv.org/abs/1906.10726} {arXiv:1906.10726 [quant-ph]} \BibitemShut {NoStop}%
\bibitem [{\citenamefont {Hutter}\ \emph {et~al.}(2023)\citenamefont {Hutter}, \citenamefont {Chaves}, \citenamefont {Nery}, \citenamefont {Moreno},\ and\ \citenamefont {Brod}}]{hutter2023quantifying}%
  \BibitemOpen
  \bibfield  {author} {\bibinfo {author} {\bibfnamefont {L.}~\bibnamefont {Hutter}}, \bibinfo {author} {\bibfnamefont {R.}~\bibnamefont {Chaves}}, \bibinfo {author} {\bibfnamefont {R.~V.}\ \bibnamefont {Nery}}, \bibinfo {author} {\bibfnamefont {G.}~\bibnamefont {Moreno}},\ and\ \bibinfo {author} {\bibfnamefont {D.~J.}\ \bibnamefont {Brod}},\ }\bibfield  {title} {\bibinfo {title} {Quantifying quantum causal influences},\ }\href@noop {} {\bibfield  {journal} {\bibinfo  {journal} {Physical Review A}\ }\textbf {\bibinfo {volume} {108}},\ \bibinfo {pages} {022222} (\bibinfo {year} {2023})}\BibitemShut {NoStop}%
\bibitem [{\citenamefont {Oreshkov}\ \emph {et~al.}(2012)\citenamefont {Oreshkov}, \citenamefont {Costa},\ and\ \citenamefont {Brukner}}]{Oreshkov_2012}%
  \BibitemOpen
  \bibfield  {author} {\bibinfo {author} {\bibfnamefont {O.}~\bibnamefont {Oreshkov}}, \bibinfo {author} {\bibfnamefont {F.}~\bibnamefont {Costa}},\ and\ \bibinfo {author} {\bibfnamefont {C.}~\bibnamefont {Brukner}},\ }\bibfield  {title} {\bibinfo {title} {Quantum correlations with no causal order},\ }\bibfield  {journal} {\bibinfo  {journal} {Nature Communications}\ }\textbf {\bibinfo {volume} {3}},\ \href {https://doi.org/10.1038/ncomms2076} {10.1038/ncomms2076} (\bibinfo {year} {2012})\BibitemShut {NoStop}%
\bibitem [{\citenamefont {Oreshkov}\ and\ \citenamefont {Giarmatzi}(2016)}]{Oreshkov_2016}%
  \BibitemOpen
  \bibfield  {author} {\bibinfo {author} {\bibfnamefont {O.}~\bibnamefont {Oreshkov}}\ and\ \bibinfo {author} {\bibfnamefont {C.}~\bibnamefont {Giarmatzi}},\ }\bibfield  {title} {\bibinfo {title} {Causal and causally separable processes},\ }\href {https://doi.org/10.1088/1367-2630/18/9/093020} {\bibfield  {journal} {\bibinfo  {journal} {New Journal of Physics}\ }\textbf {\bibinfo {volume} {18}},\ \bibinfo {pages} {093020} (\bibinfo {year} {2016})}\BibitemShut {NoStop}%
\bibitem [{\citenamefont {Castro-Ruiz}\ \emph {et~al.}(2018)\citenamefont {Castro-Ruiz}, \citenamefont {Giacomini},\ and\ \citenamefont {Brukner}}]{Castro_Ruiz_2018}%
  \BibitemOpen
  \bibfield  {author} {\bibinfo {author} {\bibfnamefont {E.}~\bibnamefont {Castro-Ruiz}}, \bibinfo {author} {\bibfnamefont {F.}~\bibnamefont {Giacomini}},\ and\ \bibinfo {author} {\bibfnamefont {C.}~\bibnamefont {Brukner}},\ }\bibfield  {title} {\bibinfo {title} {Dynamics of quantum causal structures},\ }\bibfield  {journal} {\bibinfo  {journal} {Physical Review X}\ }\textbf {\bibinfo {volume} {8}},\ \href {https://doi.org/10.1103/physrevx.8.011047} {10.1103/physrevx.8.011047} (\bibinfo {year} {2018})\BibitemShut {NoStop}%
\bibitem [{\citenamefont {Mal}\ and\ \citenamefont {Majumdar}(2016)}]{MAL20162265}%
  \BibitemOpen
  \bibfield  {author} {\bibinfo {author} {\bibfnamefont {S.}~\bibnamefont {Mal}}\ and\ \bibinfo {author} {\bibfnamefont {A.}~\bibnamefont {Majumdar}},\ }\bibfield  {title} {\bibinfo {title} {Optimal violation of the leggett–garg inequality for arbitrary spin and emergence of classicality through unsharp measurements},\ }\href {https://doi.org/https://doi.org/10.1016/j.physleta.2016.05.020} {\bibfield  {journal} {\bibinfo  {journal} {Physics Letters A}\ }\textbf {\bibinfo {volume} {380}},\ \bibinfo {pages} {2265} (\bibinfo {year} {2016})}\BibitemShut {NoStop}%
\bibitem [{\citenamefont {Calarco}\ and\ \citenamefont {Onofrio}(1995)}]{Calarco_1995}%
  \BibitemOpen
  \bibfield  {author} {\bibinfo {author} {\bibfnamefont {T.}~\bibnamefont {Calarco}}\ and\ \bibinfo {author} {\bibfnamefont {R.}~\bibnamefont {Onofrio}},\ }\bibfield  {title} {\bibinfo {title} {Optimal measurements of magnetic flux in superconducting circuits and macroscopic quantum mechanics},\ }\href {https://doi.org/10.1016/0375-9601(94)01020-u} {\bibfield  {journal} {\bibinfo  {journal} {Physics Letters A}\ }\textbf {\bibinfo {volume} {198}},\ \bibinfo {pages} {279–285} (\bibinfo {year} {1995})}\BibitemShut {NoStop}%
\bibitem [{\citenamefont {Kofler}\ and\ \citenamefont {Brukner}(2008)}]{Conditions_violating_MR_2008_PhysRevLett.101.090403}%
  \BibitemOpen
  \bibfield  {author} {\bibinfo {author} {\bibfnamefont {J.}~\bibnamefont {Kofler}}\ and\ \bibinfo {author} {\bibnamefont {Brukner}},\ }\bibfield  {title} {\bibinfo {title} {Conditions for quantum violation of macroscopic realism},\ }\href {https://doi.org/10.1103/PhysRevLett.101.090403} {\bibfield  {journal} {\bibinfo  {journal} {Phys. Rev. Lett.}\ }\textbf {\bibinfo {volume} {101}},\ \bibinfo {pages} {090403} (\bibinfo {year} {2008})}\BibitemShut {NoStop}%
\bibitem [{\citenamefont {Kofler}\ and\ \citenamefont {Brukner}(2007)}]{Kofler_CG_PhysRevLett.99.180403}%
  \BibitemOpen
  \bibfield  {author} {\bibinfo {author} {\bibfnamefont {J.}~\bibnamefont {Kofler}}\ and\ \bibinfo {author} {\bibnamefont {Brukner}},\ }\bibfield  {title} {\bibinfo {title} {Classical world arising out of quantum physics under the restriction of coarse-grained measurements},\ }\href {https://doi.org/10.1103/PhysRevLett.99.180403} {\bibfield  {journal} {\bibinfo  {journal} {Phys. Rev. Lett.}\ }\textbf {\bibinfo {volume} {99}},\ \bibinfo {pages} {180403} (\bibinfo {year} {2007})}\BibitemShut {NoStop}%
\bibitem [{\citenamefont {Bose}\ \emph {et~al.}(2015)\citenamefont {Bose}, \citenamefont {Home},\ and\ \citenamefont {Mal}}]{bose2015uncoveringnonclassicalityschrodingercoherent}%
  \BibitemOpen
  \bibfield  {author} {\bibinfo {author} {\bibfnamefont {S.}~\bibnamefont {Bose}}, \bibinfo {author} {\bibfnamefont {D.}~\bibnamefont {Home}},\ and\ \bibinfo {author} {\bibfnamefont {S.}~\bibnamefont {Mal}},\ }\href {https://arxiv.org/abs/1509.00196} {\bibinfo {title} {Uncovering a nonclassicality of the schr\"odinger coherent state up to the macro-domain}} (\bibinfo {year} {2015}),\ \Eprint {https://arxiv.org/abs/1509.00196} {arXiv:1509.00196 [quant-ph]} \BibitemShut {NoStop}%
\bibitem [{\citenamefont {Jeong}\ \emph {et~al.}(2014)\citenamefont {Jeong}, \citenamefont {Lim},\ and\ \citenamefont {Kim}}]{Geong_2014_PhysRevLett.112.010402}%
  \BibitemOpen
  \bibfield  {author} {\bibinfo {author} {\bibfnamefont {H.}~\bibnamefont {Jeong}}, \bibinfo {author} {\bibfnamefont {Y.}~\bibnamefont {Lim}},\ and\ \bibinfo {author} {\bibfnamefont {M.~S.}\ \bibnamefont {Kim}},\ }\bibfield  {title} {\bibinfo {title} {Coarsening measurement references and the quantum-to-classical transition},\ }\href {https://doi.org/10.1103/PhysRevLett.112.010402} {\bibfield  {journal} {\bibinfo  {journal} {Phys. Rev. Lett.}\ }\textbf {\bibinfo {volume} {112}},\ \bibinfo {pages} {010402} (\bibinfo {year} {2014})}\BibitemShut {NoStop}%
\bibitem [{\citenamefont {Bibak}\ \emph {et~al.}(2025)\citenamefont {Bibak}, \citenamefont {Cepollaro}, \citenamefont {Sanchez}, \citenamefont {Dakic},\ and\ \citenamefont {Brukner}}]{bibak2025classicallimitquantummechanics}%
  \BibitemOpen
  \bibfield  {author} {\bibinfo {author} {\bibfnamefont {F.}~\bibnamefont {Bibak}}, \bibinfo {author} {\bibfnamefont {C.}~\bibnamefont {Cepollaro}}, \bibinfo {author} {\bibfnamefont {N.~M.}\ \bibnamefont {Sanchez}}, \bibinfo {author} {\bibfnamefont {B.}~\bibnamefont {Dakic}},\ and\ \bibinfo {author} {\bibfnamefont {C.}~\bibnamefont {Brukner}},\ }\href {https://arxiv.org/abs/2503.15642} {\bibinfo {title} {The classical limit of quantum mechanics through coarse-grained measurements}} (\bibinfo {year} {2025}),\ \Eprint {https://arxiv.org/abs/2503.15642} {arXiv:2503.15642 [quant-ph]} \BibitemShut {NoStop}%
\bibitem [{\citenamefont {Bose}\ \emph {et~al.}(2018)\citenamefont {Bose}, \citenamefont {Home},\ and\ \citenamefont {Mal}}]{harmonic_oscilator_CG_PhysRevLett.120.210402}%
  \BibitemOpen
  \bibfield  {author} {\bibinfo {author} {\bibfnamefont {S.}~\bibnamefont {Bose}}, \bibinfo {author} {\bibfnamefont {D.}~\bibnamefont {Home}},\ and\ \bibinfo {author} {\bibfnamefont {S.}~\bibnamefont {Mal}},\ }\bibfield  {title} {\bibinfo {title} {Nonclassicality of the harmonic-oscillator coherent state persisting up to the macroscopic domain},\ }\href {https://doi.org/10.1103/PhysRevLett.120.210402} {\bibfield  {journal} {\bibinfo  {journal} {Phys. Rev. Lett.}\ }\textbf {\bibinfo {volume} {120}},\ \bibinfo {pages} {210402} (\bibinfo {year} {2018})}\BibitemShut {NoStop}%
\bibitem [{\citenamefont {Fullwood}\ and\ \citenamefont {Vedral}(2025)}]{Fullwood_Geometry_PhysRevA.111.052438}%
  \BibitemOpen
  \bibfield  {author} {\bibinfo {author} {\bibfnamefont {J.}~\bibnamefont {Fullwood}}\ and\ \bibinfo {author} {\bibfnamefont {V.}~\bibnamefont {Vedral}},\ }\bibfield  {title} {\bibinfo {title} {Geometry from quantum temporal correlations},\ }\href {https://doi.org/10.1103/PhysRevA.111.052438} {\bibfield  {journal} {\bibinfo  {journal} {Phys. Rev. A}\ }\textbf {\bibinfo {volume} {111}},\ \bibinfo {pages} {052438} (\bibinfo {year} {2025})}\BibitemShut {NoStop}%
\bibitem [{\citenamefont {Hardy}(2007)}]{Hardy_2007}%
  \BibitemOpen
  \bibfield  {author} {\bibinfo {author} {\bibfnamefont {L.}~\bibnamefont {Hardy}},\ }\bibfield  {title} {\bibinfo {title} {Towards quantum gravity: a framework for probabilistic theories with non-fixed causal structure},\ }\href {https://doi.org/10.1088/1751-8113/40/12/s12} {\bibfield  {journal} {\bibinfo  {journal} {Journal of Physics A: Mathematical and Theoretical}\ }\textbf {\bibinfo {volume} {40}},\ \bibinfo {pages} {3081–3099} (\bibinfo {year} {2007})}\BibitemShut {NoStop}%
\bibitem [{\citenamefont {Spee}\ \emph {et~al.}(2020)\citenamefont {Spee}, \citenamefont {Siebeneich}, \citenamefont {Gloger}, \citenamefont {Kaufmann}, \citenamefont {Johanning}, \citenamefont {Kleinmann}, \citenamefont {Wunderlich},\ and\ \citenamefont {Gühne}}]{Spee_2020}%
  \BibitemOpen
  \bibfield  {author} {\bibinfo {author} {\bibfnamefont {C.}~\bibnamefont {Spee}}, \bibinfo {author} {\bibfnamefont {H.}~\bibnamefont {Siebeneich}}, \bibinfo {author} {\bibfnamefont {T.~F.}\ \bibnamefont {Gloger}}, \bibinfo {author} {\bibfnamefont {P.}~\bibnamefont {Kaufmann}}, \bibinfo {author} {\bibfnamefont {M.}~\bibnamefont {Johanning}}, \bibinfo {author} {\bibfnamefont {M.}~\bibnamefont {Kleinmann}}, \bibinfo {author} {\bibfnamefont {C.}~\bibnamefont {Wunderlich}},\ and\ \bibinfo {author} {\bibfnamefont {O.}~\bibnamefont {Gühne}},\ }\bibfield  {title} {\bibinfo {title} {Genuine temporal correlations can certify the quantum dimension},\ }\href {https://doi.org/10.1088/1367-2630/ab6d42} {\bibfield  {journal} {\bibinfo  {journal} {New Journal of Physics}\ }\textbf {\bibinfo {volume} {22}},\ \bibinfo {pages} {023028} (\bibinfo {year} {2020})}\BibitemShut {NoStop}%
\bibitem [{\citenamefont {Wolf}\ and\ \citenamefont {Perez-Garcia}(2009)}]{Wolf_PhysRevLett.102.190504}%
  \BibitemOpen
  \bibfield  {author} {\bibinfo {author} {\bibfnamefont {M.~M.}\ \bibnamefont {Wolf}}\ and\ \bibinfo {author} {\bibfnamefont {D.}~\bibnamefont {Perez-Garcia}},\ }\bibfield  {title} {\bibinfo {title} {Assessing quantum dimensionality from observable dynamics},\ }\href {https://doi.org/10.1103/PhysRevLett.102.190504} {\bibfield  {journal} {\bibinfo  {journal} {Phys. Rev. Lett.}\ }\textbf {\bibinfo {volume} {102}},\ \bibinfo {pages} {190504} (\bibinfo {year} {2009})}\BibitemShut {NoStop}%
\bibitem [{\citenamefont {Vieira}\ \emph {et~al.}(2024)\citenamefont {Vieira}, \citenamefont {Milz}, \citenamefont {Vitagliano},\ and\ \citenamefont {Budroni}}]{Vieira_2024}%
  \BibitemOpen
  \bibfield  {author} {\bibinfo {author} {\bibfnamefont {L.~B.}\ \bibnamefont {Vieira}}, \bibinfo {author} {\bibfnamefont {S.}~\bibnamefont {Milz}}, \bibinfo {author} {\bibfnamefont {G.}~\bibnamefont {Vitagliano}},\ and\ \bibinfo {author} {\bibfnamefont {C.}~\bibnamefont {Budroni}},\ }\bibfield  {title} {\bibinfo {title} {Witnessing environment dimension through temporal correlations},\ }\href {https://doi.org/10.22331/q-2024-01-10-1224} {\bibfield  {journal} {\bibinfo  {journal} {Quantum}\ }\textbf {\bibinfo {volume} {8}},\ \bibinfo {pages} {1224} (\bibinfo {year} {2024})}\BibitemShut {NoStop}%
\bibitem [{\citenamefont {Erker}\ \emph {et~al.}(2017)\citenamefont {Erker}, \citenamefont {Mitchison}, \citenamefont {Silva}, \citenamefont {Woods}, \citenamefont {Brunner},\ and\ \citenamefont {Huber}}]{Erker_PhysRevX.7.031022}%
  \BibitemOpen
  \bibfield  {author} {\bibinfo {author} {\bibfnamefont {P.}~\bibnamefont {Erker}}, \bibinfo {author} {\bibfnamefont {M.~T.}\ \bibnamefont {Mitchison}}, \bibinfo {author} {\bibfnamefont {R.}~\bibnamefont {Silva}}, \bibinfo {author} {\bibfnamefont {M.~P.}\ \bibnamefont {Woods}}, \bibinfo {author} {\bibfnamefont {N.}~\bibnamefont {Brunner}},\ and\ \bibinfo {author} {\bibfnamefont {M.}~\bibnamefont {Huber}},\ }\bibfield  {title} {\bibinfo {title} {Autonomous quantum clocks: Does thermodynamics limit our ability to measure time?},\ }\href {https://doi.org/10.1103/PhysRevX.7.031022} {\bibfield  {journal} {\bibinfo  {journal} {Phys. Rev. X}\ }\textbf {\bibinfo {volume} {7}},\ \bibinfo {pages} {031022} (\bibinfo {year} {2017})}\BibitemShut {NoStop}%
\bibitem [{\citenamefont {Budroni}\ \emph {et~al.}(2021)\citenamefont {Budroni}, \citenamefont {Vitagliano},\ and\ \citenamefont {Woods}}]{Burdroni_PhysRevResearch.3.033051}%
  \BibitemOpen
  \bibfield  {author} {\bibinfo {author} {\bibfnamefont {C.}~\bibnamefont {Budroni}}, \bibinfo {author} {\bibfnamefont {G.}~\bibnamefont {Vitagliano}},\ and\ \bibinfo {author} {\bibfnamefont {M.~P.}\ \bibnamefont {Woods}},\ }\bibfield  {title} {\bibinfo {title} {Ticking-clock performance enhanced by nonclassical temporal correlations},\ }\href {https://doi.org/10.1103/PhysRevResearch.3.033051} {\bibfield  {journal} {\bibinfo  {journal} {Phys. Rev. Res.}\ }\textbf {\bibinfo {volume} {3}},\ \bibinfo {pages} {033051} (\bibinfo {year} {2021})}\BibitemShut {NoStop}%
\bibitem [{\citenamefont {Woods}\ \emph {et~al.}(2022)\citenamefont {Woods}, \citenamefont {Silva}, \citenamefont {P\"utz}, \citenamefont {Stupar},\ and\ \citenamefont {Renner}}]{Woods_PRXQuantum.3.010319}%
  \BibitemOpen
  \bibfield  {author} {\bibinfo {author} {\bibfnamefont {M.~P.}\ \bibnamefont {Woods}}, \bibinfo {author} {\bibfnamefont {R.}~\bibnamefont {Silva}}, \bibinfo {author} {\bibfnamefont {G.}~\bibnamefont {P\"utz}}, \bibinfo {author} {\bibfnamefont {S.}~\bibnamefont {Stupar}},\ and\ \bibinfo {author} {\bibfnamefont {R.}~\bibnamefont {Renner}},\ }\bibfield  {title} {\bibinfo {title} {Quantum clocks are more accurate than classical ones},\ }\href {https://doi.org/10.1103/PRXQuantum.3.010319} {\bibfield  {journal} {\bibinfo  {journal} {PRX Quantum}\ }\textbf {\bibinfo {volume} {3}},\ \bibinfo {pages} {010319} (\bibinfo {year} {2022})}\BibitemShut {NoStop}%
\bibitem [{\citenamefont {Ranković}\ \emph {et~al.}(2015)\citenamefont {Ranković}, \citenamefont {Liang},\ and\ \citenamefont {Renner}}]{rankovic2015quantumclockssynchronisation}%
  \BibitemOpen
  \bibfield  {author} {\bibinfo {author} {\bibfnamefont {S.}~\bibnamefont {Ranković}}, \bibinfo {author} {\bibfnamefont {Y.-C.}\ \bibnamefont {Liang}},\ and\ \bibinfo {author} {\bibfnamefont {R.}~\bibnamefont {Renner}},\ }\href {https://arxiv.org/abs/1506.01373} {\bibinfo {title} {Quantum clocks and their synchronisation - the alternate ticks game}} (\bibinfo {year} {2015}),\ \Eprint {https://arxiv.org/abs/1506.01373} {arXiv:1506.01373 [quant-ph]} \BibitemShut {NoStop}%
\bibitem [{\citenamefont {Schwarzhans}\ \emph {et~al.}(2021)\citenamefont {Schwarzhans}, \citenamefont {Lock}, \citenamefont {Erker}, \citenamefont {Friis},\ and\ \citenamefont {Huber}}]{Autonomus_temporal_prob_PhysRevX.11.011046}%
  \BibitemOpen
  \bibfield  {author} {\bibinfo {author} {\bibfnamefont {E.}~\bibnamefont {Schwarzhans}}, \bibinfo {author} {\bibfnamefont {M.~P.~E.}\ \bibnamefont {Lock}}, \bibinfo {author} {\bibfnamefont {P.}~\bibnamefont {Erker}}, \bibinfo {author} {\bibfnamefont {N.}~\bibnamefont {Friis}},\ and\ \bibinfo {author} {\bibfnamefont {M.}~\bibnamefont {Huber}},\ }\bibfield  {title} {\bibinfo {title} {Autonomous temporal probability concentration: Clockworks and the second law of thermodynamics},\ }\href {https://doi.org/10.1103/PhysRevX.11.011046} {\bibfield  {journal} {\bibinfo  {journal} {Phys. Rev. X}\ }\textbf {\bibinfo {volume} {11}},\ \bibinfo {pages} {011046} (\bibinfo {year} {2021})}\BibitemShut {NoStop}%
\bibitem [{\citenamefont {Wendel}\ \emph {et~al.}(2020)\citenamefont {Wendel}, \citenamefont {Mart\'{\i}nez},\ and\ \citenamefont {Bojowald}}]{physicalimplicationsoffundamentalperiod_PhysRevLett.124.241301}%
  \BibitemOpen
  \bibfield  {author} {\bibinfo {author} {\bibfnamefont {G.}~\bibnamefont {Wendel}}, \bibinfo {author} {\bibfnamefont {L.}~\bibnamefont {Mart\'{\i}nez}},\ and\ \bibinfo {author} {\bibfnamefont {M.}~\bibnamefont {Bojowald}},\ }\bibfield  {title} {\bibinfo {title} {Physical implications of a fundamental period of time},\ }\href {https://doi.org/10.1103/PhysRevLett.124.241301} {\bibfield  {journal} {\bibinfo  {journal} {Phys. Rev. Lett.}\ }\textbf {\bibinfo {volume} {124}},\ \bibinfo {pages} {241301} (\bibinfo {year} {2020})}\BibitemShut {NoStop}%
\bibitem [{\citenamefont {Wiesner}(1983)}]{wiesner_10.1145/1008908.1008920}%
  \BibitemOpen
  \bibfield  {author} {\bibinfo {author} {\bibfnamefont {S.}~\bibnamefont {Wiesner}},\ }\bibfield  {title} {\bibinfo {title} {Conjugate coding},\ }\href {https://doi.org/10.1145/1008908.1008920} {\bibfield  {journal} {\bibinfo  {journal} {SIGACT News}\ }\textbf {\bibinfo {volume} {15}},\ \bibinfo {pages} {78–88} (\bibinfo {year} {1983})}\BibitemShut {NoStop}%
\bibitem [{\citenamefont {Ambainis}\ \emph {et~al.}(2002)\citenamefont {Ambainis}, \citenamefont {Nayak}, \citenamefont {Ta-Shma},\ and\ \citenamefont {Vazirani}}]{Ambainis_10.1145/581771.581773}%
  \BibitemOpen
  \bibfield  {author} {\bibinfo {author} {\bibfnamefont {A.}~\bibnamefont {Ambainis}}, \bibinfo {author} {\bibfnamefont {A.}~\bibnamefont {Nayak}}, \bibinfo {author} {\bibfnamefont {A.}~\bibnamefont {Ta-Shma}},\ and\ \bibinfo {author} {\bibfnamefont {U.}~\bibnamefont {Vazirani}},\ }\bibfield  {title} {\bibinfo {title} {Dense quantum coding and quantum finite automata},\ }\href {https://doi.org/10.1145/581771.581773} {\bibfield  {journal} {\bibinfo  {journal} {J. ACM}\ }\textbf {\bibinfo {volume} {49}},\ \bibinfo {pages} {496–511} (\bibinfo {year} {2002})}\BibitemShut {NoStop}%
\bibitem [{\citenamefont {Brierley}\ \emph {et~al.}(2015)\citenamefont {Brierley}, \citenamefont {Kosowski}, \citenamefont {Markiewicz}, \citenamefont {Paterek},\ and\ \citenamefont {Przysiezna}}]{Brierley_2015}%
  \BibitemOpen
  \bibfield  {author} {\bibinfo {author} {\bibfnamefont {S.}~\bibnamefont {Brierley}}, \bibinfo {author} {\bibfnamefont {A.}~\bibnamefont {Kosowski}}, \bibinfo {author} {\bibfnamefont {M.}~\bibnamefont {Markiewicz}}, \bibinfo {author} {\bibfnamefont {T.}~\bibnamefont {Paterek}},\ and\ \bibinfo {author} {\bibfnamefont {A.}~\bibnamefont {Przysiezna}},\ }\bibfield  {title} {\bibinfo {title} {Nonclassicality of temporal correlations},\ }\bibfield  {journal} {\bibinfo  {journal} {Physical Review Letters}\ }\textbf {\bibinfo {volume} {115}},\ \href {https://doi.org/10.1103/physrevlett.115.120404} {10.1103/physrevlett.115.120404} (\bibinfo {year} {2015})\BibitemShut {NoStop}%
\bibitem [{\citenamefont {Pironio}(2003)}]{Pironio_2003}%
  \BibitemOpen
  \bibfield  {author} {\bibinfo {author} {\bibfnamefont {S.}~\bibnamefont {Pironio}},\ }\bibfield  {title} {\bibinfo {title} {Violations of bell inequalities as lower bounds on the communication cost of nonlocal correlations},\ }\bibfield  {journal} {\bibinfo  {journal} {Physical Review A}\ }\textbf {\bibinfo {volume} {68}},\ \href {https://doi.org/10.1103/physreva.68.062102} {10.1103/physreva.68.062102} (\bibinfo {year} {2003})\BibitemShut {NoStop}%
\bibitem [{\citenamefont {Brask}\ and\ \citenamefont {Chaves}(2017)}]{brask2017bell}%
  \BibitemOpen
  \bibfield  {author} {\bibinfo {author} {\bibfnamefont {J.~B.}\ \bibnamefont {Brask}}\ and\ \bibinfo {author} {\bibfnamefont {R.}~\bibnamefont {Chaves}},\ }\bibfield  {title} {\bibinfo {title} {Bell scenarios with communication},\ }\href@noop {} {\bibfield  {journal} {\bibinfo  {journal} {Journal of Physics A: Mathematical and Theoretical}\ }\textbf {\bibinfo {volume} {50}},\ \bibinfo {pages} {094001} (\bibinfo {year} {2017})}\BibitemShut {NoStop}%
\bibitem [{\citenamefont {Zukowski}(2010)}]{zukowski2010temporalleggettgargbellinequalitiessequential}%
  \BibitemOpen
  \bibfield  {author} {\bibinfo {author} {\bibfnamefont {M.}~\bibnamefont {Zukowski}},\ }\href {https://arxiv.org/abs/1009.1749} {\bibinfo {title} {Temporal leggett-garg-bell inequalities for sequential multi-time actions in quantum information processing, and a re-definition of macroscopic realism}} (\bibinfo {year} {2010}),\ \Eprint {https://arxiv.org/abs/1009.1749} {arXiv:1009.1749 [quant-ph]} \BibitemShut {NoStop}%
\bibitem [{\citenamefont {Garner}\ \emph {et~al.}(2017)\citenamefont {Garner}, \citenamefont {Liu}, \citenamefont {Thompson}, \citenamefont {Vedral},\ and\ \citenamefont {Gu}}]{Garner_2017}%
  \BibitemOpen
  \bibfield  {author} {\bibinfo {author} {\bibfnamefont {A.~J.~P.}\ \bibnamefont {Garner}}, \bibinfo {author} {\bibfnamefont {Q.}~\bibnamefont {Liu}}, \bibinfo {author} {\bibfnamefont {J.}~\bibnamefont {Thompson}}, \bibinfo {author} {\bibfnamefont {V.}~\bibnamefont {Vedral}},\ and\ \bibinfo {author} {\bibfnamefont {m.}~\bibnamefont {Gu}},\ }\bibfield  {title} {\bibinfo {title} {Provably unbounded memory advantage in stochastic simulation using quantum mechanics},\ }\href {https://doi.org/10.1088/1367-2630/aa82df} {\bibfield  {journal} {\bibinfo  {journal} {New Journal of Physics}\ }\textbf {\bibinfo {volume} {19}},\ \bibinfo {pages} {103009} (\bibinfo {year} {2017})}\BibitemShut {NoStop}%
\bibitem [{\citenamefont {Elliott}\ and\ \citenamefont {Gu}(2018)}]{Elliott_2018}%
  \BibitemOpen
  \bibfield  {author} {\bibinfo {author} {\bibfnamefont {T.~J.}\ \bibnamefont {Elliott}}\ and\ \bibinfo {author} {\bibfnamefont {M.}~\bibnamefont {Gu}},\ }\bibfield  {title} {\bibinfo {title} {Superior memory efficiency of quantum devices for the simulation of continuous-time stochastic processes},\ }\bibfield  {journal} {\bibinfo  {journal} {npj Quantum Information}\ }\textbf {\bibinfo {volume} {4}},\ \href {https://doi.org/10.1038/s41534-018-0064-4} {10.1038/s41534-018-0064-4} (\bibinfo {year} {2018})\BibitemShut {NoStop}%
\bibitem [{\citenamefont {Elliott}\ \emph {et~al.}(2020)\citenamefont {Elliott}, \citenamefont {Yang}, \citenamefont {Binder}, \citenamefont {Garner}, \citenamefont {Thompson},\ and\ \citenamefont {Gu}}]{Elliott_2020}%
  \BibitemOpen
  \bibfield  {author} {\bibinfo {author} {\bibfnamefont {T.~J.}\ \bibnamefont {Elliott}}, \bibinfo {author} {\bibfnamefont {C.}~\bibnamefont {Yang}}, \bibinfo {author} {\bibfnamefont {F.~C.}\ \bibnamefont {Binder}}, \bibinfo {author} {\bibfnamefont {A.~J.~P.}\ \bibnamefont {Garner}}, \bibinfo {author} {\bibfnamefont {J.}~\bibnamefont {Thompson}},\ and\ \bibinfo {author} {\bibfnamefont {M.}~\bibnamefont {Gu}},\ }\bibfield  {title} {\bibinfo {title} {Extreme dimensionality reduction with quantum modeling},\ }\bibfield  {journal} {\bibinfo  {journal} {Physical Review Letters}\ }\textbf {\bibinfo {volume} {125}},\ \href {https://doi.org/10.1103/physrevlett.125.260501} {10.1103/physrevlett.125.260501} (\bibinfo {year} {2020})\BibitemShut {NoStop}%
\bibitem [{\citenamefont {Thompson}\ \emph {et~al.}(2018)\citenamefont {Thompson}, \citenamefont {Garner}, \citenamefont {Mahoney}, \citenamefont {Crutchfield}, \citenamefont {Vedral},\ and\ \citenamefont {Gu}}]{Causality_assimetry_PhysRevX.8.031013}%
  \BibitemOpen
  \bibfield  {author} {\bibinfo {author} {\bibfnamefont {J.}~\bibnamefont {Thompson}}, \bibinfo {author} {\bibfnamefont {A.~J.~P.}\ \bibnamefont {Garner}}, \bibinfo {author} {\bibfnamefont {J.~R.}\ \bibnamefont {Mahoney}}, \bibinfo {author} {\bibfnamefont {J.~P.}\ \bibnamefont {Crutchfield}}, \bibinfo {author} {\bibfnamefont {V.}~\bibnamefont {Vedral}},\ and\ \bibinfo {author} {\bibfnamefont {M.}~\bibnamefont {Gu}},\ }\bibfield  {title} {\bibinfo {title} {Causal asymmetry in a quantum world},\ }\href {https://doi.org/10.1103/PhysRevX.8.031013} {\bibfield  {journal} {\bibinfo  {journal} {Phys. Rev. X}\ }\textbf {\bibinfo {volume} {8}},\ \bibinfo {pages} {031013} (\bibinfo {year} {2018})}\BibitemShut {NoStop}%
\bibitem [{\citenamefont {Leggett}\ and\ \citenamefont {Garg}(1985)}]{Leggett-Garg_original}%
  \BibitemOpen
  \bibfield  {author} {\bibinfo {author} {\bibfnamefont {A.~J.}\ \bibnamefont {Leggett}}\ and\ \bibinfo {author} {\bibfnamefont {A.}~\bibnamefont {Garg}},\ }\bibfield  {title} {\bibinfo {title} {Quantum mechanics versus macroscopic realism: Is the flux there when nobody looks?},\ }\href {https://doi.org/10.1103/PhysRevLett.54.857} {\bibfield  {journal} {\bibinfo  {journal} {Phys. Rev. Lett.}\ }\textbf {\bibinfo {volume} {54}},\ \bibinfo {pages} {857} (\bibinfo {year} {1985})}\BibitemShut {NoStop}%
\bibitem [{\citenamefont {Fritz}(2010)}]{Fritz_2010}%
  \BibitemOpen
  \bibfield  {author} {\bibinfo {author} {\bibfnamefont {T.}~\bibnamefont {Fritz}},\ }\bibfield  {title} {\bibinfo {title} {Quantum correlations in the temporal clauser–horne–shimony–holt (chsh) scenario},\ }\href {https://doi.org/10.1088/1367-2630/12/8/083055} {\bibfield  {journal} {\bibinfo  {journal} {New Journal of Physics}\ }\textbf {\bibinfo {volume} {12}},\ \bibinfo {pages} {083055} (\bibinfo {year} {2010})}\BibitemShut {NoStop}%
\bibitem [{\citenamefont {Paz}\ and\ \citenamefont {Mahler}(1993)}]{Paz_PhysRevLett.71.3235}%
  \BibitemOpen
  \bibfield  {author} {\bibinfo {author} {\bibfnamefont {J.~P.}\ \bibnamefont {Paz}}\ and\ \bibinfo {author} {\bibfnamefont {G.}~\bibnamefont {Mahler}},\ }\bibfield  {title} {\bibinfo {title} {Proposed test for temporal bell inequalities},\ }\href {https://doi.org/10.1103/PhysRevLett.71.3235} {\bibfield  {journal} {\bibinfo  {journal} {Phys. Rev. Lett.}\ }\textbf {\bibinfo {volume} {71}},\ \bibinfo {pages} {3235} (\bibinfo {year} {1993})}\BibitemShut {NoStop}%
\bibitem [{\citenamefont {Ringbauer}\ and\ \citenamefont {Chaves}(2017)}]{ringbauer2017probing}%
  \BibitemOpen
  \bibfield  {author} {\bibinfo {author} {\bibfnamefont {M.}~\bibnamefont {Ringbauer}}\ and\ \bibinfo {author} {\bibfnamefont {R.}~\bibnamefont {Chaves}},\ }\bibfield  {title} {\bibinfo {title} {Probing the non-classicality of temporal correlations},\ }\href@noop {} {\bibfield  {journal} {\bibinfo  {journal} {Quantum}\ }\textbf {\bibinfo {volume} {1}},\ \bibinfo {pages} {35} (\bibinfo {year} {2017})}\BibitemShut {NoStop}%
\bibitem [{\citenamefont {Araújo}\ \emph {et~al.}(2014)\citenamefont {Araújo}, \citenamefont {Costa},\ and\ \citenamefont {Brukner}}]{Araujo_2014}%
  \BibitemOpen
  \bibfield  {author} {\bibinfo {author} {\bibfnamefont {M.}~\bibnamefont {Araújo}}, \bibinfo {author} {\bibfnamefont {F.}~\bibnamefont {Costa}},\ and\ \bibinfo {author} {\bibfnamefont {C.}~\bibnamefont {Brukner}},\ }\bibfield  {title} {\bibinfo {title} {Computational advantage from quantum-controlled ordering of gates},\ }\bibfield  {journal} {\bibinfo  {journal} {Physical Review Letters}\ }\textbf {\bibinfo {volume} {113}},\ \href {https://doi.org/10.1103/physrevlett.113.250402} {10.1103/physrevlett.113.250402} (\bibinfo {year} {2014})\BibitemShut {NoStop}%
\bibitem [{\citenamefont {Procopio}\ \emph {et~al.}(2015)\citenamefont {Procopio}, \citenamefont {Moqanaki}, \citenamefont {Araujo}, \citenamefont {Costa}, \citenamefont {Alonso~Calafell}, \citenamefont {Dowd}, \citenamefont {Hamel}, \citenamefont {Rozema}, \citenamefont {Brukner},\ and\ \citenamefont {Walther}}]{Procopio_2015}%
  \BibitemOpen
  \bibfield  {author} {\bibinfo {author} {\bibfnamefont {L.~M.}\ \bibnamefont {Procopio}}, \bibinfo {author} {\bibfnamefont {A.}~\bibnamefont {Moqanaki}}, \bibinfo {author} {\bibfnamefont {M.}~\bibnamefont {Araujo}}, \bibinfo {author} {\bibfnamefont {F.}~\bibnamefont {Costa}}, \bibinfo {author} {\bibfnamefont {I.}~\bibnamefont {Alonso~Calafell}}, \bibinfo {author} {\bibfnamefont {E.~G.}\ \bibnamefont {Dowd}}, \bibinfo {author} {\bibfnamefont {D.~R.}\ \bibnamefont {Hamel}}, \bibinfo {author} {\bibfnamefont {L.~A.}\ \bibnamefont {Rozema}}, \bibinfo {author} {\bibfnamefont {C.}~\bibnamefont {Brukner}},\ and\ \bibinfo {author} {\bibfnamefont {P.}~\bibnamefont {Walther}},\ }\bibfield  {title} {\bibinfo {title} {Experimental superposition of orders of quantum gates},\ }\bibfield  {journal} {\bibinfo  {journal} {Nature Communications}\ }\textbf {\bibinfo {volume} {6}},\ \href {https://doi.org/10.1038/ncomms8913} {10.1038/ncomms8913} (\bibinfo {year} {2015})\BibitemShut {NoStop}%
\bibitem [{\citenamefont {Araujo}\ \emph {et~al.}(2015)\citenamefont {Araujo}, \citenamefont {Branciard}, \citenamefont {Costa}, \citenamefont {Feix}, \citenamefont {Giarmatzi},\ and\ \citenamefont {Brukner}}]{Araujo_2015}%
  \BibitemOpen
  \bibfield  {author} {\bibinfo {author} {\bibfnamefont {M.}~\bibnamefont {Araujo}}, \bibinfo {author} {\bibfnamefont {C.}~\bibnamefont {Branciard}}, \bibinfo {author} {\bibfnamefont {F.}~\bibnamefont {Costa}}, \bibinfo {author} {\bibfnamefont {A.}~\bibnamefont {Feix}}, \bibinfo {author} {\bibfnamefont {C.}~\bibnamefont {Giarmatzi}},\ and\ \bibinfo {author} {\bibfnamefont {C.}~\bibnamefont {Brukner}},\ }\bibfield  {title} {\bibinfo {title} {Witnessing causal nonseparability},\ }\href {https://doi.org/10.1088/1367-2630/17/10/102001} {\bibfield  {journal} {\bibinfo  {journal} {New Journal of Physics}\ }\textbf {\bibinfo {volume} {17}},\ \bibinfo {pages} {102001} (\bibinfo {year} {2015})}\BibitemShut {NoStop}%
\bibitem [{\citenamefont {Pollock}\ \emph {et~al.}(2018{\natexlab{a}})\citenamefont {Pollock}, \citenamefont {Rodríguez-Rosario}, \citenamefont {Frauenheim}, \citenamefont {Paternostro},\ and\ \citenamefont {Modi}}]{Pollock_2018}%
  \BibitemOpen
  \bibfield  {author} {\bibinfo {author} {\bibfnamefont {F.~A.}\ \bibnamefont {Pollock}}, \bibinfo {author} {\bibfnamefont {C.}~\bibnamefont {Rodríguez-Rosario}}, \bibinfo {author} {\bibfnamefont {T.}~\bibnamefont {Frauenheim}}, \bibinfo {author} {\bibfnamefont {M.}~\bibnamefont {Paternostro}},\ and\ \bibinfo {author} {\bibfnamefont {K.}~\bibnamefont {Modi}},\ }\bibfield  {title} {\bibinfo {title} {Non-markovian quantum processes: Complete framework and efficient characterization},\ }\bibfield  {journal} {\bibinfo  {journal} {Physical Review A}\ }\textbf {\bibinfo {volume} {97}},\ \href {https://doi.org/10.1103/physreva.97.012127} {10.1103/physreva.97.012127} (\bibinfo {year} {2018}{\natexlab{a}})\BibitemShut {NoStop}%
\bibitem [{\citenamefont {Pollock}\ \emph {et~al.}(2018{\natexlab{b}})\citenamefont {Pollock}, \citenamefont {Rodríguez-Rosario}, \citenamefont {Frauenheim}, \citenamefont {Paternostro},\ and\ \citenamefont {Modi}}]{Pollock_2018_2}%
  \BibitemOpen
  \bibfield  {author} {\bibinfo {author} {\bibfnamefont {F.~A.}\ \bibnamefont {Pollock}}, \bibinfo {author} {\bibfnamefont {C.}~\bibnamefont {Rodríguez-Rosario}}, \bibinfo {author} {\bibfnamefont {T.}~\bibnamefont {Frauenheim}}, \bibinfo {author} {\bibfnamefont {M.}~\bibnamefont {Paternostro}},\ and\ \bibinfo {author} {\bibfnamefont {K.}~\bibnamefont {Modi}},\ }\bibfield  {title} {\bibinfo {title} {Operational markov condition for quantum processes},\ }\bibfield  {journal} {\bibinfo  {journal} {Physical Review Letters}\ }\textbf {\bibinfo {volume} {120}},\ \href {https://doi.org/10.1103/physrevlett.120.040405} {10.1103/physrevlett.120.040405} (\bibinfo {year} {2018}{\natexlab{b}})\BibitemShut {NoStop}%
\bibitem [{\citenamefont {Jørgensen}\ and\ \citenamefont {Pollock}(2019)}]{Jorgensen_2019}%
  \BibitemOpen
  \bibfield  {author} {\bibinfo {author} {\bibfnamefont {M.~R.}\ \bibnamefont {Jørgensen}}\ and\ \bibinfo {author} {\bibfnamefont {F.~A.}\ \bibnamefont {Pollock}},\ }\bibfield  {title} {\bibinfo {title} {Exploiting the causal tensor network structure of quantum processes to efficiently simulate non-markovian path integrals},\ }\bibfield  {journal} {\bibinfo  {journal} {Physical Review Letters}\ }\textbf {\bibinfo {volume} {123}},\ \href {https://doi.org/10.1103/physrevlett.123.240602} {10.1103/physrevlett.123.240602} (\bibinfo {year} {2019})\BibitemShut {NoStop}%
\bibitem [{\citenamefont {White}\ \emph {et~al.}(2020)\citenamefont {White}, \citenamefont {Hill}, \citenamefont {Pollock}, \citenamefont {Hollenberg},\ and\ \citenamefont {Modi}}]{White_2020}%
  \BibitemOpen
  \bibfield  {author} {\bibinfo {author} {\bibfnamefont {G.~A.~L.}\ \bibnamefont {White}}, \bibinfo {author} {\bibfnamefont {C.~D.}\ \bibnamefont {Hill}}, \bibinfo {author} {\bibfnamefont {F.~A.}\ \bibnamefont {Pollock}}, \bibinfo {author} {\bibfnamefont {L.~C.~L.}\ \bibnamefont {Hollenberg}},\ and\ \bibinfo {author} {\bibfnamefont {K.}~\bibnamefont {Modi}},\ }\bibfield  {title} {\bibinfo {title} {Demonstration of non-markovian process characterisation and control on a quantum processor},\ }\bibfield  {journal} {\bibinfo  {journal} {Nature Communications}\ }\textbf {\bibinfo {volume} {11}},\ \href {https://doi.org/10.1038/s41467-020-20113-3} {10.1038/s41467-020-20113-3} (\bibinfo {year} {2020})\BibitemShut {NoStop}%
\bibitem [{\citenamefont {Milz}\ \emph {et~al.}(2019)\citenamefont {Milz}, \citenamefont {Kim}, \citenamefont {Pollock},\ and\ \citenamefont {Modi}}]{Milz_2019}%
  \BibitemOpen
  \bibfield  {author} {\bibinfo {author} {\bibfnamefont {S.}~\bibnamefont {Milz}}, \bibinfo {author} {\bibfnamefont {M.}~\bibnamefont {Kim}}, \bibinfo {author} {\bibfnamefont {F.~A.}\ \bibnamefont {Pollock}},\ and\ \bibinfo {author} {\bibfnamefont {K.}~\bibnamefont {Modi}},\ }\bibfield  {title} {\bibinfo {title} {Completely positive divisibility does not mean markovianity},\ }\bibfield  {journal} {\bibinfo  {journal} {Physical Review Letters}\ }\textbf {\bibinfo {volume} {123}},\ \href {https://doi.org/10.1103/physrevlett.123.040401} {10.1103/physrevlett.123.040401} (\bibinfo {year} {2019})\BibitemShut {NoStop}%
\bibitem [{\citenamefont {Aharonov}\ \emph {et~al.}(1964)\citenamefont {Aharonov}, \citenamefont {Bergmann},\ and\ \citenamefont {Lebowitz}}]{Ahanorov_1_PhysRev.134.B1410}%
  \BibitemOpen
  \bibfield  {author} {\bibinfo {author} {\bibfnamefont {Y.}~\bibnamefont {Aharonov}}, \bibinfo {author} {\bibfnamefont {P.~G.}\ \bibnamefont {Bergmann}},\ and\ \bibinfo {author} {\bibfnamefont {J.~L.}\ \bibnamefont {Lebowitz}},\ }\bibfield  {title} {\bibinfo {title} {Time symmetry in the quantum process of measurement},\ }\href {https://doi.org/10.1103/PhysRev.134.B1410} {\bibfield  {journal} {\bibinfo  {journal} {Phys. Rev.}\ }\textbf {\bibinfo {volume} {134}},\ \bibinfo {pages} {B1410} (\bibinfo {year} {1964})}\BibitemShut {NoStop}%
\bibitem [{\citenamefont {Aharonov}\ \emph {et~al.}(2009)\citenamefont {Aharonov}, \citenamefont {Popescu}, \citenamefont {Tollaksen},\ and\ \citenamefont {Vaidman}}]{Aharonov_2009}%
  \BibitemOpen
  \bibfield  {author} {\bibinfo {author} {\bibfnamefont {Y.}~\bibnamefont {Aharonov}}, \bibinfo {author} {\bibfnamefont {S.}~\bibnamefont {Popescu}}, \bibinfo {author} {\bibfnamefont {J.}~\bibnamefont {Tollaksen}},\ and\ \bibinfo {author} {\bibfnamefont {L.}~\bibnamefont {Vaidman}},\ }\bibfield  {title} {\bibinfo {title} {Multiple-time states and multiple-time measurements in quantum mechanics},\ }\bibfield  {journal} {\bibinfo  {journal} {Physical Review A}\ }\textbf {\bibinfo {volume} {79}},\ \href {https://doi.org/10.1103/physreva.79.052110} {10.1103/physreva.79.052110} (\bibinfo {year} {2009})\BibitemShut {NoStop}%
\bibitem [{\citenamefont {Cotler}\ \emph {et~al.}(2018)\citenamefont {Cotler}, \citenamefont {Jian}, \citenamefont {Qi},\ and\ \citenamefont {Wilczek}}]{Cotler_2018}%
  \BibitemOpen
  \bibfield  {author} {\bibinfo {author} {\bibfnamefont {J.}~\bibnamefont {Cotler}}, \bibinfo {author} {\bibfnamefont {C.-M.}\ \bibnamefont {Jian}}, \bibinfo {author} {\bibfnamefont {X.-L.}\ \bibnamefont {Qi}},\ and\ \bibinfo {author} {\bibfnamefont {F.}~\bibnamefont {Wilczek}},\ }\bibfield  {title} {\bibinfo {title} {Superdensity operators for spacetime quantum mechanics},\ }\bibfield  {journal} {\bibinfo  {journal} {Journal of High Energy Physics}\ }\textbf {\bibinfo {volume} {2018}},\ \href {https://doi.org/10.1007/jhep09(2018)093} {10.1007/jhep09(2018)093} (\bibinfo {year} {2018})\BibitemShut {NoStop}%
\bibitem [{\citenamefont {Dowker}\ and\ \citenamefont {Kent}(1995)}]{Dowker_1995}%
  \BibitemOpen
  \bibfield  {author} {\bibinfo {author} {\bibfnamefont {F.}~\bibnamefont {Dowker}}\ and\ \bibinfo {author} {\bibfnamefont {A.}~\bibnamefont {Kent}},\ }\bibfield  {title} {\bibinfo {title} {Properties of consistent histories},\ }\href {https://doi.org/10.1103/physrevlett.75.3038} {\bibfield  {journal} {\bibinfo  {journal} {Physical Review Letters}\ }\textbf {\bibinfo {volume} {75}},\ \bibinfo {pages} {3038–3041} (\bibinfo {year} {1995})}\BibitemShut {NoStop}%
\bibitem [{\citenamefont {Chiribella}\ \emph {et~al.}(2008)\citenamefont {Chiribella}, \citenamefont {D’Ariano},\ and\ \citenamefont {Perinotti}}]{supermaps_Chiribella_2008}%
  \BibitemOpen
  \bibfield  {author} {\bibinfo {author} {\bibfnamefont {G.}~\bibnamefont {Chiribella}}, \bibinfo {author} {\bibfnamefont {G.~M.}\ \bibnamefont {D’Ariano}},\ and\ \bibinfo {author} {\bibfnamefont {P.}~\bibnamefont {Perinotti}},\ }\bibfield  {title} {\bibinfo {title} {Transforming quantum operations: Quantum supermaps},\ }\href {https://doi.org/10.1209/0295-5075/83/30004} {\bibfield  {journal} {\bibinfo  {journal} {EPL (Europhysics Letters)}\ }\textbf {\bibinfo {volume} {83}},\ \bibinfo {pages} {30004} (\bibinfo {year} {2008})}\BibitemShut {NoStop}%
\bibitem [{\citenamefont {Chiribella}\ \emph {et~al.}(2009)\citenamefont {Chiribella}, \citenamefont {D’Ariano},\ and\ \citenamefont {Perinotti}}]{Chiribella_2009}%
  \BibitemOpen
  \bibfield  {author} {\bibinfo {author} {\bibfnamefont {G.}~\bibnamefont {Chiribella}}, \bibinfo {author} {\bibfnamefont {G.~M.}\ \bibnamefont {D’Ariano}},\ and\ \bibinfo {author} {\bibfnamefont {P.}~\bibnamefont {Perinotti}},\ }\bibfield  {title} {\bibinfo {title} {Theoretical framework for quantum networks},\ }\bibfield  {journal} {\bibinfo  {journal} {Physical Review A}\ }\textbf {\bibinfo {volume} {80}},\ \href {https://doi.org/10.1103/physreva.80.022339} {10.1103/physreva.80.022339} (\bibinfo {year} {2009})\BibitemShut {NoStop}%
\bibitem [{\citenamefont {Chiribella}\ \emph {et~al.}(2013)\citenamefont {Chiribella}, \citenamefont {D’Ariano}, \citenamefont {Perinotti},\ and\ \citenamefont {Valiron}}]{Chiribella_2013}%
  \BibitemOpen
  \bibfield  {author} {\bibinfo {author} {\bibfnamefont {G.}~\bibnamefont {Chiribella}}, \bibinfo {author} {\bibfnamefont {G.~M.}\ \bibnamefont {D’Ariano}}, \bibinfo {author} {\bibfnamefont {P.}~\bibnamefont {Perinotti}},\ and\ \bibinfo {author} {\bibfnamefont {B.}~\bibnamefont {Valiron}},\ }\bibfield  {title} {\bibinfo {title} {Quantum computations without definite causal structure},\ }\bibfield  {journal} {\bibinfo  {journal} {Physical Review A}\ }\textbf {\bibinfo {volume} {88}},\ \href {https://doi.org/10.1103/physreva.88.022318} {10.1103/physreva.88.022318} (\bibinfo {year} {2013})\BibitemShut {NoStop}%
\bibitem [{\citenamefont {Kofler}\ and\ \citenamefont {Brukner}(2013)}]{Kofler_2013}%
  \BibitemOpen
  \bibfield  {author} {\bibinfo {author} {\bibfnamefont {J.}~\bibnamefont {Kofler}}\ and\ \bibinfo {author} {\bibfnamefont {C.}~\bibnamefont {Brukner}},\ }\bibfield  {title} {\bibinfo {title} {Condition for macroscopic realism beyond the leggett-garg inequalities},\ }\bibfield  {journal} {\bibinfo  {journal} {Physical Review A}\ }\textbf {\bibinfo {volume} {87}},\ \href {https://doi.org/10.1103/physreva.87.052115} {10.1103/physreva.87.052115} (\bibinfo {year} {2013})\BibitemShut {NoStop}%
\bibitem [{\citenamefont {Ried}\ \emph {et~al.}(2015)\citenamefont {Ried}, \citenamefont {Agnew}, \citenamefont {Vermeyden}, \citenamefont {Janzing}, \citenamefont {Spekkens},\ and\ \citenamefont {Resch}}]{Ried_2015}%
  \BibitemOpen
  \bibfield  {author} {\bibinfo {author} {\bibfnamefont {K.}~\bibnamefont {Ried}}, \bibinfo {author} {\bibfnamefont {M.}~\bibnamefont {Agnew}}, \bibinfo {author} {\bibfnamefont {L.}~\bibnamefont {Vermeyden}}, \bibinfo {author} {\bibfnamefont {D.}~\bibnamefont {Janzing}}, \bibinfo {author} {\bibfnamefont {R.~W.}\ \bibnamefont {Spekkens}},\ and\ \bibinfo {author} {\bibfnamefont {K.~J.}\ \bibnamefont {Resch}},\ }\bibfield  {title} {\bibinfo {title} {A quantum advantage for inferring causal structure},\ }\href {https://doi.org/10.1038/nphys3266} {\bibfield  {journal} {\bibinfo  {journal} {Nature Physics}\ }\textbf {\bibinfo {volume} {11}},\ \bibinfo {pages} {414–420} (\bibinfo {year} {2015})}\BibitemShut {NoStop}%
\bibitem [{\citenamefont {Song}\ \emph {et~al.}(2024)\citenamefont {Song}, \citenamefont {Narasimhachar}, \citenamefont {Regula}, \citenamefont {Elliott},\ and\ \citenamefont {Gu}}]{Causal_classification_of_spatiotemporal_cor_Song}%
  \BibitemOpen
  \bibfield  {author} {\bibinfo {author} {\bibfnamefont {M.}~\bibnamefont {Song}}, \bibinfo {author} {\bibfnamefont {V.}~\bibnamefont {Narasimhachar}}, \bibinfo {author} {\bibfnamefont {B.}~\bibnamefont {Regula}}, \bibinfo {author} {\bibfnamefont {T.~J.}\ \bibnamefont {Elliott}},\ and\ \bibinfo {author} {\bibfnamefont {M.}~\bibnamefont {Gu}},\ }\bibfield  {title} {\bibinfo {title} {Causal classification of spatiotemporal quantum correlations},\ }\href {https://doi.org/10.1103/PhysRevLett.133.110202} {\bibfield  {journal} {\bibinfo  {journal} {Phys. Rev. Lett.}\ }\textbf {\bibinfo {volume} {133}},\ \bibinfo {pages} {110202} (\bibinfo {year} {2024})}\BibitemShut {NoStop}%
\bibitem [{\citenamefont {Pisarczyk}\ \emph {et~al.}(2019)\citenamefont {Pisarczyk}, \citenamefont {Zhao}, \citenamefont {Ouyang}, \citenamefont {Vedral},\ and\ \citenamefont {Fitzsimons}}]{Pisarczyk_2019}%
  \BibitemOpen
  \bibfield  {author} {\bibinfo {author} {\bibfnamefont {R.}~\bibnamefont {Pisarczyk}}, \bibinfo {author} {\bibfnamefont {Z.}~\bibnamefont {Zhao}}, \bibinfo {author} {\bibfnamefont {Y.}~\bibnamefont {Ouyang}}, \bibinfo {author} {\bibfnamefont {V.}~\bibnamefont {Vedral}},\ and\ \bibinfo {author} {\bibfnamefont {J.~F.}\ \bibnamefont {Fitzsimons}},\ }\bibfield  {title} {\bibinfo {title} {Causal limit on quantum communication},\ }\bibfield  {journal} {\bibinfo  {journal} {Physical Review Letters}\ }\textbf {\bibinfo {volume} {123}},\ \href {https://doi.org/10.1103/physrevlett.123.150502} {10.1103/physrevlett.123.150502} (\bibinfo {year} {2019})\BibitemShut {NoStop}%
\bibitem [{\citenamefont {Marletto}\ \emph {et~al.}(2021)\citenamefont {Marletto}, \citenamefont {Vedral}, \citenamefont {Virzì}, \citenamefont {Avella}, \citenamefont {Piacentini}, \citenamefont {Gramegna}, \citenamefont {Degiovanni},\ and\ \citenamefont {Genovese}}]{Marletto_temporal_teleportation}%
  \BibitemOpen
  \bibfield  {author} {\bibinfo {author} {\bibfnamefont {C.}~\bibnamefont {Marletto}}, \bibinfo {author} {\bibfnamefont {V.}~\bibnamefont {Vedral}}, \bibinfo {author} {\bibfnamefont {S.}~\bibnamefont {Virzì}}, \bibinfo {author} {\bibfnamefont {A.}~\bibnamefont {Avella}}, \bibinfo {author} {\bibfnamefont {F.}~\bibnamefont {Piacentini}}, \bibinfo {author} {\bibfnamefont {M.}~\bibnamefont {Gramegna}}, \bibinfo {author} {\bibfnamefont {I.~P.}\ \bibnamefont {Degiovanni}},\ and\ \bibinfo {author} {\bibfnamefont {M.}~\bibnamefont {Genovese}},\ }\bibfield  {title} {\bibinfo {title} {Temporal teleportation with pseudo-density operators: How dynamics emerges from temporal entanglement},\ }\href {https://doi.org/10.1126/sciadv.abe4742} {\bibfield  {journal} {\bibinfo  {journal} {Science Advances}\ }\textbf {\bibinfo {volume} {7}},\ \bibinfo {pages} {eabe4742} (\bibinfo {year} {2021})},\ \Eprint {https://arxiv.org/abs/https://www.science.org/doi/pdf/10.1126/sciadv.abe4742}
  {https://www.science.org/doi/pdf/10.1126/sciadv.abe4742} \BibitemShut {NoStop}%
\bibitem [{\citenamefont {Liu}\ \emph {et~al.}(2024)\citenamefont {Liu}, \citenamefont {Chen},\ and\ \citenamefont {Dahlsten}}]{Liu_arrow_of_time}%
  \BibitemOpen
  \bibfield  {author} {\bibinfo {author} {\bibfnamefont {X.}~\bibnamefont {Liu}}, \bibinfo {author} {\bibfnamefont {Q.}~\bibnamefont {Chen}},\ and\ \bibinfo {author} {\bibfnamefont {O.}~\bibnamefont {Dahlsten}},\ }\bibfield  {title} {\bibinfo {title} {Inferring the arrow of time in quantum spatiotemporal correlations},\ }\href {https://doi.org/10.1103/PhysRevA.109.032219} {\bibfield  {journal} {\bibinfo  {journal} {Phys. Rev. A}\ }\textbf {\bibinfo {volume} {109}},\ \bibinfo {pages} {032219} (\bibinfo {year} {2024})}\BibitemShut {NoStop}%
\bibitem [{\citenamefont {Wu}\ \emph {et~al.}(2025)\citenamefont {Wu}, \citenamefont {Parzygnat}, \citenamefont {Vedral},\ and\ \citenamefont {Fullwood}}]{Wu_2025_mutualinformationintime}%
  \BibitemOpen
  \bibfield  {author} {\bibinfo {author} {\bibfnamefont {Z.}~\bibnamefont {Wu}}, \bibinfo {author} {\bibfnamefont {A.~J.}\ \bibnamefont {Parzygnat}}, \bibinfo {author} {\bibfnamefont {V.}~\bibnamefont {Vedral}},\ and\ \bibinfo {author} {\bibfnamefont {J.}~\bibnamefont {Fullwood}},\ }\bibfield  {title} {\bibinfo {title} {Quantum mutual information in time},\ }\href {https://doi.org/10.1088/1367-2630/adde7d} {\bibfield  {journal} {\bibinfo  {journal} {New Journal of Physics}\ }\textbf {\bibinfo {volume} {27}},\ \bibinfo {pages} {064504} (\bibinfo {year} {2025})}\BibitemShut {NoStop}%
\bibitem [{\citenamefont {Fullwood}\ and\ \citenamefont {Parzygnat}(2022)}]{Fullwood_2022}%
  \BibitemOpen
  \bibfield  {author} {\bibinfo {author} {\bibfnamefont {J.}~\bibnamefont {Fullwood}}\ and\ \bibinfo {author} {\bibfnamefont {A.~J.}\ \bibnamefont {Parzygnat}},\ }\bibfield  {title} {\bibinfo {title} {On quantum states over time},\ }\bibfield  {journal} {\bibinfo  {journal} {Proceedings of the Royal Society A: Mathematical, Physical and Engineering Sciences}\ }\textbf {\bibinfo {volume} {478}},\ \href {https://doi.org/10.1098/rspa.2022.0104} {10.1098/rspa.2022.0104} (\bibinfo {year} {2022})\BibitemShut {NoStop}%
\bibitem [{\citenamefont {Horsman}\ \emph {et~al.}(2017)\citenamefont {Horsman}, \citenamefont {Heunen}, \citenamefont {Pusey}, \citenamefont {Barrett},\ and\ \citenamefont {Spekkens}}]{Horsman_2017}%
  \BibitemOpen
  \bibfield  {author} {\bibinfo {author} {\bibfnamefont {D.}~\bibnamefont {Horsman}}, \bibinfo {author} {\bibfnamefont {C.}~\bibnamefont {Heunen}}, \bibinfo {author} {\bibfnamefont {M.~F.}\ \bibnamefont {Pusey}}, \bibinfo {author} {\bibfnamefont {J.}~\bibnamefont {Barrett}},\ and\ \bibinfo {author} {\bibfnamefont {R.~W.}\ \bibnamefont {Spekkens}},\ }\bibfield  {title} {\bibinfo {title} {Can a quantum state over time resemble a quantum state at a single time?},\ }\href {https://doi.org/10.1098/rspa.2017.0395} {\bibfield  {journal} {\bibinfo  {journal} {Proceedings of the Royal Society A: Mathematical, Physical and Engineering Sciences}\ }\textbf {\bibinfo {volume} {473}},\ \bibinfo {pages} {20170395} (\bibinfo {year} {2017})}\BibitemShut {NoStop}%
\bibitem [{\citenamefont {Lie}\ and\ \citenamefont {Ng}(2024)}]{quantumstatesovertimeisunique_PhysRevResearch.6.033144}%
  \BibitemOpen
  \bibfield  {author} {\bibinfo {author} {\bibfnamefont {S.~H.}\ \bibnamefont {Lie}}\ and\ \bibinfo {author} {\bibfnamefont {N.~H.~Y.}\ \bibnamefont {Ng}},\ }\bibfield  {title} {\bibinfo {title} {Quantum state over time is unique},\ }\href {https://doi.org/10.1103/PhysRevResearch.6.033144} {\bibfield  {journal} {\bibinfo  {journal} {Phys. Rev. Res.}\ }\textbf {\bibinfo {volume} {6}},\ \bibinfo {pages} {033144} (\bibinfo {year} {2024})}\BibitemShut {NoStop}%
\bibitem [{\citenamefont {Leifer}\ and\ \citenamefont {Spekkens}(2013)}]{Leifer_2013}%
  \BibitemOpen
  \bibfield  {author} {\bibinfo {author} {\bibfnamefont {M.~S.}\ \bibnamefont {Leifer}}\ and\ \bibinfo {author} {\bibfnamefont {R.~W.}\ \bibnamefont {Spekkens}},\ }\bibfield  {title} {\bibinfo {title} {Towards a formulation of quantum theory as a causally neutral theory of bayesian inference},\ }\bibfield  {journal} {\bibinfo  {journal} {Physical Review A}\ }\textbf {\bibinfo {volume} {88}},\ \href {https://doi.org/10.1103/physreva.88.052130} {10.1103/physreva.88.052130} (\bibinfo {year} {2013})\BibitemShut {NoStop}%
\bibitem [{\citenamefont {Modi}\ \emph {et~al.}(2012)\citenamefont {Modi}, \citenamefont {Brodutch}, \citenamefont {Cable}, \citenamefont {Paterek},\ and\ \citenamefont {Vedral}}]{Modi_RevModPhys.84.1655}%
  \BibitemOpen
  \bibfield  {author} {\bibinfo {author} {\bibfnamefont {K.}~\bibnamefont {Modi}}, \bibinfo {author} {\bibfnamefont {A.}~\bibnamefont {Brodutch}}, \bibinfo {author} {\bibfnamefont {H.}~\bibnamefont {Cable}}, \bibinfo {author} {\bibfnamefont {T.}~\bibnamefont {Paterek}},\ and\ \bibinfo {author} {\bibfnamefont {V.}~\bibnamefont {Vedral}},\ }\bibfield  {title} {\bibinfo {title} {The classical-quantum boundary for correlations: Discord and related measures},\ }\href {https://doi.org/10.1103/RevModPhys.84.1655} {\bibfield  {journal} {\bibinfo  {journal} {Rev. Mod. Phys.}\ }\textbf {\bibinfo {volume} {84}},\ \bibinfo {pages} {1655} (\bibinfo {year} {2012})}\BibitemShut {NoStop}%
\bibitem [{\citenamefont {Bera}\ \emph {et~al.}(2017)\citenamefont {Bera}, \citenamefont {Das}, \citenamefont {Sadhukhan}, \citenamefont {Singha~Roy}, \citenamefont {Sen(De)},\ and\ \citenamefont {Sen}}]{Bera_2018}%
  \BibitemOpen
  \bibfield  {author} {\bibinfo {author} {\bibfnamefont {A.}~\bibnamefont {Bera}}, \bibinfo {author} {\bibfnamefont {T.}~\bibnamefont {Das}}, \bibinfo {author} {\bibfnamefont {D.}~\bibnamefont {Sadhukhan}}, \bibinfo {author} {\bibfnamefont {S.}~\bibnamefont {Singha~Roy}}, \bibinfo {author} {\bibfnamefont {A.}~\bibnamefont {Sen(De)}},\ and\ \bibinfo {author} {\bibfnamefont {U.}~\bibnamefont {Sen}},\ }\bibfield  {title} {\bibinfo {title} {Quantum discord and its allies: a review of recent progress},\ }\href {https://doi.org/10.1088/1361-6633/aa872f} {\bibfield  {journal} {\bibinfo  {journal} {Reports on Progress in Physics}\ }\textbf {\bibinfo {volume} {81}},\ \bibinfo {pages} {024001} (\bibinfo {year} {2017})}\BibitemShut {NoStop}%
\bibitem [{\citenamefont {Liu}\ \emph {et~al.}(2025{\natexlab{a}})\citenamefont {Liu}, \citenamefont {Verma}, \citenamefont {Xiao}, \citenamefont {Dahlsten},\ and\ \citenamefont {Gu}}]{liu2025spatialincompatibilitywitnessesquantum}%
  \BibitemOpen
  \bibfield  {author} {\bibinfo {author} {\bibfnamefont {X.}~\bibnamefont {Liu}}, \bibinfo {author} {\bibfnamefont {H.}~\bibnamefont {Verma}}, \bibinfo {author} {\bibfnamefont {Y.}~\bibnamefont {Xiao}}, \bibinfo {author} {\bibfnamefont {O.}~\bibnamefont {Dahlsten}},\ and\ \bibinfo {author} {\bibfnamefont {M.}~\bibnamefont {Gu}},\ }\href {https://arxiv.org/abs/2511.01179} {\bibinfo {title} {Spatial incompatibility witnesses for quantum temporal correlations}} (\bibinfo {year} {2025}{\natexlab{a}}),\ \Eprint {https://arxiv.org/abs/2511.01179} {arXiv:2511.01179 [quant-ph]} \BibitemShut {NoStop}%
\bibitem [{\citenamefont {Ku}\ \emph {et~al.}(2018)\citenamefont {Ku}, \citenamefont {Chen}, \citenamefont {Lambert}, \citenamefont {Chen},\ and\ \citenamefont {Nori}}]{Ku_2018}%
  \BibitemOpen
  \bibfield  {author} {\bibinfo {author} {\bibfnamefont {H.-Y.}\ \bibnamefont {Ku}}, \bibinfo {author} {\bibfnamefont {S.-L.}\ \bibnamefont {Chen}}, \bibinfo {author} {\bibfnamefont {N.}~\bibnamefont {Lambert}}, \bibinfo {author} {\bibfnamefont {Y.-N.}\ \bibnamefont {Chen}},\ and\ \bibinfo {author} {\bibfnamefont {F.}~\bibnamefont {Nori}},\ }\bibfield  {title} {\bibinfo {title} {Hierarchy in temporal quantum correlations},\ }\bibfield  {journal} {\bibinfo  {journal} {Physical Review A}\ }\textbf {\bibinfo {volume} {98}},\ \href {https://doi.org/10.1103/physreva.98.022104} {10.1103/physreva.98.022104} (\bibinfo {year} {2018})\BibitemShut {NoStop}%
\bibitem [{\citenamefont {Chen}\ \emph {et~al.}(2014)\citenamefont {Chen}, \citenamefont {Li}, \citenamefont {Lambert}, \citenamefont {Chen}, \citenamefont {Ota}, \citenamefont {Chen},\ and\ \citenamefont {Nori}}]{temporalsteering_PhysRevA.89.032112}%
  \BibitemOpen
  \bibfield  {author} {\bibinfo {author} {\bibfnamefont {Y.-N.}\ \bibnamefont {Chen}}, \bibinfo {author} {\bibfnamefont {C.-M.}\ \bibnamefont {Li}}, \bibinfo {author} {\bibfnamefont {N.}~\bibnamefont {Lambert}}, \bibinfo {author} {\bibfnamefont {S.-L.}\ \bibnamefont {Chen}}, \bibinfo {author} {\bibfnamefont {Y.}~\bibnamefont {Ota}}, \bibinfo {author} {\bibfnamefont {G.-Y.}\ \bibnamefont {Chen}},\ and\ \bibinfo {author} {\bibfnamefont {F.}~\bibnamefont {Nori}},\ }\bibfield  {title} {\bibinfo {title} {Temporal steering inequality},\ }\href {https://doi.org/10.1103/PhysRevA.89.032112} {\bibfield  {journal} {\bibinfo  {journal} {Phys. Rev. A}\ }\textbf {\bibinfo {volume} {89}},\ \bibinfo {pages} {032112} (\bibinfo {year} {2014})}\BibitemShut {NoStop}%
\bibitem [{\citenamefont {Fullwood}\ \emph {et~al.}(2025)\citenamefont {Fullwood}, \citenamefont {Ma},\ and\ \citenamefont {Wu}}]{fullwood2025}%
  \BibitemOpen
  \bibfield  {author} {\bibinfo {author} {\bibfnamefont {J.}~\bibnamefont {Fullwood}}, \bibinfo {author} {\bibfnamefont {Z.}~\bibnamefont {Ma}},\ and\ \bibinfo {author} {\bibfnamefont {Z.}~\bibnamefont {Wu}},\ }\href {https://arxiv.org/abs/2507.16919} {\bibinfo {title} {The spatiotemporal born rule is quasiprobabilistic}} (\bibinfo {year} {2025}),\ \Eprint {https://arxiv.org/abs/2507.16919} {arXiv:2507.16919 [quant-ph]} \BibitemShut {NoStop}%
\bibitem [{\citenamefont {Margenau}\ and\ \citenamefont {Hill}(1961)}]{Margenau}%
  \BibitemOpen
  \bibfield  {author} {\bibinfo {author} {\bibfnamefont {H.}~\bibnamefont {Margenau}}\ and\ \bibinfo {author} {\bibfnamefont {R.~N.}\ \bibnamefont {Hill}},\ }\bibfield  {title} {\bibinfo {title} {Correlation between measurements in quantum theory:},\ }\href {https://doi.org/10.1143/PTP.26.722} {\bibfield  {journal} {\bibinfo  {journal} {Progress of Theoretical Physics}\ }\textbf {\bibinfo {volume} {26}},\ \bibinfo {pages} {722} (\bibinfo {year} {1961})},\ \Eprint {https://arxiv.org/abs/https://academic.oup.com/ptp/article-pdf/26/5/722/5454875/26-5-722.pdf} {https://academic.oup.com/ptp/article-pdf/26/5/722/5454875/26-5-722.pdf} \BibitemShut {NoStop}%
\bibitem [{\citenamefont {Milekhin}\ \emph {et~al.}(2025)\citenamefont {Milekhin}, \citenamefont {Adamska},\ and\ \citenamefont {Preskill}}]{milekhin2025observablecomputableentanglementtime}%
  \BibitemOpen
  \bibfield  {author} {\bibinfo {author} {\bibfnamefont {A.}~\bibnamefont {Milekhin}}, \bibinfo {author} {\bibfnamefont {Z.}~\bibnamefont {Adamska}},\ and\ \bibinfo {author} {\bibfnamefont {J.}~\bibnamefont {Preskill}},\ }\href {https://arxiv.org/abs/2502.12240} {\bibinfo {title} {Observable and computable entanglement in time}} (\bibinfo {year} {2025}),\ \Eprint {https://arxiv.org/abs/2502.12240} {arXiv:2502.12240 [quant-ph]} \BibitemShut {NoStop}%
\bibitem [{\citenamefont {Lerose}\ \emph {et~al.}(2021)\citenamefont {Lerose}, \citenamefont {Sonner},\ and\ \citenamefont {Abanin}}]{Lerose_2021}%
  \BibitemOpen
  \bibfield  {author} {\bibinfo {author} {\bibfnamefont {A.}~\bibnamefont {Lerose}}, \bibinfo {author} {\bibfnamefont {M.}~\bibnamefont {Sonner}},\ and\ \bibinfo {author} {\bibfnamefont {D.~A.}\ \bibnamefont {Abanin}},\ }\bibfield  {title} {\bibinfo {title} {Scaling of temporal entanglement in proximity to integrability},\ }\bibfield  {journal} {\bibinfo  {journal} {Physical Review B}\ }\textbf {\bibinfo {volume} {104}},\ \href {https://doi.org/10.1103/physrevb.104.035137} {10.1103/physrevb.104.035137} (\bibinfo {year} {2021})\BibitemShut {NoStop}%
\bibitem [{\citenamefont {Thoenniss}\ \emph {et~al.}(2025)\citenamefont {Thoenniss}, \citenamefont {Vilkoviskiy},\ and\ \citenamefont {Abanin}}]{Thoenniss_2025}%
  \BibitemOpen
  \bibfield  {author} {\bibinfo {author} {\bibfnamefont {J.}~\bibnamefont {Thoenniss}}, \bibinfo {author} {\bibfnamefont {I.}~\bibnamefont {Vilkoviskiy}},\ and\ \bibinfo {author} {\bibfnamefont {D.~A.}\ \bibnamefont {Abanin}},\ }\bibfield  {title} {\bibinfo {title} {Efficient pseudomode representation and complexity of quantum impurity models},\ }\bibfield  {journal} {\bibinfo  {journal} {Physical Review B}\ }\textbf {\bibinfo {volume} {112}},\ \href {https://doi.org/10.1103/h8g7-bmng} {10.1103/h8g7-bmng} (\bibinfo {year} {2025})\BibitemShut {NoStop}%
\bibitem [{\citenamefont {Luchnikov}\ \emph {et~al.}(2024)\citenamefont {Luchnikov}, \citenamefont {Sonner},\ and\ \citenamefont {Abanin}}]{luchnikov2024scalabletomographymanybodyquantum}%
  \BibitemOpen
  \bibfield  {author} {\bibinfo {author} {\bibfnamefont {I.~A.}\ \bibnamefont {Luchnikov}}, \bibinfo {author} {\bibfnamefont {M.}~\bibnamefont {Sonner}},\ and\ \bibinfo {author} {\bibfnamefont {D.~A.}\ \bibnamefont {Abanin}},\ }\href {https://arxiv.org/abs/2406.18458} {\bibinfo {title} {Scalable tomography of many-body quantum environments with low temporal entanglement}} (\bibinfo {year} {2024}),\ \Eprint {https://arxiv.org/abs/2406.18458} {arXiv:2406.18458 [quant-ph]} \BibitemShut {NoStop}%
\bibitem [{202(2021)}]{2021Sonner}%
  \BibitemOpen
  \href {https://doi.org/10.1016/j.aop.2021.168552} {\bibfield  {journal} {\bibinfo  {journal} {Annals of Physics}\ }\textbf {\bibinfo {volume} {431}},\ \bibinfo {pages} {168552} (\bibinfo {year} {2021})}\BibitemShut {NoStop}%
\bibitem [{\citenamefont {Liu}\ \emph {et~al.}(2025{\natexlab{b}})\citenamefont {Liu}, \citenamefont {Qiu}, \citenamefont {Dahlsten},\ and\ \citenamefont {Vedral}}]{Liu_2025}%
  \BibitemOpen
  \bibfield  {author} {\bibinfo {author} {\bibfnamefont {X.}~\bibnamefont {Liu}}, \bibinfo {author} {\bibfnamefont {Y.}~\bibnamefont {Qiu}}, \bibinfo {author} {\bibfnamefont {O.}~\bibnamefont {Dahlsten}},\ and\ \bibinfo {author} {\bibfnamefont {V.}~\bibnamefont {Vedral}},\ }\bibfield  {title} {\bibinfo {title} {Quantum causal inference with extremely light touch},\ }\bibfield  {journal} {\bibinfo  {journal} {npj Quantum Information}\ }\textbf {\bibinfo {volume} {11}},\ \href {https://doi.org/10.1038/s41534-024-00956-0} {10.1038/s41534-024-00956-0} (\bibinfo {year} {2025}{\natexlab{b}})\BibitemShut {NoStop}%
\bibitem [{\citenamefont {Fitzsimons}\ \emph {et~al.}(2013)\citenamefont {Fitzsimons}, \citenamefont {Jones},\ and\ \citenamefont {Vedral}}]{fitzsimons2013quantumcorrelationsimplycausation}%
  \BibitemOpen
  \bibfield  {author} {\bibinfo {author} {\bibfnamefont {J.}~\bibnamefont {Fitzsimons}}, \bibinfo {author} {\bibfnamefont {J.}~\bibnamefont {Jones}},\ and\ \bibinfo {author} {\bibfnamefont {V.}~\bibnamefont {Vedral}},\ }\href {https://arxiv.org/abs/1302.2731} {\bibinfo {title} {Quantum correlations which imply causation}} (\bibinfo {year} {2013}),\ \Eprint {https://arxiv.org/abs/1302.2731} {arXiv:1302.2731 [quant-ph]} \BibitemShut {NoStop}%
\bibitem [{\citenamefont {Wei}\ \emph {et~al.}(2003)\citenamefont {Wei}, \citenamefont {Nemoto}, \citenamefont {Goldbart}, \citenamefont {Kwiat}, \citenamefont {Munro},\ and\ \citenamefont {Verstraete}}]{Wei_2003}%
  \BibitemOpen
  \bibfield  {author} {\bibinfo {author} {\bibfnamefont {T.-C.}\ \bibnamefont {Wei}}, \bibinfo {author} {\bibfnamefont {K.}~\bibnamefont {Nemoto}}, \bibinfo {author} {\bibfnamefont {P.~M.}\ \bibnamefont {Goldbart}}, \bibinfo {author} {\bibfnamefont {P.~G.}\ \bibnamefont {Kwiat}}, \bibinfo {author} {\bibfnamefont {W.~J.}\ \bibnamefont {Munro}},\ and\ \bibinfo {author} {\bibfnamefont {F.}~\bibnamefont {Verstraete}},\ }\bibfield  {title} {\bibinfo {title} {Maximal entanglement versus entropy for mixed quantum states},\ }\bibfield  {journal} {\bibinfo  {journal} {Physical Review A}\ }\textbf {\bibinfo {volume} {67}},\ \href {https://doi.org/10.1103/physreva.67.022110} {10.1103/physreva.67.022110} (\bibinfo {year} {2003})\BibitemShut {NoStop}%
\bibitem [{\citenamefont {Vidal}(2000)}]{Vidal_2000}%
  \BibitemOpen
  \bibfield  {author} {\bibinfo {author} {\bibfnamefont {G.}~\bibnamefont {Vidal}},\ }\bibfield  {title} {\bibinfo {title} {Entanglement monotones},\ }\href {https://doi.org/10.1080/09500340008244048} {\bibfield  {journal} {\bibinfo  {journal} {Journal of Modern Optics}\ }\textbf {\bibinfo {volume} {47}},\ \bibinfo {pages} {355–376} (\bibinfo {year} {2000})}\BibitemShut {NoStop}%
\bibitem [{\citenamefont {Zhang}\ \emph {et~al.}(2020)\citenamefont {Zhang}, \citenamefont {Dahlsten},\ and\ \citenamefont {Vedral}}]{Zhang_2020}%
  \BibitemOpen
  \bibfield  {author} {\bibinfo {author} {\bibfnamefont {T.}~\bibnamefont {Zhang}}, \bibinfo {author} {\bibfnamefont {O.}~\bibnamefont {Dahlsten}},\ and\ \bibinfo {author} {\bibfnamefont {V.}~\bibnamefont {Vedral}},\ }\bibfield  {title} {\bibinfo {title} {Different instances of time as different quantum modes: quantum states across space-time for continuous variables},\ }\href {https://doi.org/10.1088/1367-2630/ab6b9f} {\bibfield  {journal} {\bibinfo  {journal} {New Journal of Physics}\ }\textbf {\bibinfo {volume} {22}},\ \bibinfo {pages} {023029} (\bibinfo {year} {2020})}\BibitemShut {NoStop}%
\bibitem [{\citenamefont {Clemente}\ and\ \citenamefont {Kofler}(2015)}]{Clemente_2015}%
  \BibitemOpen
  \bibfield  {author} {\bibinfo {author} {\bibfnamefont {L.}~\bibnamefont {Clemente}}\ and\ \bibinfo {author} {\bibfnamefont {J.}~\bibnamefont {Kofler}},\ }\bibfield  {title} {\bibinfo {title} {Necessary and sufficient conditions for macroscopic realism from quantum mechanics},\ }\bibfield  {journal} {\bibinfo  {journal} {Physical Review A}\ }\textbf {\bibinfo {volume} {91}},\ \href {https://doi.org/10.1103/physreva.91.062103} {10.1103/physreva.91.062103} (\bibinfo {year} {2015})\BibitemShut {NoStop}%
\bibitem [{\citenamefont {Clemente}\ and\ \citenamefont {Kofler}(2016)}]{Clemente2016}%
  \BibitemOpen
  \bibfield  {author} {\bibinfo {author} {\bibfnamefont {L.}~\bibnamefont {Clemente}}\ and\ \bibinfo {author} {\bibfnamefont {J.}~\bibnamefont {Kofler}},\ }\bibfield  {title} {\bibinfo {title} {No fine theorem for macrorealism: Limitations of the leggett-garg inequality},\ }\href {https://doi.org/10.1103/PhysRevLett.116.150401} {\bibfield  {journal} {\bibinfo  {journal} {Phys. Rev. Lett.}\ }\textbf {\bibinfo {volume} {116}},\ \bibinfo {pages} {150401} (\bibinfo {year} {2016})}\BibitemShut {NoStop}%
\bibitem [{\citenamefont {{Clauser}}\ \emph {et~al.}(1969)\citenamefont {{Clauser}}, \citenamefont {{Horne}}, \citenamefont {{Shimony}},\ and\ \citenamefont {{Holt}}}]{CHSH_1969PhRvL..23..880C}%
  \BibitemOpen
  \bibfield  {author} {\bibinfo {author} {\bibfnamefont {J.~F.}\ \bibnamefont {{Clauser}}}, \bibinfo {author} {\bibfnamefont {M.~A.}\ \bibnamefont {{Horne}}}, \bibinfo {author} {\bibfnamefont {A.}~\bibnamefont {{Shimony}}},\ and\ \bibinfo {author} {\bibfnamefont {R.~A.}\ \bibnamefont {{Holt}}},\ }\bibfield  {title} {\bibinfo {title} {{Proposed Experiment to Test Local Hidden-Variable Theories}},\ }\href {https://doi.org/10.1103/PhysRevLett.23.880} {\bibfield  {journal} {\bibinfo  {journal} {\prl}\ }\textbf {\bibinfo {volume} {23}},\ \bibinfo {pages} {880} (\bibinfo {year} {1969})}\BibitemShut {NoStop}%
\bibitem [{\citenamefont {Johansen}(2007)}]{Johansen_MH}%
  \BibitemOpen
  \bibfield  {author} {\bibinfo {author} {\bibfnamefont {L.~M.}\ \bibnamefont {Johansen}},\ }\bibfield  {title} {\bibinfo {title} {Quantum theory of successive projective measurements},\ }\href {https://doi.org/10.1103/PhysRevA.76.012119} {\bibfield  {journal} {\bibinfo  {journal} {Phys. Rev. A}\ }\textbf {\bibinfo {volume} {76}},\ \bibinfo {pages} {012119} (\bibinfo {year} {2007})}\BibitemShut {NoStop}%
\bibitem [{\citenamefont {Peres}(1996)}]{PPT_Peres_PhysRevLett.77.1413}%
  \BibitemOpen
  \bibfield  {author} {\bibinfo {author} {\bibfnamefont {A.}~\bibnamefont {Peres}},\ }\bibfield  {title} {\bibinfo {title} {Separability criterion for density matrices},\ }\href {https://doi.org/10.1103/PhysRevLett.77.1413} {\bibfield  {journal} {\bibinfo  {journal} {Phys. Rev. Lett.}\ }\textbf {\bibinfo {volume} {77}},\ \bibinfo {pages} {1413} (\bibinfo {year} {1996})}\BibitemShut {NoStop}%
\bibitem [{\citenamefont {Jia}\ \emph {et~al.}(2026)\citenamefont {Jia}, \citenamefont {Modi},\ and\ \citenamefont {Kaszlikowski}}]{jia2026temporalkirkwooddiracquasiprobabilitydistribution}%
  \BibitemOpen
  \bibfield  {author} {\bibinfo {author} {\bibfnamefont {Z.}~\bibnamefont {Jia}}, \bibinfo {author} {\bibfnamefont {K.}~\bibnamefont {Modi}},\ and\ \bibinfo {author} {\bibfnamefont {D.}~\bibnamefont {Kaszlikowski}},\ }\href {https://arxiv.org/abs/2601.05294} {\bibinfo {title} {Temporal kirkwood-dirac quasiprobability distribution and unification of temporal state formalisms through temporal bloch tomography}} (\bibinfo {year} {2026}),\ \Eprint {https://arxiv.org/abs/2601.05294} {arXiv:2601.05294 [quant-ph]} \BibitemShut {NoStop}%
\bibitem [{\citenamefont {Ollivier}\ and\ \citenamefont {Zurek}(2001)}]{Discord_Ollivier_PhysRevLett.88.017901}%
  \BibitemOpen
  \bibfield  {author} {\bibinfo {author} {\bibfnamefont {H.}~\bibnamefont {Ollivier}}\ and\ \bibinfo {author} {\bibfnamefont {W.~H.}\ \bibnamefont {Zurek}},\ }\bibfield  {title} {\bibinfo {title} {Quantum discord: A measure of the quantumness of correlations},\ }\href {https://doi.org/10.1103/PhysRevLett.88.017901} {\bibfield  {journal} {\bibinfo  {journal} {Phys. Rev. Lett.}\ }\textbf {\bibinfo {volume} {88}},\ \bibinfo {pages} {017901} (\bibinfo {year} {2001})}\BibitemShut {NoStop}%
\bibitem [{\citenamefont {Zhao}\ \emph {et~al.}(2018)\citenamefont {Zhao}, \citenamefont {Pisarczyk}, \citenamefont {Thompson}, \citenamefont {Gu}, \citenamefont {Vedral},\ and\ \citenamefont {Fitzsimons}}]{Zhao_Geometryofquantumcorrelations_PhysRevA.98.052312}%
  \BibitemOpen
  \bibfield  {author} {\bibinfo {author} {\bibfnamefont {Z.}~\bibnamefont {Zhao}}, \bibinfo {author} {\bibfnamefont {R.}~\bibnamefont {Pisarczyk}}, \bibinfo {author} {\bibfnamefont {J.}~\bibnamefont {Thompson}}, \bibinfo {author} {\bibfnamefont {M.}~\bibnamefont {Gu}}, \bibinfo {author} {\bibfnamefont {V.}~\bibnamefont {Vedral}},\ and\ \bibinfo {author} {\bibfnamefont {J.~F.}\ \bibnamefont {Fitzsimons}},\ }\bibfield  {title} {\bibinfo {title} {Geometry of quantum correlations in space-time},\ }\href {https://doi.org/10.1103/PhysRevA.98.052312} {\bibfield  {journal} {\bibinfo  {journal} {Phys. Rev. A}\ }\textbf {\bibinfo {volume} {98}},\ \bibinfo {pages} {052312} (\bibinfo {year} {2018})}\BibitemShut {NoStop}%
\bibitem [{\citenamefont {Fine}(1982)}]{Fine_PhysRevLett.48.291}%
  \BibitemOpen
  \bibfield  {author} {\bibinfo {author} {\bibfnamefont {A.}~\bibnamefont {Fine}},\ }\bibfield  {title} {\bibinfo {title} {Hidden variables, joint probability, and the bell inequalities},\ }\href {https://doi.org/10.1103/PhysRevLett.48.291} {\bibfield  {journal} {\bibinfo  {journal} {Phys. Rev. Lett.}\ }\textbf {\bibinfo {volume} {48}},\ \bibinfo {pages} {291} (\bibinfo {year} {1982})}\BibitemShut {NoStop}%
\bibitem [{\citenamefont {Stamatova}\ and\ \citenamefont {Vedral}(2025)}]{stamatova2025complexheatcapacitywitness}%
  \BibitemOpen
  \bibfield  {author} {\bibinfo {author} {\bibfnamefont {M.}~\bibnamefont {Stamatova}}\ and\ \bibinfo {author} {\bibfnamefont {V.}~\bibnamefont {Vedral}},\ }\href {https://arxiv.org/abs/2508.15728} {\bibinfo {title} {Complex heat capacity as a witness of spatio-temporal entanglement}} (\bibinfo {year} {2025}),\ \Eprint {https://arxiv.org/abs/2508.15728} {arXiv:2508.15728 [quant-ph]} \BibitemShut {NoStop}%
\bibitem [{\citenamefont {Horodecki}\ \emph {et~al.}(1995)\citenamefont {Horodecki}, \citenamefont {Horodecki},\ and\ \citenamefont {Horodecki}}]{Horodecki_criterion}%
  \BibitemOpen
  \bibfield  {author} {\bibinfo {author} {\bibfnamefont {R.}~\bibnamefont {Horodecki}}, \bibinfo {author} {\bibfnamefont {P.}~\bibnamefont {Horodecki}},\ and\ \bibinfo {author} {\bibfnamefont {M.}~\bibnamefont {Horodecki}},\ }\bibfield  {title} {\bibinfo {title} {Violating bell inequality by mixed spin-12 states: necessary and sufficient condition},\ }\href {https://doi.org/https://doi.org/10.1016/0375-9601(95)00214-N} {\bibfield  {journal} {\bibinfo  {journal} {Physics Letters A}\ }\textbf {\bibinfo {volume} {200}},\ \bibinfo {pages} {340} (\bibinfo {year} {1995})}\BibitemShut {NoStop}%
\bibitem [{\citenamefont {Fröwis}\ \emph {et~al.}(2018)\citenamefont {Fröwis}, \citenamefont {Sekatski}, \citenamefont {Dür}, \citenamefont {Gisin},\ and\ \citenamefont {Sangouard}}]{Frowis_2018}%
  \BibitemOpen
  \bibfield  {author} {\bibinfo {author} {\bibfnamefont {F.}~\bibnamefont {Fröwis}}, \bibinfo {author} {\bibfnamefont {P.}~\bibnamefont {Sekatski}}, \bibinfo {author} {\bibfnamefont {W.}~\bibnamefont {Dür}}, \bibinfo {author} {\bibfnamefont {N.}~\bibnamefont {Gisin}},\ and\ \bibinfo {author} {\bibfnamefont {N.}~\bibnamefont {Sangouard}},\ }\bibfield  {title} {\bibinfo {title} {Macroscopic quantum states: Measures, fragility, and implementations},\ }\bibfield  {journal} {\bibinfo  {journal} {Reviews of Modern Physics}\ }\textbf {\bibinfo {volume} {90}},\ \href {https://doi.org/10.1103/revmodphys.90.025004} {10.1103/revmodphys.90.025004} (\bibinfo {year} {2018})\BibitemShut {NoStop}%
\bibitem [{\citenamefont {Arvidsson-Shukur}\ \emph {et~al.}(2024)\citenamefont {Arvidsson-Shukur}, \citenamefont {Braasch~Jr}, \citenamefont {De~Bièvre}, \citenamefont {Dressel}, \citenamefont {Jordan}, \citenamefont {Langrenez}, \citenamefont {Lostaglio}, \citenamefont {Lundeen},\ and\ \citenamefont {Halpern}}]{Arvidsson_Shukur_2024}%
  \BibitemOpen
  \bibfield  {author} {\bibinfo {author} {\bibfnamefont {D.~R.~M.}\ \bibnamefont {Arvidsson-Shukur}}, \bibinfo {author} {\bibfnamefont {W.~F.}\ \bibnamefont {Braasch~Jr}}, \bibinfo {author} {\bibfnamefont {S.}~\bibnamefont {De~Bièvre}}, \bibinfo {author} {\bibfnamefont {J.}~\bibnamefont {Dressel}}, \bibinfo {author} {\bibfnamefont {A.~N.}\ \bibnamefont {Jordan}}, \bibinfo {author} {\bibfnamefont {C.}~\bibnamefont {Langrenez}}, \bibinfo {author} {\bibfnamefont {M.}~\bibnamefont {Lostaglio}}, \bibinfo {author} {\bibfnamefont {J.~S.}\ \bibnamefont {Lundeen}},\ and\ \bibinfo {author} {\bibfnamefont {N.~Y.}\ \bibnamefont {Halpern}},\ }\bibfield  {title} {\bibinfo {title} {Properties and applications of the kirkwood–dirac distribution},\ }\href {https://doi.org/10.1088/1367-2630/ada05d} {\bibfield  {journal} {\bibinfo  {journal} {New Journal of Physics}\ }\textbf {\bibinfo {volume} {26}},\ \bibinfo {pages} {121201} (\bibinfo {year} {2024})}\BibitemShut {NoStop}%
\bibitem [{\citenamefont {Lostaglio}\ \emph {et~al.}(2023)\citenamefont {Lostaglio}, \citenamefont {Belenchia}, \citenamefont {Levy}, \citenamefont {Hern{\'{a}}ndez-G{\'{o}}mez}, \citenamefont {Fabbri},\ and\ \citenamefont {Gherardini}}]{Lostaglio2023kirkwooddirac}%
  \BibitemOpen
  \bibfield  {author} {\bibinfo {author} {\bibfnamefont {M.}~\bibnamefont {Lostaglio}}, \bibinfo {author} {\bibfnamefont {A.}~\bibnamefont {Belenchia}}, \bibinfo {author} {\bibfnamefont {A.}~\bibnamefont {Levy}}, \bibinfo {author} {\bibfnamefont {S.}~\bibnamefont {Hern{\'{a}}ndez-G{\'{o}}mez}}, \bibinfo {author} {\bibfnamefont {N.}~\bibnamefont {Fabbri}},\ and\ \bibinfo {author} {\bibfnamefont {S.}~\bibnamefont {Gherardini}},\ }\bibfield  {title} {\bibinfo {title} {Kirkwood-{D}irac quasiprobability approach to the statistics of incompatible observables},\ }\href {https://doi.org/10.22331/q-2023-10-09-1128} {\bibfield  {journal} {\bibinfo  {journal} {{Quantum}}\ }\textbf {\bibinfo {volume} {7}},\ \bibinfo {pages} {1128} (\bibinfo {year} {2023})}\BibitemShut {NoStop}%
\bibitem [{\citenamefont {Jia}\ \emph {et~al.}(2023{\natexlab{a}})\citenamefont {Jia}, \citenamefont {Song},\ and\ \citenamefont {Kaszlikowski}}]{Jia_2023}%
  \BibitemOpen
  \bibfield  {author} {\bibinfo {author} {\bibfnamefont {Z.}~\bibnamefont {Jia}}, \bibinfo {author} {\bibfnamefont {M.}~\bibnamefont {Song}},\ and\ \bibinfo {author} {\bibfnamefont {D.}~\bibnamefont {Kaszlikowski}},\ }\bibfield  {title} {\bibinfo {title} {Quantum space-time marginal problem: global causal structure from local causal information},\ }\href {https://doi.org/10.1088/1367-2630/ad1416} {\bibfield  {journal} {\bibinfo  {journal} {New Journal of Physics}\ }\textbf {\bibinfo {volume} {25}},\ \bibinfo {pages} {123038} (\bibinfo {year} {2023}{\natexlab{a}})}\BibitemShut {NoStop}%
\bibitem [{\citenamefont {Jia}\ \emph {et~al.}(2023{\natexlab{b}})\citenamefont {Jia}, \citenamefont {Song},\ and\ \citenamefont {Kaszlikowski}}]{Space-time_marginal_problem_Jia_2023}%
  \BibitemOpen
  \bibfield  {author} {\bibinfo {author} {\bibfnamefont {Z.}~\bibnamefont {Jia}}, \bibinfo {author} {\bibfnamefont {M.}~\bibnamefont {Song}},\ and\ \bibinfo {author} {\bibfnamefont {D.}~\bibnamefont {Kaszlikowski}},\ }\bibfield  {title} {\bibinfo {title} {Quantum space-time marginal problem: global causal structure from local causal information},\ }\href {https://doi.org/10.1088/1367-2630/ad1416} {\bibfield  {journal} {\bibinfo  {journal} {New Journal of Physics}\ }\textbf {\bibinfo {volume} {25}},\ \bibinfo {pages} {123038} (\bibinfo {year} {2023}{\natexlab{b}})}\BibitemShut {NoStop}%
\bibitem [{\citenamefont {Choi}(1975)}]{CHOI1975285}%
  \BibitemOpen
  \bibfield  {author} {\bibinfo {author} {\bibfnamefont {M.-D.}\ \bibnamefont {Choi}},\ }\bibfield  {title} {\bibinfo {title} {Completely positive linear maps on complex matrices},\ }\href {https://doi.org/https://doi.org/10.1016/0024-3795(75)90075-0} {\bibfield  {journal} {\bibinfo  {journal} {Linear Algebra and its Applications}\ }\textbf {\bibinfo {volume} {10}},\ \bibinfo {pages} {285} (\bibinfo {year} {1975})}\BibitemShut {NoStop}%
\bibitem [{\citenamefont {{Jamio{\l}kowski}}(1972)}]{Jamiolkovski1972R}%
  \BibitemOpen
  \bibfield  {author} {\bibinfo {author} {\bibfnamefont {A.}~\bibnamefont {{Jamio{\l}kowski}}},\ }\bibfield  {title} {\bibinfo {title} {{Linear transformations which preserve trace and positive semidefiniteness of operators}},\ }\href {https://doi.org/10.1016/0034-4877(72)90011-0} {\bibfield  {journal} {\bibinfo  {journal} {Reports on Mathematical Physics}\ }\textbf {\bibinfo {volume} {3}},\ \bibinfo {pages} {275} (\bibinfo {year} {1972})}\BibitemShut {NoStop}%
\bibitem [{\citenamefont {Jiang}\ \emph {et~al.}(2013)\citenamefont {Jiang}, \citenamefont {Luo},\ and\ \citenamefont {Fu}}]{CJ_Jiang}%
  \BibitemOpen
  \bibfield  {author} {\bibinfo {author} {\bibfnamefont {M.}~\bibnamefont {Jiang}}, \bibinfo {author} {\bibfnamefont {S.}~\bibnamefont {Luo}},\ and\ \bibinfo {author} {\bibfnamefont {S.}~\bibnamefont {Fu}},\ }\bibfield  {title} {\bibinfo {title} {Channel-state duality},\ }\href {https://doi.org/10.1103/PhysRevA.87.022310} {\bibfield  {journal} {\bibinfo  {journal} {Phys. Rev. A}\ }\textbf {\bibinfo {volume} {87}},\ \bibinfo {pages} {022310} (\bibinfo {year} {2013})}\BibitemShut {NoStop}%
\end{thebibliography}%

\appendix
\onecolumngrid

\section{The Choi-Jamiolkovski matrix for a CPTP map $\mathcal{E}$} \label{CJ_Appendix}
The Choi-Jamiolkovski matrix~\cite{CHOI1975285,Jamiolkovski1972R} to represent a CPTP map $\mathcal{E}:\mathcal{D}(\mathcal{H}^1) \to \mathcal{D}(\mathcal{H}^2)$ (where $\mathcal{H}^1$ and $\mathcal{H}^2$ have same dimension) is commonly defined as~\cite{CJ_Jiang}
\begin{equation}
    M_\mathcal{E} = \mathcal{I}_R \otimes \mathcal{E}_2 \left( \ket{\Phi}\bra{\Phi} \right) = \sum_{ij} \ket{i}\bra{j}_R \otimes \mathcal{E}(\ket{i}\bra{j})_2,
\end{equation}
where the index $R$ means that the operator is acting in an auxiliary Hilbert space $\mathcal{H}_R$ with the same dimension as $\mathcal{H}^2$, the index 2 means that the operator is acting in the Hilbert space $\mathcal{H}^2$, $\mathcal{I}$ is the identity channel, and $\ket{\Phi} = \sum_{ii} \ket{ii}$ is the unnormalized maximally entangled state where $\{\ket{i}\}_i$ is a basis on its respective Hilbert space. With the above definition, we have the channel-state duality~\cite{CHOI1975285,Jamiolkovski1972R,CJ_Jiang}, that is, we can use this matrix to compute the action of the channel $\mathcal{E}$ in any quantum state $\sigma \in \mathcal{D}(\mathcal{H}^1)$
\begin{equation}
    \mathcal{E}(\sigma) = \text{Tr}_R \left[\sigma^\top_R \otimes \mathbb{I}_2  M_\mathcal{E} \right].
\end{equation}

Here, we use a slightly different, but equivalent, choice of the Choi-Jamiolkowski matrix to follow the same choice and structure as in Refs.~\cite{Liu_2025,Zhang_2020,fullwood2025}. We define the Choi-Jamiolkowski matrix of a channel $\mathcal{E}$ as 
\begin{equation}
     M_\mathcal{E}= \sum_{ij} (\ket{i}\bra{j})^\top_R \otimes \mathcal{E}(\ket{i}\bra{j})_2, \label{CJ_definition}
\end{equation}
which results in the following channel-state duality
\begin{equation}
    \mathcal{E}(\sigma) = \text{Tr}_R \left[\sigma_R \otimes \mathbb{I}_2  M_\mathcal{E} \right]. \label{CJ_channel-state duality}
\end{equation}

\section{Proofs of Subsection~\ref{3_time_steps_section}}

\subsection{Proof of Theorem~\ref{theorem_R123_Q123}} \label{proof_of_theorem_R123_Q123_sec}
\begin{proof}
    To prove this theorem, we proceed similarly to the proof of Theorem 1 in Ref.~\cite{fullwood2025}.  To show the equality, we use Eq.~\eqref{R_m_steps} to obtain
    \begin{align}
       & \text{Tr} [R_{012} \Pi_i^0 \otimes \Pi_j^1 \otimes \Pi_k^2 ] \nonumber \\
      & = \frac{1}{2}\text{Tr}_{123}[(R_{01}\otimes \mathbb{I}_2 M_{12}+M_{12}R_{01}\otimes \mathbb{I}_2) \Pi_i^0 \otimes \Pi_j^1 \otimes \Pi_k^2 ] \nonumber \\
      & = \frac{1}{2}\text{Tr}_{13}[\text{Tr}_2[(R_{01} M_{12} \otimes \mathbb{I}_2 +M_{12}R_{01} \otimes \mathbb{I}_2)\Pi_j^1] \Pi_i^0 \otimes \Pi_k^2 ] \nonumber \\
      & = \frac{1}{2}\text{Tr}_{13}[\text{Tr}_2[ M_{12}(\Pi_j^1R_{01}+R_{01}\Pi_j^1)\otimes \mathbb{I}_2] \Pi_i^0 \otimes \Pi_k^2 ] \nonumber \\
      & = \frac{1}{2}\text{Tr}_{13}[\mathcal{E}_2 (\{ R_{01},\Pi_j^1 \}) \Pi_i^0 \otimes \Pi_k^2 ] \nonumber \\
      & = \frac{1}{4}\text{Tr}_{13}[\mathcal{E}_2 (\{ \rho \otimes \mathbb{I}_1  M_{01} + M_{01} \rho \otimes \mathbb{I}_1 ,\Pi_j^1 \}) \Pi_i^0 \otimes \Pi_k^2 ] \nonumber \\
      & = \frac{1}{4}\text{Tr}_3[\mathcal{E}_2 (\{ \text{Tr}_1[ (\rho M_{01} \otimes \mathbb{I}_1 + M_{01} \rho \otimes \mathbb{I}_1 )  \Pi_i^0 ],\Pi_j^1\}) \Pi_k^2 ] \nonumber \\
      & = \frac{1}{4}\text{Tr}_3[\mathcal{E}_2 (\{ \text{Tr}_1[ M_{01} (\Pi_i^0 \rho + \rho  \Pi_i^0 ) \otimes \mathbb{I}_1],\Pi_j^1  \}) \Pi_k^2 ] \nonumber \\
      & = \frac{1}{4}\text{Tr}_3[\mathcal{E}_2 (\{ \mathcal{E}_1 ( \{ \rho, \Pi_i^0 \} ),\Pi_j^1  \}) \Pi_k^2 ] \nonumber \\
      & = \mathbf{Q}_{012}(i,j,k) \nonumber
    \end{align}
    where $M_{01}$ is the CJ matrix of the channel $\mathcal{E}_1$, and $M_{12}$ is the CJ matrix of the channel $\mathcal{E}_2$; in the second and seventh equalities we used the cyclic property of the trace, in the fifth equality we used Eq.~\eqref{R_2_steps}, and in the forth and eighth equalities we used the CJ channel-state duality (Eq.~\eqref{CJ_channel-state duality}) of the respective CPTP maps. 

    To show that $R_{012}$ is the unique operator satisfying Eq.~\eqref{Q_definition_3steps}, suppose that exists some operator $\mathcal{O}$ satisfying $\text{Tr} [\mathcal{O} \Pi_i^0 \otimes \Pi_j^1 \otimes \Pi_k^2] = \text{Tr} [R_{012} \Pi_i^0 \otimes \Pi_j^1 \otimes \Pi_k^2] $, for all possible projectors $\{\Pi_i^0, \Pi_j^1, \Pi_k^2 \}$ . Then we have  $\text{Tr} [(\mathcal{O} - R_{012}) \Pi_i^0 \otimes \Pi_j^1 \otimes \Pi_k^2] = 0 $, which imples that $\mathcal{O} = R_{012}$ due to the non-degenerecy of the Hilbert-Schmidt inner product.
\end{proof}

\subsection{Proof of Lemma~\ref{lemma_quasiprob_3steps}} \label{proof_lemma_quasiprob_3steps_sec}

\begin{proof}
    For proving Eq.~\eqref{Q_P_D_relation_3points}, it is useful to notice that, for any operator $\mathcal{O}$, we have 
    \begin{align}
         \mathcal{O}_{\overline{i}} = \mathcal{O} - \{ \mathcal{O}, \Pi_i^0 \} + 2\Pi_i^0 \mathcal{O} \Pi_i^0, ~~ \text{and} ~~ \mathcal{O}_{\overline{j}} = \mathcal{O} - \{ \mathcal{O}, \Pi_j^1 \} + 2\Pi_j^1 \mathcal{O} \Pi_j^1. \nonumber
    \end{align}
    Therefore 
    \begin{align}
         \{ \mathcal{O}, \Pi_i^0 \} = - \mathcal{O}_{\overline{i}} + \mathcal{O} + 2\Pi_i^0 \mathcal{O} \Pi_i^0, ~~ \text{and} ~~
        \{ \mathcal{O}, \Pi_j^1 \} = - \mathcal{O}_{\overline{j}} + \mathcal{O}+ 2\Pi_j^1 \mathcal{O} \Pi_j^1. \label{lemma2_anticom}
    \end{align}
    Using this and Eqs.~\eqref{Q_definition_3steps} and~\eqref{Ludders_nV_3points}, we compute 
    \begin{align}
        & \mathbf{Q}_{012}(i,j,k) - \mathbf{P}_{012}(i,j,k) = \frac{1}{4} \text{Tr} \left[ \mathcal{E}_2 \left( \{ \mathcal{E}_1 \left( \{ \rho, \Pi_i^0 \} \right), \Pi_j^1 \} - 4 \Pi_j^1 \mathcal{E}_1 (\Pi_i^0 \rho \Pi_i^0) \Pi_j^1 \right) \Pi_k^2 \right] \nonumber \\
        & = \frac{1}{4} \text{Tr} \Big[ \mathcal{E}_2 \Big( \{ \mathcal{E}_1 \left(  -\rho_{\overline{i}} + \rho + 2\Pi_i^0 \rho \Pi_i^0 \right), \Pi_j^1 \} - 4 \Pi_j^1 \mathcal{E}_1 (\Pi_i^0 \rho \Pi_i^0) \Pi_j^1 \Big) \Pi_k^2 \Big] \nonumber \\
        & = \frac{1}{4} \text{Tr} \Big[ \mathcal{E}_2 \Big( -\{ \mathcal{E}_1 \left(  \rho_{\overline{i}} \right), \Pi_j^1 \} + \{ \mathcal{E}_1 \left(\rho \right) , \Pi_j^1 \} + 2 \{ \mathcal{E}_1 \left(\Pi_i^0 \rho \Pi_i^0\right) , \Pi_j^1 \} - 4 \Pi_j^1 \mathcal{E}_1 (\Pi_i^0 \rho \Pi_i^0) \Pi_j^1 \Big) \Pi_k^2 \Big] \nonumber \\
        & = \frac{1}{4} \text{Tr} \Big[ \mathcal{E}_2 \Big(\mathcal{E}_1 (\rho_{\overline{i}})_{\overline{j}} - \mathcal{E}_1 (\rho_{\overline{i}}) - 2 \Pi_j^1 \mathcal{E}_1 (\rho_{\overline{i}}) \Pi_j^1 - \mathcal{E}_1 (\rho)_{\overline{j}}  + \mathcal{E}_1 (\rho) +  2 \Pi_j^1 \mathcal{E}_1 (\rho) \Pi_j^1 - 2 \mathcal{E}_1(\Pi_i^0 \rho \Pi_i^0)_{\overline{j}} \nonumber \\
        & + 2 \mathcal{E}_1(\Pi_i^0 \rho \Pi_i^0) + 4\Pi_j^1 \mathcal{E}_1(\Pi_i^0 \rho \Pi_i^0) \Pi_j^1 - 4 \Pi_j^1 \mathcal{E}_1 (\Pi_i^0 \rho \Pi_i^0) \Pi_j^1 \Big) \Pi_k^2 \Big] \nonumber \\
        & = \frac{1}{4} \text{Tr} \Big[\mathcal{E}_2 (\mathcal{E}_1 (\rho_{\overline{i}})_{\overline{j}} - \mathcal{E}_1 (\rho_{\overline{i}})) \Pi_k^2\Big] + \frac{1}{2} \text{Tr} \Big[\mathcal{E}_2(\Pi_j^1 \mathcal{E}_1 (\rho) \Pi_j^1 - \Pi_j^1 \mathcal{E}_1 (\rho_{\overline{i}}) \Pi_j^1) \Pi_k^2\Big]  +   \frac{1}{4} \text{Tr}\Big[\mathcal{E}_2(\mathcal{E}_1 (\rho)- \mathcal{E}_1 (\rho)_{\overline{j}}) \Pi_k^2 \Big]   \nonumber \\
        & + \frac{1}{2} \text{Tr} \Big[\mathcal{E}_2(\mathcal{E}_1(\Pi_i^0 \rho \Pi_i^0) - \mathcal{E}_1(\Pi_i^0 \rho \Pi_i^0)_{\overline{j}}) \Pi_k^2 \Big] \nonumber \\
        & = \mathbf{D}_{\overline{0}\overline{1}2,\overline{1}2}(i,j,k) + \mathbf{D}_{12,\overline{0}12}(i,j,k)  + \mathbf{D}_{2,\overline{1}2}(i,j,k)  + \mathbf{D}_{02,0\overline{1}2}(i,j,k), \nonumber
    \end{align}
where in the second and fourth equalities, we used Eqs.~\eqref{lemma2_anticom}, and in the fifth equality, we just rearranged the terms.
\end{proof}

\subsection{AoT and NSIT for three-time steps and proof of Theorem~\ref{non-signaling_D_theorem_3steps}} \label{proof_no-signaling_D_lemma_3steps_sec}

For completeness and further use, we present the explicit form of the Arrow of Time (AoT) condition~\cite{Clemente_2015,Kofler_2013,A_J_Leggett_2002,Clemente2016}, which is always ensured by the postulates of quantum mechanics and CPTP evolutions, and the explicit form of the No-Signaling in Time (NSIT)~\cite{Kofler_2013,Clemente_2015,Clemente2016}, for the case of three-time steps presented in Subsection~\ref{3_time_steps_section}. Suppose that there are three observable quantities $Q^0, Q^1$, and $Q^2$ to be measured at times $t_0,~t_1$, and $t_2$, respectively. If the experiment is chosen to be made with only one measurement at an instant of time $t_\alpha$, then $P_\alpha (Q^\alpha_i)$ means the probability of obtaining an outcome $Q^\alpha_i$ for the observable $Q^\alpha$. If the experiment is chosen to be made with one measurement at an instant of time $t_\alpha$ and another measurement at an instant $t_\beta$, then $P_{\alpha \beta}(Q^\alpha_i,Q^\beta_j)$ means the probability of obtaining the sequential outcomes $Q^\alpha_i$ and $Q^\beta_j$ for the observables $Q^\alpha$ and $Q^\beta$, respectively. Likewise, if the experiment is chosen to make measurements at the three instants of time, then $P_{012}(Q^0_i,Q^1_j,Q^2_k)$ means the probability of obtaining the outcomes $Q^0_i,~Q^1_j$, and $Q^2_k$ at instants $t_0,~t_1$, and $t_2$, respectively. 

In this scenario, the AoT condition is given by the following equations
\begin{align}
    P_{01}(Q^0_i,Q^1_j) & = \sum_k P_{012}(Q^0_i,Q^1_j,Q^2_k),  \nonumber \\
    P_1(Q^1_j) & = \sum_k P_{12}(Q^1_j,Q^2_k),  \nonumber \\
    P_0(Q^0_i) & = \sum_j P_{01}(Q^0_i,Q^1_j),  \nonumber \\
    P_0(Q^0_i) & = \sum_k P_{02}(Q^0_i,Q^2_k).  \nonumber
\end{align}
If we relate projective the measurements $\{ \Pi^0_i \}_i,~\{ \Pi^1_j \}_j$, and $\{ \Pi^2_k \}_k$, to the eigenvectors $\{ \ket{Q^0_i} \}_i$, $\{ \ket{Q^1_j} \}_j$, and $\{ \ket{Q^2_k} \}_k$ of the observables $Q^0$, $Q^1$, and $Q^2$, respectively; the AoT conditions above will translate into
\begin{align}
    \mathbf{P}_{01}(i,j) & = \sum_k \mathbf{P}_{012}(i,j,k),  \label{AOT_1} \tag{AoT. 01$\overline{2}$} \\
    \mathbf{P}_1(j) & = \sum_k \mathbf{P}_{12}(j,k),  \label{AOT_2} \tag{AoT. 1$\overline{2}$} \\
     \mathbf{P}_0(i) & = \sum_j  \mathbf{P}_{01}(i,j),  \label{AOT_3} \tag{AoT. 0$\overline{1}$} \\
     \mathbf{P}_0(i) & = \sum_k  \mathbf{P}_{02}(i,k).  \label{AOT_4} \tag{AoT. 0$\overline{2}$}
\end{align}
Here we use the notation, already adopted in the main text, where $ \mathbf{P}_\alpha(i)$ means the probability, given by the Born rule, of obtaining the outcome $i$ for the respective projective measurement in the system at time $t_\alpha$; $\mathbf{P}_{\alpha\beta}(i,j)$ means the joint probability, given by the Lüdders-von Neumann rule (for example, Eq~\eqref{Ludders_vN}), to obtain the outcomes $i$ and $j$ for the respective projective measurements at times $t_\alpha$ and $t_\beta$; and $\mathbf{P}_{012}(i,j,k)$ means the joint probability, given by the Lüdders-von Neumann rule at three-time steps (Eq.~\eqref{Ludders_nV_3points}), to obtain the outcomes $i$, $j$ and $k$ for the respctive projective measurements at times $t_0$, $t_1$ and $t_2$.

We now describe the NSIT condition for this scenario. They obtain the following form~\cite{Clemente_2015}
\begin{align}
    \mathbf{P}_{12}(j,k) & = \sum_i \mathbf{P}_{012}(i,j,k),  \label{NSIT3_1} \tag{NSIT. $\overline{0}$12} \\
    \mathbf{P}_{02}(i,k) & = \sum_j \mathbf{P}_{012}(i,j,k),  \label{NSIT3_2} \tag{NSIT. 0$\overline{1}$2} \\
    \mathbf{P}_2(k) & = \sum_j \mathbf{P}_{12}(j,k),  \label{NSIT3_3} \tag{NSIT. $\overline{1}$2} 
\end{align}
Two other NSIT conditions are a consequence of the previous one, plus AoT. First,
\begin{equation}
    \mathbf{P}_2(k)  = \sum_i \mathbf{P}_{02}(i,k),  \label{NSIT3_4} \tag{NSIT. $\overline{0}$2} 
\end{equation}
which is a consequence of conditions~\eqref{NSIT3_1},~\eqref{NSIT3_2} and~\eqref{NSIT3_3}. Also,
\begin{equation}
    \mathbf{P}_1(j) = \sum_i \mathbf{P}_{01}(i,j), \tag{NSIT. $\overline{0}$1}
\end{equation}
which is a direct consequence of conditions $\eqref{NSIT3_1}$,~\eqref{AOT_1}, and~\eqref{AOT_2}.

    With explicit formulas for the AoT and NSIT conditions for three-time steps, we are now able to prove the following lemma.

 \begin{lemma} \label{non-signaling_D_lemma_3steps}
    For the case of three-time steps, with the choice of projective measurements $ \{ \Pi_i^0 \}_i,~\{ \Pi_j^1 \}_j$ and $\{\Pi_k^2\}_k$, at instants $t_0,t_1$ and $t_2$, respectively, the NSIT condition (together with the AoT condition) is valid if and only if 
     \begin{align}
         \mathbf{D}_{02,0\overline{1}2} (i,j,k) & = \mathbf{D}_{2,\overline{1}2} (i,j,k)  = \mathbf{D}_{12,\overline{0}12} (i,j,k) = 0, \label{non-signaling_D_lemma_3steps_eq}
     \end{align}
    for any $(i,j,k)$.
 \end{lemma}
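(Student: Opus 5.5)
The plan is to treat each of the three disturbance terms in Eq.~\eqref{non-signaling_D_lemma_3steps_eq} separately, exactly as in the proof of Theorem~\ref{theorem_NSIT_D}. Each term should turn out to be one half (or one quarter) of the difference between the two sides of one of the independent NSIT conditions \eqref{NSIT3_1}, \eqref{NSIT3_2}, \eqref{NSIT3_3}. The AoT conditions \eqref{AOT_1}--\eqref{AOT_4} need no separate argument. They hold automatically for Lüders--von Neumann statistics with CPTP evolutions, because $\sum_k \Pi^2_k=\mathbb{I}$ and the channels are trace preserving. So the statement reduces to showing that vanishing of the three terms is equivalent to the three independent NSIT conditions. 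The two remaining conditions, (NSIT.~$\overline{0}$2) and (NSIT.~$\overline{0}$1), then follow from these together with AoT, as already explained just before the lemma.

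The key steps, in order, are as follows.

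First, for $\mathbf{D}_{2,\overline{1}2}$ (Eq.~\eqref{D_3_23}):
\begin{itemize}
\item The term $\text{Tr}[\mathcal{E}_2(\mathcal{E}_1(\rho))\Pi^2_k]$ is $\mathbf{P}_2(k)$.
\item Repeating the manipulation of Eq.~\eqref{theorem_NSIT_D_step1} at time $t_1$, with $\mathcal{E}_1(\rho)$ as the input state, identifies $\text{Tr}[\mathcal{E}_2(\mathcal{E}_1(\rho)_{\overline{j}})\Pi^2_k]$ with $\sum_j \mathbf{P}_{12}(j,k)$.
\item Hence $\mathbf{D}_{2,\overline{1}2}=0$ for all $(i,j,k)$ iff \eqref{NSIT3_3} holds.
\end{itemize}

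Second, for $\mathbf{D}_{02,0\overline{1}2}$ (Eq.~\eqref{D_13}), set $X_i:=\mathcal{E}_1(\Pi^0_i\rho\Pi^0_i)$:
\begin{itemize}
\item $\text{Tr}[\mathcal{E}_2(X_i)\Pi^2_k]=\mathbf{P}_{02}(i,k)$.
\item $\text{Tr}[\mathcal{E}_2((X_i)_{\overline{j}})\Pi^2_k]$ is identified with $\sum_j \text{Tr}[\Pi^2_k\mathcal{E}_2(\Pi^1_j X_i \Pi^1_j)]=\sum_j\mathbf{P}_{012}(i,j,k)$.
\item This gives the equivalence with \eqref{NSIT3_2}.
\end{itemize}

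Third, for $\mathbf{D}_{12,\overline{0}12}$ (Eq.~\eqref{D_23}):
\begin{itemize}
\item $\text{Tr}[\mathcal{E}_2(\Pi^1_j\mathcal{E}_1(\rho)\Pi^1_j)\Pi^2_k]=\mathbf{P}_{12}(j,k)$.
\item Using $\rho_{\overline{i}}$ in place of $\sum_i\Pi^0_i\rho\Pi^0_i$ as in Eq.~\eqref{theorem_NSIT_D_step1}, together with linearity of $\mathcal{E}_1$, gives $\text{Tr}[\mathcal{E}_2(\Pi^1_j\mathcal{E}_1(\rho_{\overline{i}})\Pi^1_j)\Pi^2_k]=\sum_i\mathbf{P}_{012}(i,j,k)$.
\item This gives the equivalence with \eqref{NSIT3_1}.
\end{itemize}

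In each case one of the indices is a dummy ($i$ in the first step, $i$ in the third step after the identification, and so on). Therefore "for all $(i,j,k)$" on the $\mathbf{D}$ side matches the quantifiers of the corresponding NSIT equation.

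The main obstacle is the identification $\sum_j \Pi^1_j X \Pi^1_j = X_{\overline{j}}$, and its analogue for $\Pi^0_i$. It holds exactly when the measurement has two outcomes, so that $\mathbb{I}-\Pi_j=\Pi_{j'}$ for the other outcome $j'$. That is the dichotomic setting relevant to Leggett--Garg tests, and it is the same step used in Theorem~\ref{theorem_NSIT_D}. I would state this restriction explicitly and proceed with it.

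For more outcomes, I would first try to rewrite the conditions in terms of the coarse-grained two-outcome measurement $\{\Pi_j,\mathbb{I}-\Pi_j\}$ for each fixed $j$. I would then check whether vanishing for every $j$ still yields the full NSIT condition. This is the step I am least sure closes without additional assumptions.
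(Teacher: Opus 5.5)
Your proposal is correct and is essentially the paper's proof: each of $\mathbf{D}_{12,\overline{0}12}$, $\mathbf{D}_{02,0\overline{1}2}$, $\mathbf{D}_{2,\overline{1}2}$ is matched term by term with one of (NSIT.~$\overline{0}$12), (NSIT.~0$\overline{1}$2), (NSIT.~$\overline{1}$2) via $\sum_i\Pi^0_i\rho\Pi^0_i=\rho_{\overline{i}}$ (and its $t_1$ analogue), with AoT taken as automatic. The paper uses that identity without comment, so your two-outcome restriction is genuinely needed: in general $\sum_j\bigl(X-X_{\overline{j}}\bigr)=2\bigl(X-\sum_j\Pi^1_jX\Pi^1_j\bigr)$ (and likewise for $i$), so vanishing of the subterms always implies NSIT, but the converse fails once a measurement has three or more outcomes (already for two times: a qutrit in $\ket{\psi}=(1,1,1)/\sqrt{3}$ with trivial evolution, measured first in the computational basis and then with $\{\ket{\phi}\bra{\phi},\mathbb{I}-\ket{\phi}\bra{\phi}\}$, $\ket{\phi}=(2,2,-1)/3$, satisfies NSIT although $\mathbf{D}_{01}(1,1)=2/27$).
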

 \begin{proof}[]
     We must prove that the validity of the three NSIT conditions (Eqs.~\eqref{NSIT3_1},~\eqref{NSIT3_2}, and~\eqref{NSIT3_3}) is equivalent to Eq.~\eqref{non-signaling_D_theorem_3steps_eq}. For this, notice that the first NSIT condition (Eq.~\eqref{NSIT3_1}) is equivalent to
     \begin{align}
       \text{Tr} [\mathcal{E}_2(\Pi_j^1\mathcal{E}_1(\rho)\Pi_j^1)\Pi_k^2] & = \sum_i \text{Tr} [\mathcal{E}_2(\Pi_j^1\mathcal{E}_1(\Pi_i^0 \rho \Pi_i^0)\Pi_j^1)\Pi_k^2] \nonumber \\
        \Leftrightarrow  \text{Tr} [\mathcal{E}_2(\Pi_j^1\mathcal{E}_1(\rho)\Pi_j^1)\Pi_k^2] & =  \text{Tr} [\mathcal{E}_2(\Pi_j^1\mathcal{E}_1( \sum_i \Pi_i^0 \rho \Pi_i^0)\Pi_j^1)\Pi_k^2] \nonumber \\
        \Leftrightarrow  \text{Tr} [\mathcal{E}_2(\Pi_j^1\mathcal{E}_1(\rho)\Pi_j^1)\Pi_k^2] & = \text{Tr} [\mathcal{E}_2(\Pi_j^1\mathcal{E}_1(\rho_{\overline{i}})\Pi_j^1)\Pi_k^2], \label{nsit_D_proof_step1}
     \end{align}
     where in the last step we used that $\sum_i \Pi_i^0 \rho \Pi_i^0 =\Pi_i^0 \rho \Pi_i^0 +(\mathbb{I}_0-\Pi_i^0)\rho(\mathbb{I}_0-\Pi_i^0) = \rho_{\overline{i}} $; this passage also gives an interpretation to the `over line' notation, meaning that the measurement respective to the index with the over line was averaged. Eq.~\eqref{nsit_D_proof_step1} is equivalent to $\mathbf{D}_{12,\overline{0}12} (i,j,k) = 0$ (see Eq.~\eqref{D_23}), therefore the condition of Eq.~\eqref{NSIT3_1} is equivalent to $\mathbf{D}_{12,\overline{0}12} (i,j,k) = 0$.

     In a completely analogous way, we can show that Eq.~\eqref{NSIT3_2} is equivalent to $\mathbf{D}_{02,0\overline{1}2} (i,j,k) = 0$ (see Eq.~\eqref{D_3_23}), and that Eq.~\eqref{NSIT3_3} is equivalent to $\mathbf{D}_{2,\overline{1}2} (i,j,k) = 0$ (see Eq.~\eqref{D_3_13}).
 \end{proof}
 
 Notice that Eq.~\eqref{non-signaling_D_theorem_3steps_eq} is a condition that does not involve the term $\mathbf{D}_{\overline{01}2,\overline{0}2} (i,j,k)$. This is because condition $\mathbf{D}_{02,0\overline{1}2} (i,j,k) = 0$ already implies $\mathbf{D}_{\overline{01}2,\overline{0}2} (i,j,k) =0$, since $\mathbf{D}_{\overline{01}2,\overline{0}2} = -\frac{1}{2} \sum_i \mathbf{D}_{02,0\overline{1}2} (i,j,k)$. Importantly, if Eq.~\eqref{non-signaling_D_theorem_3steps_eq} is satisfied, then we must have $\mathbf{D}_{012}(i,j,k) = 0$. With this lemma, we can prove Theorem~\ref{non-signaling_D_theorem_3steps}.

 \begin{proof}[\textbf{Proof of Theorem 5}]
    Notice that, from Lemma~\ref{non-signaling_D_lemma_3steps}, if NIST is satisfied, than clearly $\mathbf{D}_{012}(i,j,k)=0, \forall ~i,j,k$. Now, we must prove that if $\mathbf{D}_{012}(i,j,k)=0, \forall ~i,j,k$, then all subterms in Eq.~\eqref{non-signaling_D_lemma_3steps_eq} are 0, implying the validity of NSIT. 
    
    In fact, suppose 
    \begin{equation}
        \mathbf{D}_{012}(i,j,k)=0, \forall ~i,j,k. \label{proof_thm_3steps_1}
    \end{equation}
    From Eq.~\eqref{Q_P_D_relation_3points}, we obtain
    \begin{align}
       & \sum_k \mathbf{Q}_{012}(i,j,k) = \sum_k\mathbf{P}_{012}(i,j,k)  \nonumber \\
       &  \Rightarrow \mathbf{Q}_{01}(i,j) = \mathbf{P}_{01}(i,j) \nonumber  \\
       &  \Rightarrow \mathbf{P}_{01}(i,j) + \mathbf{D}_{01}(i,j) = \mathbf{P}_{01}(i,j) \Rightarrow \mathbf{D}_{01}(i,j)=0.  \nonumber
    \end{align}
    Where in the second equality above, we used the marginalization of the quasiprobability distribution and the AoT condition of Eq.~\eqref{AOT_1}, and in the third equality, we used Eq.~\eqref{Q_P_D_relation}. Now, $\mathbf{D}_{01}(i,j)=0$ implies (see Eq.~\eqref{D_definition})
    \begin{align}
       & \text{Tr}[\Pi_j^1 \left( \mathcal{E}_1(\rho) - \mathcal{E}_1(\rho_{\bar{i}}) \right)] = 0  \nonumber \\
        & \Rightarrow \mathcal{E}_1(\rho) = \mathcal{E}_1(\rho_{\bar{i}}), \label{proof_thm_3steps_2}
    \end{align}
where the implication holds since the trace of the first equality is valid for any projector $\Pi_j^1$. Using the equation above in Eq.~\eqref{D_23}, we obtain
\begin{align}
    \mathbf{D}_{12,\overline{0}12} (i,j,k) = \frac{1}{2} \text{Tr} [\mathcal{E}_2 \left( \Pi_j^1 \mathcal{E}_1(\rho_{\overline{i}})\Pi_j^1 - \Pi_j^1\mathcal{E}_1(\rho_{\overline{i}})\Pi_j^1 \right)\Pi_k^2] =0,. \label{proof_thm_3steps_3}
\end{align}
from which we obtain the vanishing of the first subterm. Using again Eq.~\eqref{proof_thm_3steps_2}, and summing over the first outcome, we obtain
\begin{align}
    & \sum_i \mathbf{Q}_{012}(i,j,k) = \sum_i\mathbf{P}_{012}(i,j,k)  \nonumber \\
       &  \Rightarrow \mathbf{Q}_{12}(j,k) = \mathbf{P}_{12}(j,k)  \nonumber \\
       &  \Rightarrow \mathbf{P}_{12}(j,k) + \mathbf{D}_{12}(j,k) = \mathbf{P}_{12}(j,k) \Rightarrow \mathbf{D}_{12}(j,k)=0, \nonumber
\end{align}
where in the second equality, we used the marginalization of the quasiprobability, and we used Eq.~\eqref{proof_thm_3steps_2} which implies $\sum_i\mathbf{P}_{012}(i,j,k) = \mathbf{P}_{12}(j,k)$. From $\mathbf{D}_{12}(j,k)=0$ and Eq.~\eqref{D_definition} (recalling that, with the absence of measurement in $t_0$, the density matrix in $t_1$ is $\mathcal{E}_1(\rho)$), we obtain $\text{Tr}[\Pi_k^2\mathcal{E}_2(\mathcal{E}_1(\rho))] = [\Pi_k^2\mathcal{E}_2(\mathcal{E}_1(\rho)_{\bar{j}})]$; using this in Eq.~\eqref{D_3_23}, we obtain
\begin{equation}
    \mathbf{D}_{2,\overline{1}2} (i,j,k)  = \frac{1}{4} \text{Tr} [\mathcal{E}_2 \left(\mathcal{E}_1(\rho)_{\overline{j}}-\mathcal{E}_1(\rho)_{\overline{j}} \right)\Pi_k^2] = 0, \label{proof_thm_3steps_4}
\end{equation}
from which we obtain the vanishing of the second subterm. Furthermore, notice that, from Eq.~\eqref{proof_thm_3steps_1}, we have
\begin{align}
    \mathbf{D}_{\overline{01}2,\overline{0}2} (i,j,k) & = \frac{1}{4} \text{Tr} [\mathcal{E}_2 \left(\mathcal{E}_1(\rho_{\overline{i}})_{\overline{j}}-\mathcal{E}_1(\rho_{\overline{i}}) \right)\Pi_k^2] = \frac{1}{4} \text{Tr} [\mathcal{E}_2 \left(\mathcal{E}_1(\rho)_{\overline{j}}-\mathcal{E}_1(\rho) \right)\Pi_k^2] \nonumber \\
    & = - \mathbf{D}_{2,\overline{1}2} (i,j,k) = 0, 
\end{align}
were in the last equality we used Eq.~\eqref{proof_thm_3steps_4}. Using the equation above, we write $\mathbf{D}_{012}(i,j,k)$ described as the sum of the subterms:
\begin{align}
    & \mathbf{D}_{012}(i,j,k) = \mathbf{D}_{02,0\overline{1}2} (i,j,k) + \mathbf{D}_{2,\overline{1}2} (i,j,k) + \mathbf{D}_{\overline{01}2,\overline{0}2} (i,j,k) + \mathbf{D}_{12,\overline{0}12} (i,j,k) \nonumber \\
    & \Rightarrow 0 = \mathbf{D}_{02,0\overline{1}2} (i,j,k), 
\end{align}
where in the last step we used the fact that all other terms vanishes. From this, we complete the proof that $\mathbf{D}_{12,\overline{0}12} (i,j,k) = \mathbf{D}_{2,\overline{1}2} (i,j,k) = \mathbf{D}_{02,0\overline{1}2} (i,j,k) = 0$.

 \end{proof}

 \subsection{Proof of Lemma~\ref{lemma_R_positivity_MR_3steps}}
\begin{proof}
The proof of this lemma is analogous to the proof of Lemma~\ref{lemma_R_positivity_MR}. We must prove that, if $R_{012}$ is positive semi-definite, we can construct a joint probability distribution $P_{012}(Q_i^0,Q_j^1,Q_k^2)$ for any set of observables $Q^0,~ Q^1$, and $ Q^2$ whose marginals give the correct probability distribution for the outcomes of the observables. In fact, consider the quasiprobability $\mathbf{Q}_{012}(i,j,k)$ (Eq.~\eqref{theorem_R123_Q123_equation}); for any projective measurement choice $\{ \Pi_i^0 \}_i$, $\{ \Pi_j^1 \}_j$, and $\{ \Pi_k^2 \}_k$ we have 
\begin{align}
    P_{012} (Q_i^0, Q_j^1, Q_k^2) : = \text{Tr}[R_{012} \Pi_i^0 \otimes \Pi_j^1 \otimes \Pi_k^2] \geq 0,
\end{align}
since $R_{012}$ is positive semi-definite. $P_{012} (Q_i^0, Q_j^1, Q_k^2)$ can be seen as a valid probability distribution, since $\sum_{Q_i^0,Q_j^1,Q_k^2} P_{012} (Q_i^0, Q_j^1, Q_k^2) =1$. Moreover, we can construct the following probability distributions from the partial traces $R_{01}$, $R_{02}$, and $R_{12}$ of $R_{012}$ (obtained using Eq.~\eqref{partial_trace_density_matrix})
\begin{align}
    P_{01} (Q_i^0, Q_j^1) & : = \text{Tr}[R_{01} \Pi_i^0 \otimes \Pi_j^1], \nonumber \\
    P_{02} (Q_i^0, Q_k^2) & : = \text{Tr}[R_{02} \Pi_i^0 \otimes \Pi_k^2], \nonumber \\
    P_{12} (Q_j^1, Q_k^2) & : = \text{Tr}[R_{12} \Pi_j^1 \otimes \Pi_k^2], \nonumber
\end{align}
which are probability distributions since the partial traces of the positive semi-definite matrix $R_{012}$ are also positive semi-definite. These joint probability distributions are equivalent to the Lüders von-Neumman probability distributions (Eq.~\eqref{Ludders_vN}) for the probability of outcomes for a sequence of two projective measurements of observables $Q^\alpha=\sum_l Q_l^\alpha \Pi_l^\alpha$, and $Q^\beta=\sum_l Q_l^\beta \Pi_l^\beta$ that act at times $t_\alpha$ and $t_\beta$, and therefore, are the correct quantum prediction for a sequential measurement. This can be seen from the fact that, given the PDM $R_{\alpha \beta}$ for times $t_\alpha$ and $t_\beta$, all statistical moments $\left\langle \left( Q^\alpha \right)^m \otimes \left( Q^\beta \right)^m \right\rangle$ are equal to $\text{Tr}[\left( Q^\alpha \right)^m \otimes \left( Q^\beta \right)^m R_{\alpha \beta}]$, for any $m \in \mathbb{N}_0$ (see Eq.~\eqref{average_operators_R}).

We must show that the joint probability distributions for two-time steps are the marginalization of $P_{012} (Q_i^0, Q_j^1, Q_k^2)$. In fact, notice that
\begin{align}
    & \sum_{Q_i^0}P_{012} (Q_i^0, Q_j^1, Q_k^2)  = \sum_{Q_i^0} \text{Tr}[R_{012} \Pi_i^0 \otimes \Pi_j^1 \otimes \Pi_k^2] \nonumber \\
   & = \text{Tr}\left[R_{012}\sum_{Q_i^0} (\Pi_i^0) \otimes \Pi_j^1 \otimes \Pi_k^2\right] = \text{Tr}\left[R_{012} \mathbb{I}_0 \otimes \Pi_j^1 \otimes \Pi_k^2\right] \nonumber \\
   & =  \text{Tr}_{1,2}\left[\text{Tr}_0[R_{012}] \Pi_j^1 \otimes \Pi_k^2\right] = \text{Tr}_{1,2}\left[R_{12} \Pi_j^1 \otimes \Pi_k^2\right] \nonumber \\
   & = P_{12}(Q_j^1, Q_k^2),
\end{align}
where in the third equality we used the completeness relation, and in the fifth equality, we used Eq.~\eqref{partial_trace_density_matrix}. In a completely analogous manner, we can show that $\sum_{Q_j^1}P_{012} (Q_i^0, Q_j^1, Q_k^2) = P_{02} (Q_i^0, Q_k^2)$, and $\sum_{Q_k^2}P_{012} (Q_i^0, Q_j^1, Q_k^2) = P_{01} (Q_i^0, Q_j^1)$. Using the same arguments as in the proof of Theorem~\ref{theorem_time_entanglement_R}, we can show that $\sum_\alpha P_{\alpha \beta} (Q^\alpha_i, Q^\beta_j) = P_\beta (Q^\beta_j)$ and $\sum_\beta P_{\alpha \beta} (Q^\alpha_i, Q^\beta_j) = P_\alpha (Q^\alpha_i)$ for any $\alpha~\in \{0,1,2 \}$, $\beta~\in \{1,2 \}$, with $\beta > \alpha$, and any possible outcomes $i$ and $j$; these probability distributions in one time step coincide with the Born's rule.
\end{proof}

\section{Additional details of Section~\ref{sec_types_of_temporal_nonclasscic}}
\label{Appendix_details_temporal_nonclassic}
\subsection{Proof of Theorem~\ref{theorem_time_entanglement_CHSH}}
\begin{proof}
    We show that if the PDM $R_{01}$ for two-time steps is separable, then it satisfies the temporal CHSH inequality. In fact, if $R_{01}$ is separable, then
    \begin{eqnarray}
        R_{01} = \sum_k \mu_k \ket{\psi_k^0}\bra{\psi_k^0} \otimes \ket{\psi_k^1}\bra{\psi_k^1},
    \end{eqnarray}
    where, for any $k$, $\ket{\psi_k^0} \in \mathcal{H}^0$ and $\ket{\psi_k^1} \in \mathcal{H}^1$, and due to Theorem~\ref{theorem_time_entanglement_R}, we have that $\mu_k \geq 0,~\forall k$ with $\sum_k\mu_k = 1$. According to Eq.~\eqref{average_operators_R}, we can obtain the correlation functions from $R_{01}$ by using $\langle A_i B_j \rangle = \text{Tr} [R_{01} A_i \otimes B_j]$. Using this in the sum of correlations in the CHSH -- l.h.s. of~\eqref{time_CHSH} -- we obtain
    \begin{align}
       \mathcal{B} & = | \langle A_1 B_1 \rangle + \langle A_1 B_2 \rangle + \langle A_2 B_1 \rangle - \langle A_2 B_2 \rangle | \nonumber \\
       & =|\text{Tr} [R_{01} ( A_1 \otimes B_1 +A_1 \otimes B_2 + A_2 \otimes B_1 - A_2 \otimes B_2 )]| \nonumber \\
       & = \Bigg| \text{Tr} \Bigg[\left( \sum_k \mu_k \ket{\psi_k^0}\bra{\psi_k^0} \otimes \ket{\psi_k^1}\bra{\psi_k^1} \right) ( A_1 \otimes B_1 +A_1 \otimes B_2 \nonumber \\
       &~~~~~~ + A_2 \otimes B_1 - A_2 \otimes B_2 ) \Bigg] \Bigg| \nonumber \\
       & = \Bigg|\sum_k \mu_k \Big( \bra{\psi_k^0} A_1 \ket{\psi_k^0} \bra{\psi_k^1} B_1 \ket{\psi_k^1} + \bra{\psi_k^0} A_1 \ket{\psi_k^0} \bra{\psi_k^1} B_2 \ket{\psi_k^1} \nonumber \\
       & ~~~~~~ + \bra{\psi_k^0} A_2 \ket{\psi_k^0} \bra{\psi_k^1} B_1 \ket{\psi_k^1} - \bra{\psi_k^0} A_2 \ket{\psi_k^0} \bra{\psi_k^1} B_2 \ket{\psi_k^1} \Big) \Bigg| \nonumber \\
       & \leq \sum_k \mu_k \Bigg| \bra{\psi_k^0} A_1 \ket{\psi_k^0} \bra{\psi_k^1} B_1 \ket{\psi_k^1} + \bra{\psi_k^0} A_1 \ket{\psi_k^0} \bra{\psi_k^1} B_2 \ket{\psi_k^1} \nonumber \\
       & ~~~~~~ + \bra{\psi_k^0} A_2 \ket{\psi_k^0} \bra{\psi_k^1} B_1 \ket{\psi_k^1} - \bra{\psi_k^0} A_2 \ket{\psi_k^0} \bra{\psi_k^1} B_2 \ket{\psi_k^1} \Bigg| \nonumber \\
       & \leq \max_{\ket{\psi^0}\in\mathcal{H}^0,~\ket{\psi^1}\in\mathcal{H}^1}\Bigg| \bra{\psi^0} A_1 \ket{\psi^0} \bra{\psi^1} B_1 \ket{\psi^1}  \nonumber \\
       & ~~~~~~ + \bra{\psi^0} A_1 \ket{\psi^0} \bra{\psi^1} B_2 \ket{\psi^1} + \bra{\psi^0} A_2 \ket{\psi^0} \bra{\psi^1} B_1 \ket{\psi^1} \nonumber \\
       & ~~~~~~ - \bra{\psi^0} A_2 \ket{\psi^0} \bra{\psi^1} B_2 \ket{\psi^1} \Bigg|, \nonumber
    \end{align}
     where in the first inequality we used the triangle inequality, and in the last inequality we used that a convex sum of real numbers is less than the maximum of those real numbers. Since the last term is just a sum of averages that lies between -1 and 1, we treat them as real numbers in this interval, naming $a_{1(2)} =  \bra{\psi^0} A_{1(2)} \ket{\psi^0} $ and $b_{1(2)} =  \bra{\psi^1} B_{1(2)} \ket{\psi^1} $, we obtain
     \begin{align}
         \mathcal{B} & \leq \max_{a_1,a_2,b_1,b_2 \in [-1,1]} | a_1(b_1 +b_2) +a_2(b_1-b_2)| \nonumber \\
         & \leq \max_{a_1,a_2,b_1,b_2 \in [-1,1]} | a_1| |b_1 +b_2| + |a_2| |b_1-b_2|  \nonumber \\
         & \leq \max_{b_1,b_2 \in [-1,1]}  |b_1+b_2| +  |b_1-b_2| \leq 2,
     \end{align}
     where in the second inequality above we used the triangle inequality, and in the last inequality we used a well-known algebraic result.
\end{proof}

\subsection{Violation of MR without violation of temporal CHSH} \label{subsection_violationMR}
In a two-time steps scenario. A simple example of a violation of MR without the violation of temporal CHSH can be obtained when the initial state of the system is the pure state $\rho = \ket
{0} \bra{0}$ and it evolves under the depolarizing channel described in Eq.~\eqref{depolariaing_channel}. The PDM for this case is obtained using Eq.~\eqref{R_2_steps}, resulting in
\begin{equation}
    R_{01} = \begin{pmatrix}
        1 - \frac{\eta}{2} & 0 & 0 & 0 \\
        0 & \frac{\eta}{2} & \frac{1-\eta}{2} & 0 \\
        0 & \frac{1-\eta}{2} & 0 & 0 \\
        0 & 0 & 0 & 0
    \end{pmatrix}. \label{PDM_two_steps_last_example}
\end{equation}
Using Eq.~\eqref{negativity_def}, we obtain 
\begin{align}
    f(R_{01}) = \frac{1}{4} \left( \sqrt{4+6 \eta^2 -2\eta(4+\sqrt{4+\eta(-8+5\eta)})} + \sqrt{4+6 \eta^2 -2\eta(4-\sqrt{4+\eta(-8+5\eta)})} -2 \eta \right). \label{f_last_2steps_example}
\end{align}
Which implies that $R_{01}$ never vanishes for $\eta \in [0,1)$ (see Fig.~\ref{plot_CHSH_NSIT_example}). Moreover, the value of $\mathcal{B}_{\text{max}}$ for this case is identical to the case of Eq.~\eqref{depolarizing_PDM}. That is, using Eq.~\eqref{optimal_B_CHSH} for the PDM $R_01$ above, we obtain $\mathcal{B}_{\text{max}} = 2\sqrt{2} (1-\eta)$ (more generally, one can show that for an arbitrary initial qubit state evolving under the depolarizing channel of Eq.~\eqref{depolariaing_channel}, the value $\mathcal{B}_{\text{max}}$ gives this same function of $\eta$; that is, it is independent of the initial state, analogously to what happens in the case of unitary evolution). Finally, using Eqs.~\eqref{D_definition} and ~\eqref{NSIT_quantifier} for the respective initial state and channel, and choosing the measurements at time $t_0$ and $t_1$ to be the projective measurements of the observables $A_0 = \vec{\sigma}\cdot \vec{r}_0$ and $A_1 = \vec{\sigma}\cdot \vec{r}_1$,
where $r_0 = (\cos \phi \sin\theta_0,\sin \phi \sin\theta_0,\cos\theta_0)$ and $r_1 = (\cos \phi \sin\theta_1,\sin \phi \sin\theta_1,\cos\theta_1)$ we obtain
\begin{equation}
    \mathcal{N}_{01} = (1-\eta) \sin \theta_1 \sin (\theta_2 - \theta_1).
\end{equation}
This shows that MR is satisfied only if $\eta =1$ (full depolarizing), or $A_0 = \sigma_z$ (that is, the initial state is an eigenvector of the first observable), or $\theta_2 = \theta_1+n\pi, ~ n\in \mathbb{Z}$ (the operators $A_0$ and $A_1$ are the same or parallel in the Bloch sphere). Therefore, for most measurement choices and $\eta \in [1-\sqrt{2}/2,1)$, we have a region where MR is violated while the temporal CHSH is satisfied, additionally, the PDM is always negative, and therefore temporally entangled, except for the full depolarizing case (see Fig.~\ref{plot_CHSH_NSIT_example} for a comparison for $\theta_1 = \pi/2$ and $\theta_2 = \pi$). Note that we are assuming that the operators $A_0$ and $A_1$ have the same direction in the $x,y$-plane; if this is not assumed, the conditions for $\mathcal{N}_{01} = 0$ become even more restrictive.

\begin{figure}
    \centering
    \includegraphics[width=0.6\linewidth]{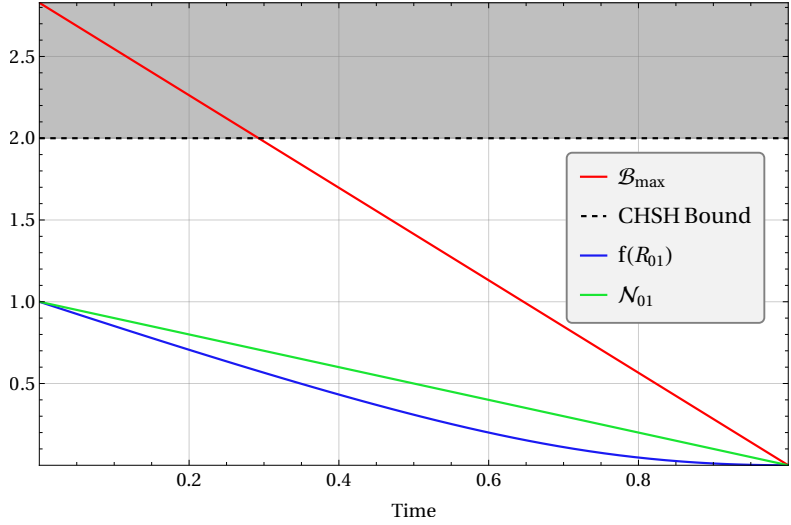}
    \caption{$\mathcal{B}_{\text{max}}$, temporal CHSH bound (the gray shaded area represents the region were temporal CHSH is violated), negativity of the PDM $R_{01}$, and the absolute violation of NSIT $\mathcal{N}_{01}$, for the case of initial pure state $\ket{0} \bra{0}$ and depolarizing channel, with measurements chosen with angles $\theta_1 = \pi/2$ and $\theta_2 = \pi$.}
    \label{plot_CHSH_NSIT_example}
\end{figure}

As a final remark, notice that all temporal nonclassicality witnesses vanish for $\eta = 1$. This occurs because the full depolarizing channel replaces any initial state for the maximally mixed states, resulting in no correlations (classical or quantum) between the initial and final systems. Consequently, the PDM must be separable, as we can clearly see if we consider $\eta = 1$ in Eq.~\eqref{PDM_two_steps_last_example}:
\begin{equation}
    R_{01}= \frac{1}{2} \begin{pmatrix}
        1  & 0 & 0 & 0 \\
        0 & 1 & 0 & 0 \\
        0 & 0 & 0 & 0 \\
        0 & 0 & 0 & 0
    \end{pmatrix} = ( \ket{0}\bra{0} ) \otimes \left( \frac{\mathbb{I}}{2} \right),
\end{equation}
that is, it is the tensor product of the initial state $\ket{0}\bra{0}$ at the Hilbert space of time $t_0$ with the final state, which is maximally mixed, at the Hilbert space of time $t_1$.

\end{document}